\newtheorem{defi}{Definition}
\newtheorem{thm}{Theorem}
\newtheorem{rem}{Remark}
\newtheorem{lem}{Lemma}
\def\tp{\mathrm{T}}
\def\Ds{\displaystyle}
\begin{document}

\title{Control Analysis and Synthesis of Data-Driven Learning: A Kalman State-Space Approach}

\author{Deyuan Meng, {\it Senior Member}, {\it IEEE}
\thanks{This work was supported by the National Natural Science Foundation of China under Grant 61873013 and Grant 61922007.}
\thanks{The author is with the Seventh Research Division, Beihang University (BUAA), Beijing 100191, P. R. China, and also with the School of Automation Science and Electrical Engineering, Beihang University (BUAA), Beijing 100191, P. R. China (e-mail: dymeng@buaa.edu.cn).}
}

\date{}
\maketitle

\begin{abstract}
This paper aims to deal with the control analysis and synthesis problem of data-driven learning, regardless of unknown plant models and iteration-varying uncertainties. For the tracking of any desired target, a Kalman state-space approach is presented to transform it into two robust stability problems, which bridges a connection between data-driven control and model-based control. This approach also makes it possible to employ the extended state observer (ESO) in the design of data-driven learning to overcome the effect of iteration-varying uncertainties. It is shown that ESO-based data-driven learning ensures model-free systems to achieve the tracking of any desired target. In particular, our results apply to iterative learning control, which is verified by an example.
\end{abstract}

\begin{IEEEkeywords}
Data-driven learning, extended state observer, iterative learning control, model-free system, robust stability.
\end{IEEEkeywords}

\section{Introduction}\label{sec1}

\IEEEPARstart{D}{ata-driven} control has been considered an active area that plays a more and more important role in the modern engineering applications of, e.g., industrial process, power grid network, and transportation system. Different from traditional model-based control, data-driven control directly leverages the data information collected from the controlled systems, instead of resorting to the model information from them, to achieve the design for controllers and the analysis of system performances, such as stability (or convergence) and robustness \cite{hw:13}. It avoids building high-precision physical models of practical systems or depending upon complex physical models that are not tractable in designing controllers. As a consequence, the classical design and analysis methods of feedback controllers established in the Kalman state-space framework may no longer work effectively in the presence of data-driven controllers.

One of the desirable properties of the data-driven controllers is the learning ability that helps to refine high-precision system performances. Given any desired target, data-driven learning is focused on bettering controllers gradually via learning with the data information fed back from previous iterations (executions, trials, processes), by which it is thus able to realize the tracking of the desired target, regardless of unknown uncertainties from the plant model or external environment. This class of learning processes operates directly on the input and output data and, by contrast, requires less or no model knowledge. Consequently, it admits the possibility of constructing controllers of data-driven learning in an iterative manner to implement the high-precision tracking tasks and, simultaneously, to bypass the procedure of identifying accurate system models.

Of particular interest is data-driven learning concerned with repetitive tasks for systems operating over a fixed time interval. Every task repetition implemented for all time steps is referred to as an iteration, which yields the so-called ``iterative learning control (ILC)'' \cite{bta:06,acm:07}. By incorporating only input and output data information saved at previous iterations, ILC is capable of updating the control input at the current iteration to achieve the perfect output tracking of the desired target (or trajectory) over the fixed time interval with the evolution of iterations. Because repetition is the nature of the simple strategy for learning from the experience data through iterative updating, ILC is generally easy to implement and widely applies to many practical cases. Further, one of the main advantages of ILC is its natural ability to fully overcome the effect of repetitive (or iteration-invariant) uncertainties on the prescribed tracking tasks, regardless of the uncertainties in the plant model or from the environment owing to, e.g., external disturbances/noises and initial shifts \cite{mm:17}.

There have been proposed promising approaches to realizing data-driven learning, especially through combination with ILC. In \cite{chhj:15,clzhh:17,bhyy:20,mz:19}, an optimization-based adaptive design is introduced for the updating of learning laws, and a dynamical linearization tool is given for the estimation of unknown system parameters, with which a data-driven learning framework can be developed based on the input and output data, together with the estimated parameters. This class of optimization-based adaptive methods makes data-driven learning effectively applicable for nonlinear systems subject to unknown plant models. In \cite{ct:17}, a progressive updating strategy is presented for the design of learning filters, and thus an additional design degree of freedom can be utilized to accelerate the convergence of data-driven learning. A model predictive control design is integrated into data-driven learning in \cite{rb:18}, which achieves target-free learning for repetitive tasks. To address the effects of uncertainties on data-driven learning, an extended state observer (ESO)-based mechanism is adopted to overcome nonrepetitive uncertainties resulting from external disturbances and initial shifts during executing repetitive tasks in \cite{hzc:18,hchh:20}, and both renewal and recognition mechanisms are leveraged to accommodate multiple random constraints on data transmission in \cite{s:18}. There have also been proposed alternative ways to capture data-driven learning by model-free approaches (see, e.g., \cite{jps:11,bhc:13,rpppd:13,ro:19}).

In addition to the theoretical development achieved for data-driven learning, its applications have been extensively explored at the same time, see, e.g., \cite{hchh:20,jps:11} for linear motors, \cite{bhc:13} for farm vehicles, \cite{rpppd:13} for nonlinear servo systems, \cite{ro:19} for wide-format printers, and \cite{mmw:19,clh:20} for building room temperatures. Though both theories and applications have attracted more and more attention for data-driven learning, its control and analysis are far from complete, especially in comparison to the classical model-based control analysis methods exploited in the Kalman state-space framework \cite{am:06}. Of particular note is the fact about the independence of data-driven learning from the fundamental controllability/reachability and observability characteristics for the controlled systems (see also \cite{chhj:15,clzhh:17,bhyy:20,mz:19,ct:17,rb:18,hzc:18,hchh:20,s:18,jps:11,bhc:13,rpppd:13,ro:19,mmw:19,clh:20}). In contrast with this fact, some attempts have been made to study the controllability and/or observability issues of ILC \cite{hm:07,lgn:19} and of data-driven control \cite{wl:11,lyw:14}, respectively. Further attempts studying these issues have been developed to address stability, optimality, and robustness problems for data-driven control recently \cite{pt:20,wetc:20}, regardless of whether the data are guaranteed to be persistently exciting or not. However, how to develop a Kalman state-space framework for the control analysis and synthesis of data-driven learning is still an intriguing problem to be resolved. A feasible solution will be remarkable since the classical feedback-based tools will be applicable to data-driven learning.

To deal with the abovementioned problem, this paper targets at a class of input-output data sequences generated iteratively, and establishes a Kalman state-space framework for the control analysis and synthesis of robust data-driven learning regardless of iteration-varying uncertainties. This provides the possibility of integrating the state feedback approaches into the theoretical developments of data-driven learning that can be devoted to the high-precision tracking of any desired target. Our contributions specifically include the following aspects.
\begin{enumerate}
\item
Motivated by the standard control system theory \cite{am:06}, we propose a Kalman state-space description in the iteration domain for data-driven learning systems. This can ensure the feasibility of establishing a close connection between the properties of controllability and observability and the control analysis and synthesis of data-driven learning. As a consequence, we can leverage the design approaches of feedback-based and observer-based controllers to exploit data-driven learning laws.

\item
We introduce two concepts of robust $k$-stability for data-driven learning systems that aim at realizing the accurate tracking of the desired target in the presence of iteration-varying uncertainties. Different from the typical stability problem for robust control (see, e.g., \cite[p. 477]{gl:95}), robust $k$-stability problems involve two classes of properties: 1) boundedness to evaluate the overall system performance for all iterations and 2) attractiveness concerned with the asymptotic system behaviors. We give two attractiveness properties in accordance with the accurate tracking tasks of data-driven learning. Moreover, we provide necessary and sufficient conditions for data-driven learning, which are exploited by addressing the tracking problem through the state feedback stabilization.

\item
We incorporate the ESO-based controller design method into data-driven learning. By benefiting from ESO-based feedback design, we arrive at updating laws to overcome ill effect of iteration-varying uncertainties on data-driven learning tasks. We disclose that the introduction of ESO-based information can help data-driven learning to better its continuous dependence property upon the variation of iteration-varying uncertainties. Further, this robust result works effectively with respect to the model uncertainties in the data transfer matrix that maps input to output. It is worth particularly noting that we can realize the tracking of the desired target if the iteration-varying uncertainties have a zero-convergence variation rate, rather than being strictly iteration-invariant or having finite energies (thus, necessarily converging to zero along the iteration axis).
\end{enumerate}

In addition, a tree-step problem-solving strategy is leveraged to implement data-driven learning that works effectively, even without any prior knowledge of the plant model for generating the input and output data. This strategy bridges the connection between data-driven control and model-based control, and also provides a perspective for how to leverage model-based control tools to cope with the analysis and synthesis problems of data-driven control. An example concerned with ILC tracking tasks is included to illustrate the validity of our proposed data-driven learning method.

The remainder of this paper is organized as follows. A data-driven learning problem is introduced in Section \ref{sec2}, for which a nominal Kalman state-space framework leveraging ESO-based design is established in Section \ref{sec3}. The robust design of ESO-based data-driven learning under model uncertainties is studied in Section \ref{sec4}, which is further explored for model-free systems in Section \ref{sec5}. The application to ILC and an example are given in Section \ref{sec6}, and the conclusions are made in Section \ref{sec7}.

{\it Notations:} Let $\mathbb{Z}_{+}$ and $\mathbb{Z}$ denote the sets of nonnegative and positive integers, respectively; and $I$ and $0$ the identity and null matrices with appropriate dimensions, respectively. We define a forward difference operator of any sequence $\left\{f_{k}:k\in\mathbb{Z}_{+}\right\}$ as $\Delta:f_{k}\to\Delta f_{k}=f_{k+1}-f_{k}$, and hence denote
$\Delta^{i}f_{k}=\Delta\left(\Delta^{i-1}f_{k}\right)$, $\forall i\in\mathbb{Z}$, where $\Delta^{0}=\bf{I}$ becomes the identity operator. Let $\Delta$ also work for a parameterized sequence $\left\{f_{k}(t):k\in\mathbb{Z}_{+}\right\}$ in the face of any interested parameter $t\in\mathbb{Z}_{+}$, that is, $\Delta:f_{k}(t)\to\Delta f_{k}(t)=f_{k+1}(t)-f_{k}(t)$. For any matrix $H\in\mathbb{R}^{n\times m}$, $\left\|H\right\|$ is an induced matrix norm from some compatible vector norm (see, e.g., \cite[Chapter 5]{hj:85}). Particularly, for any square matrix $H$ (i.e., $m=n$), $\rho\left(H\right)$ is the spectral radius of matrix $H$, and $H>0$ represents a positive-definite matrix, in which a star $(\star)$ is often adopted to denote a term induced by the symmetry.

\section{Problem Statement}\label{sec2}

\subsection{Data Transfer Plant}

Consider a data sequence, with respect to the iteration index $k$, generated by
\begin{equation}\label{eq01}
\bm{Y}_{k}=P\bm{U}_{k}+\bm{N}_{k},\quad\forall k\in\mathbb{Z}_{+}
\end{equation}

\noindent where $\bm{Y}_{k}\in\mathbb{R}^{p}$, $\bm{U}_{k}\in\mathbb{R}^{m}$, and $\bm{N}_{k}\in\mathbb{R}^{p}$ are the data of output, input, and unknown noise or external disturbance, respectively; and $P\in\mathbb{R}^{p\times m}$ is the unknown data transfer matrix. It is worth highlighting that the unknown uncertainty involved in the plant (\ref{eq01}) comes in two different forms of
\[\aligned
P&:&&\hbox{unknown plant model}\\
\bm{N}_{k}&:&&\hbox{unknown iteration-varying uncertainty}.
\endaligned
\]

To overcome the control difficulty caused by the fact that we have no prior knowledge of $P$, we use a step-by-step problem-solving strategy, for which we without loss of generality define $P$ in the form of $P=P_{0}+P_{\delta}$, with a nominal model $P_{0}\in\mathbb{R}^{p\times m}$ and an unknown model uncertainty $P_{\delta}\in\mathbb{R}^{p\times m}$. Three steps are thus involved to achieve the controller design of the plant (\ref{eq01}):
\begin{enumerate}
\item[(S1)]
nominal design without model uncertainties, i.e., $P_{\delta}=0$; 

\item[(S2)]
robust design in the presence of model uncertainties, i.e., $P_{0}\neq0$ and $P_{\delta}\neq0$;

\item[(S3)]
model-free design without plant knowledge, i.e., $P_{0}=0$. 
\end{enumerate}

\noindent Clearly, we may perform control tasks for the plant (\ref{eq01}) through the three steps even without using any model knowledge of $P$. We will consider (\ref{eq01}) a bounded plant, where $\left\|P_{\delta}\right\|\leq\beta_{\delta}$ holds for some bound $\beta_{\delta}\geq0$ (thus, $\left\|P\right\|\leq\left\|P_{0}\right\|+\beta_{\delta}$).

For the measure of the variation of $\bm{N}_{k}$ versus the iteration $k$, we define
\[
\beta_{\Delta^{i}\bm{N}}=\sup_{k\in\mathbb{Z}_{+}}\left\|\Delta^{i}\bm{N}_{k}\right\|,\quad
\beta_{\Delta^{i}\bm{N}}^{ess}=\limsup_{k\to\infty}\left\|\Delta^{i}\bm{N}_{k}\right\|,\quad\forall i\in\mathbb{Z}_{+}.
\]

\noindent We will consider $\bm{N}_{k}$ to be bounded without loss of generality, i.e., $\beta_{\bm{N}}<\infty$. We can validate that for any bounded $\bm{N}_{k}$, $\Delta^{i}\bm{N}_{k}$ is also bounded for each iteration $k\in\mathbb{Z}_{+}$, and
\[0\leq\beta_{\Delta^{i}\bm{N}}^{ess}\leq\beta_{\Delta^{i}\bm{N}}\leq2^{i}\beta_{\bm{N}},\quad\forall i\in\mathbb{Z}_{+}.
\]

\noindent Furthermore, the variation of the iteration-varying uncertainty $\bm{N}_{k}$ is said to disappear (respectively, quasi-disappear) if $\beta_{\Delta\bm{N}}^{ess}=0$ (respectively, $\beta_{\Delta^{2}\bm{N}}^{ess}=0$).

\begin{rem}\label{rem01}
For convenience, we directly call $\Delta\bm{N}_{k}$ and $\Delta^{2}\bm{N}_{k}$ the variation and the variation rate of $\bm{N}_{k}$, respectively. We can verify that for any $i,j\in\mathbb{Z}_{+}$ and $i<j$,
\[
\beta_{\Delta^{i}\bm{N}}^{ess}=0
\Rightarrow
\beta_{\Delta^{j}\bm{N}}^{ess}=0
\quad\hbox{but}\quad
\beta_{\Delta^{i}\bm{N}}^{ess}=0
\nLeftarrow
\beta_{\Delta^{j}\bm{N}}^{ess}=0.
\]

\noindent In particular, for $i=1$ and $j=2$, this implies that the variation of the iteration-varying uncertainty quasi-disappears provided it disappears, whereas the opposite may not be true. A similar fact worth noting is that for $\bm{N}_{k}$ with its variation disappearing, $\bm{N}_{k}$ may not need to converge. Two common considered trivial cases of $\bm{N}_{k}$ whose variation disappears are $\lim_{k\to\infty}\bm{N}_{k}=\bm{N}$ and $\bm{N}_{k}\equiv\bm{N}$, $\forall k\in\mathbb{Z}_{+}$ for some iteration-invariant vector $\bm{N}\in\mathbb{R}^{p}$.
\end{rem}
%

\subsection{Output Tracking Problem}

{\bf Problem Statement.} For the desired output target $\bm{Y}_{d}\in\mathbb{R}^{p}$, the problem addressed in this paper is to find an updating law of the input $\bm{U}_{k}$ such that the output $\bm{Y}_{k}$ of the plant (\ref{eq01}) achieves the tracking of $\bm{Y}_{d}$ as accurately as possible with the increasing of the iteration $k$.

To more precisely characterize the abovementioned tracking problem, we define the tracking error as
\[
\bm{E}_{k}=\bm{Y}_{d}-\bm{Y}_{k},\quad\forall k\in\mathbb{Z}_{+}.
\]

\noindent Then we can combine (\ref{eq01}) to arrive at
\begin{equation}\label{eq02}
\bm{E}_{k+1}
=\bm{E}_{k}-\Delta\bm{Y}_{k}
=\bm{E}_{k}+P\overline{\bm{U}}_{k}+\bm{D}_{k},\quad\forall k\in\mathbb{Z}_{+}
\end{equation}

\noindent where
\begin{equation}\label{eq03}
\overline{\bm{U}}_{k}
=-\Delta\bm{U}_{k},\quad
\bm{D}_{k}
=-\Delta\bm{N}_{k},\quad\forall k\in\mathbb{Z}_{+}.
\end{equation}

\noindent By contrast with the plant (\ref{eq01}), (\ref{eq02}) represents a dynamic system with evolution along the iteration axis.

From the perspective of control systems, (\ref{eq02}) defines a linear discrete control system \cite{am:06}, where we call $\bm{E}_{k}$, $\overline{\bm{U}}_{k}$, and $\bm{D}_{k}$ the $k$-state, $k$-input, and $k$-disturbance for distinction, respectively. This observation motivates us to simply say that a system is $k$-stable if $\lim_{k\to\infty}\bm{E}_{k}=0$. Furthermore, we present the following robust $k$-stability notion when considering the system (\ref{eq02}) under the feedback-based controllers.

\begin{defi}\label{def01}
For the system (\ref{eq02}) integrated with a feedback-based controller, if there exist some class $\mathcal{K}_{\infty}$ functions $\chi_{1}$ and $\chi_{2}$, some class $\mathcal{KL}$ function $\zeta$, and some finite bound $\beta_{0}\geq0$ such that for any bounded initial condition $\bm{E}_{0}$ and uncertainty $\bm{N}_{k}$, the following two properties hold:
\begin{enumerate}
\item
{\it(boundedness):} $\left\|\bm{E}_{k}\right\|\leq\chi_{1}\left(\beta_{\Delta\bm{N}}\right)+\zeta\left(\beta_{0},k\right)$, $\forall k\in\mathbb{Z}_{+}$;

\item
{\it(attractiveness):} $\limsup_{k\to\infty}\left\|\bm{E}_{k}\right\|\leq\chi_{2}\left(\beta_{\Delta\bm{N}}^{ess}\right)$;
\end{enumerate}

\noindent then the resulting closed-loop system is robustly $k$-stable.
\end{defi}

Based on Definition \ref{def01}, we can address the tracking problem of the plant (\ref{eq01}) by designing $k$-inputs for the system (\ref{eq02}) to seek the robust $k$-stability of the tracking error. A direct advantage of this problem transformation is that the $k$-state is available to design feedback controllers for the system (\ref{eq02}). It provides the possibility to handle the tracking problem of the data sequence produced by (\ref{eq01}) in the Kalman state-space framework.

The two properties of Definition \ref{def01} explicitly disclose that for the steady case, the tracking error bound depends continuously upon the variation bound of the iteration-varying uncertainty. It may, however, lead to low tracking accuracies, especially when subjected to iteration-varying uncertainties with fast variations. A further problem of interest is to overcome this drawback for data-driven iterative and learning methods, for which we make an improvement of the $k$-stability in Definition \ref{def01} to propose a $k$-superstability concept in the following definition.

\begin{defi}\label{def02}
For the system (\ref{eq02}) integrated with a feedback-based controller, if there exist some class $\mathcal{K}_{\infty}$ functions $\chi_{1}$ and $\chi_{2}$, some class $\mathcal{KL}$ function $\zeta$, and some finite bound $\beta_{0}\geq0$ such that for any bounded initial condition $\bm{E}_{0}$ and uncertainty $\bm{N}_{k}$, the following two properties hold:
\begin{enumerate}
\item
{\it(boundedness):} $\left\|\bm{E}_{k}\right\|\leq\chi_{1}\left(\beta_{\Delta^{2}\bm{N}}\right)+\zeta\left(\beta_{0},k\right)$, $\forall k\in\mathbb{Z}_{+}$;

\item
{\it(superattractiveness):} $\limsup_{k\to\infty}\left\|\bm{E}_{k}\right\|\leq\chi_{2}\left(\beta_{\Delta^{2}\bm{N}}^{ess}\right)$;
\end{enumerate}

\noindent then the resulting closed-loop system is robustly $k$-superstable.
\end{defi}

\begin{rem}\label{rem010}
For boundedness properties of Definitions \ref{def01} and \ref{def02}, $\beta_{0}$ is a bound generally depending upon the initial condition. The boundedness identifies the overall system performance for all iterations, from which both a unified bound for all iterations and a steady-state bound as the iteration index tends to infinity can be determined. Let us, for example, consider Definition \ref{def01}, and then we can arrive at
\[
\sup_{k\in\mathbb{Z}_{+}}\left\|\bm{E}_{k}\right\|
\leq\zeta\left(\beta_{0},0\right)+\chi_{1}\left(\beta_{\Delta\bm{N}}\right),\quad
\limsup_{k\to\infty}\left\|\bm{E}_{k}\right\|
\leq\chi_{1}\left(\beta_{\Delta\bm{N}}\right)
\]

\noindent in which a peak-to-peak performance evaluation under the zero initial condition is implicitly involved. The similar result works for Definition \ref{def02}. However, a conservative bound is induced for the steady-state case, in contrast to which the bound developed via the (super)attractiveness property is more relaxed and more applicable for reflecting the high-precision tracking tasks.
\end{rem}
\begin{rem}\label{rem02}
Because $\beta_{\Delta^{2}\bm{N}}\leq2\beta_{\Delta\bm{N}}$ and $\beta_{\Delta^{2}\bm{N}}^{ess}\leq2\beta_{\Delta\bm{N}}^{ess}$, we can find that the robustly $k$-superstable systems must be robustly $k$-stable. It reveals that the robust $k$-superstability moves forward the robust $k$-stability. In particular, if $\beta_{\Delta\bm{N}}^{ess}=0$, then we can get $\lim_{k\to\infty}\bm{E}_{k}=0$ from the attractiveness (or superattractiveness). Thus, the $k$-stability can emerge from the robust $k$-stability (or $k$-superstability), provided the variation of the iteration-varying uncertainty disappears. The $k$-stability can also result from the robust $k$-superstability when $\beta_{\Delta^{2}\bm{N}}^{ess}=0$, i.e., the variation of $\bm{N}_{k}$ quasi-disappears (even it may not disappear).
\end{rem}

In what follows, we leverage the notions of robust $k$-stability and $k$-superstability to explore data-driven learning approaches by taking advantage of the rich set of design and analysis tools proposed in the Kalman state-space framework. Our developed results particularly apply to ILC since the plant (\ref{eq01}) is effective in describing a class of controlled systems running repetitively over a fixed time interval (see, e.g., \cite{bta:06,acm:07}).

\section{Nominal Kalman State-Space Design}\label{sec3}

In this section, we explore the nominal design of the Kalman state-space framework for data-driven learning by ignoring the model uncertainty of the plant (\ref{eq01}) (namely, taking $P_{\delta}=0$). We then develop a basic and an ESO-based state-space approaches for data-driven learning, respectively.

\subsection{Basic $k$-State Space}

We reconsider the system (\ref{eq02}) described in the Kalman state-space form, of which we can develop the controllability and ($k$-state feedback) stabilizability properties as follows (for details of controllability and relevant properties of linear systems, we refer the readers to, e.g., \cite[Chapter 3]{am:06}).

\begin{lem}\label{lem1}
For the system (\ref{eq02}), the following are equivalent:
\begin{enumerate}
\item
$(I,P)$ is controllable;

\item
$(I,P)$ is stabilizable;

\item
$P$ is a full-row rank matrix.
\end{enumerate}
\end{lem}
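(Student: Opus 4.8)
The plan is to treat all three conditions through a single eigenvalue-based (Popov--Belevitch--Hautus) lens, exploiting the special structure $A=I$ of the $k$-state dynamics in (\ref{eq02}), where the input matrix is $B=P$. First I would record the trivial but decisive fact that, since the state matrix is the identity $I\in\mathbb{R}^{p\times p}$, its only eigenvalue is $\lambda=1$, of algebraic multiplicity $p$. This single observation is what forces the three statements to collapse onto one another.

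For the equivalence $(1)\Leftrightarrow(3)$, the most direct route is to form the controllability matrix $[\,P,\ IP,\ I^{2}P,\ \ldots,\ I^{p-1}P\,]=[\,P,\ P,\ \ldots,\ P\,]$. Because $A=I$ makes every block equal to $P$, the rank of this matrix is exactly $\mathrm{rank}(P)$, so $(I,P)$ is controllable (i.e.\ has rank $p$) if and only if $P$ has full row rank. Equivalently, I could apply the Hautus controllability test, requiring $\mathrm{rank}[\,\lambda I-I,\ P\,]=p$ at every eigenvalue $\lambda$; at the unique eigenvalue $\lambda=1$ this reduces to $\mathrm{rank}[\,0,\ P\,]=\mathrm{rank}(P)=p$, yielding the same conclusion.

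The implication $(1)\Rightarrow(2)$ is automatic, since controllability always entails stabilizability. The remaining and conceptually central step is $(2)\Rightarrow(3)$. Here I would invoke the Hautus stabilizability criterion, which demands $\mathrm{rank}[\,\lambda I-I,\ P\,]=p$ only at the unstable eigenvalues of $I$, that is, those $\lambda$ with $|\lambda|\geq 1$ in the discrete-time sense. Since the sole eigenvalue $\lambda=1$ satisfies $|\lambda|=1\geq 1$, it is itself an unstable mode, so the stabilizability test is not weaker than the controllability test and again forces $\mathrm{rank}(P)=p$. Chaining $(3)\Rightarrow(1)\Rightarrow(2)\Rightarrow(3)$ then closes the equivalence.

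The one point demanding care --- and the only place a careless reader could go wrong --- is the discrete-time stability convention: the eigenvalue $\lambda=1$ of $A=I$ lies on the unit circle, so $\rho(I)=1$ and $A=I$ possesses no stable (Schur) mode. It is precisely this absence of any stable direction behind which an uncontrollable mode could be concealed that prevents stabilizability from being genuinely weaker than controllability, thereby merging conditions $(1)$ and $(2)$. The algebra is entirely routine; the substance lies in recognizing and articulating this structural coincidence.
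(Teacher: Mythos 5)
Your proof is correct, but its central step runs along a genuinely different line from the paper's. For ``2)$\Rightarrow$3)'' the paper does not invoke the PBH/Hautus stabilizability test at all: it works from the feedback definition of stabilizability, namely the existence of a gain $K\in\mathbb{R}^{m\times p}$ with $\rho\left(I-PK\right)<1$, and observes that this spectral radius bound forces $PK$ to be nonsingular (otherwise $0$ would be an eigenvalue of $PK$, hence $1$ an eigenvalue of $I-PK$), which immediately gives $P$ full row rank. You instead push everything through the PBH lens, checking the rank condition at the unique eigenvalue $\lambda=1$ of the state matrix $I$ and noting that $\left|\lambda\right|=1\geq1$ makes it an unstable mode under the discrete-time (Schur) convention, so the stabilizability and controllability tests coincide. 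Both arguments are sound. Yours buys conceptual uniformity --- a single criterion disposes of all three statements, and it makes explicit \emph{why} stabilizability cannot be weaker than controllability here (the identity matrix has no strictly stable modes behind which an uncontrollable direction could hide) --- but it leans on the discrete-time PBH stabilizability theorem, with its unit-circle convention, as a black box. The paper's argument is more elementary and has a structural payoff you forgo: the gain condition $\rho\left(I-PK\right)<1$ it extracts is exactly the spectral radius condition (\ref{eq04}) that Theorem \ref{thm1} subsequently proves necessary and sufficient for robust $k$-stability, so the lemma's proof doubles as the setup for the synthesis result that follows. Your treatment of ``3)$\Rightarrow$1)'' (controllability matrix collapsing to $[\,P,\ P,\ \ldots,\ P\,]$) and of ``1)$\Rightarrow$2)'' (the classical implication) matches the paper's in substance.
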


\begin{proof}
``1)$\Rightarrow$2):'' A classical result of linear systems.

``2)$\Rightarrow$3):'' If $(I,P)$ is stabilizable, then there exists some gain matrix $K\in\mathbb{R}^{m\times p}$ such that
\begin{equation}\label{eq04}
\rho\left(I-PK\right)<1.
\end{equation}

\noindent Clearly, (\ref{eq04}) ensures the nonsingularity of $PK$, and hence $P$ has the full-row rank.

``3)$\Rightarrow$1):'' A consequence of the controllability rank criterion thanks to the trivial nonsingularity of the identity matrix.
\end{proof}

As an application of Lemma \ref{lem1}, the following typical design result can be presented by utilizing the controllability property of the system (\ref{eq02}).

\begin{thm}\label{thm1}
Consider the system (\ref{eq02}), and let the $k$-input be given in a $k$-state feedback form of
\begin{equation}\label{eq05}
\overline{\bm{U}}_{k}=-K\bm{E}_{k},\quad\forall k\in\mathbb{Z}_{+}.
\end{equation}

\noindent Then the closed-loop system described in terms of (\ref{eq02}) and (\ref{eq05}) is robustly $k$-stable if and only if the spectral radius condition (\ref{eq04}) holds.
\end{thm}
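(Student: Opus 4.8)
The plan is to prove the biconditional by substituting the feedback law into the error dynamics and then verifying the two properties of Definition~\ref{def01} under the spectral radius condition, while establishing the converse by contradiction. First I would close the loop: substituting \eqref{eq05} into \eqref{eq02} yields
\begin{equation}\label{eq-closed}
\bm{E}_{k+1}=\left(I-PK\right)\bm{E}_{k}+\bm{D}_{k},\quad\forall k\in\mathbb{Z}_{+},
\end{equation}
which is a linear time-invariant recursion in the iteration domain with system matrix $A_{c}=I-PK$ and driving term $\bm{D}_{k}=-\Delta\bm{N}_{k}$. Note that $\left\|\bm{D}_{k}\right\|\leq\beta_{\Delta\bm{N}}$ for all $k$ and $\limsup_{k\to\infty}\left\|\bm{D}_{k}\right\|\leq\beta_{\Delta\bm{N}}^{ess}$ by the definitions given in Section~\ref{sec2}, so the disturbance bounds are exactly those appearing in Definition~\ref{def01}.

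For the sufficiency direction, assuming \eqref{eq04}, I would unroll \eqref{eq-closed} into the explicit solution $\bm{E}_{k}=A_{c}^{k}\bm{E}_{0}+\sum_{j=0}^{k-1}A_{c}^{k-1-j}\bm{D}_{j}$. Since $\rho\left(A_{c}\right)<1$, there exists a compatible induced norm (or an equivalent Lyapunov argument) giving $\left\|A_{c}^{k}\right\|\leq c\,\lambda^{k}$ for some $c\geq1$ and $\lambda\in(0,1)$. Applying the triangle inequality and the submultiplicativity of the norm gives
\begin{equation}\label{eq-bound}
\left\|\bm{E}_{k}\right\|\leq c\,\lambda^{k}\left\|\bm{E}_{0}\right\|+c\,\beta_{\Delta\bm{N}}\sum_{j=0}^{k-1}\lambda^{k-1-j}\leq\frac{c}{1-\lambda}\,\beta_{\Delta\bm{N}}+c\,\lambda^{k}\left\|\bm{E}_{0}\right\|,
\end{equation}
from which boundedness follows with the linear class $\mathcal{K}_{\infty}$ function $\chi_{1}(s)=\tfrac{c}{1-\lambda}s$, the class $\mathcal{KL}$ function $\zeta(\beta_{0},k)=c\,\lambda^{k}\beta_{0}$, and $\beta_{0}=\left\|\bm{E}_{0}\right\|$. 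For attractiveness I would take $\limsup_{k\to\infty}$ of the solution: the homogeneous term $A_{c}^{k}\bm{E}_{0}$ vanishes, and a standard argument splitting the convolution sum into an early block (whose weight $\lambda^{k-1-j}$ decays) and a tail block (governed by $\beta_{\Delta\bm{N}}^{ess}$) shows $\limsup_{k\to\infty}\left\|\bm{E}_{k}\right\|\leq\tfrac{c}{1-\lambda}\beta_{\Delta\bm{N}}^{ess}$, giving $\chi_{2}(s)=\tfrac{c}{1-\lambda}s$.

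For the necessity direction I would argue contrapositively: if \eqref{eq04} fails, then $\rho\left(A_{c}\right)\geq1$, so $A_{c}$ has an eigenvalue $\mu$ with $\left|\mu\right|\geq1$. Choosing $\bm{N}_{k}$ iteration-invariant so that $\bm{D}_{k}\equiv0$ (hence $\beta_{\Delta\bm{N}}=\beta_{\Delta\bm{N}}^{ess}=0$) and selecting $\bm{E}_{0}$ in the direction of a corresponding (real or generalized) eigenvector makes $\left\|\bm{E}_{k}\right\|=\left\|A_{c}^{k}\bm{E}_{0}\right\|$ nondecreasing and bounded away from zero, which contradicts the attractiveness requirement $\limsup_{k\to\infty}\left\|\bm{E}_{k}\right\|\leq\chi_{2}(0)=0$; thus robust $k$-stability forces \eqref{eq04}. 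The main obstacle is the attractiveness estimate in the sufficiency part: the $\limsup$ of the convolution sum must be controlled by the \emph{essential} disturbance bound $\beta_{\Delta\bm{N}}^{ess}$ rather than the uniform bound $\beta_{\Delta\bm{N}}$, so the careful head/tail splitting of the sum, exploiting that only finitely many early terms can exceed $\beta_{\Delta\bm{N}}^{ess}$ while their geometric weights die out, is the delicate step that makes the attractiveness bound sharp.
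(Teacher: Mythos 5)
Your proposal is correct and follows essentially the same route as the paper: close the loop to obtain $\bm{E}_{k+1}=(I-PK)\bm{E}_{k}-\Delta\bm{N}_{k}$, unroll the recursion, and use geometric decay of the powers of $I-PK$ under the spectral radius condition (\ref{eq04}) to verify the boundedness and attractiveness properties of Definition~\ref{def01}, with necessity following from attractiveness. The only differences are cosmetic: the paper invokes an induced norm with $\left\|I-PK\right\|<1$ (Horn--Johnson) where you use $\left\|(I-PK)^{k}\right\|\leq c\lambda^{k}$, and you spell out the head/tail splitting of the convolution sum for the $\limsup$ estimate and the eigenvector-based contradiction for necessity, both of which the paper leaves implicit.
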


\begin{proof}
The application of (\ref{eq05}) to (\ref{eq02}) yields
\begin{equation}\label{eq06}
\bm{E}_{k+1}
=\left(I-PK\right)\bm{E}_{k}-\Delta\bm{N}_{k},\quad\forall k\in\mathbb{Z}_{+}.
\end{equation}

\noindent If (\ref{eq04}) holds, then there exists some induced matrix norm that fulfills $\left\|I-PK\right\|<1$ according to \cite[Lemma 5.6.10]{hj:85}. Solving the solution to (\ref{eq06}) leads to
\begin{equation*}\label{}
\bm{E}_{k}
=\left(I-PK\right)^{k}\bm{E}_{0}-\sum_{i=0}^{k-1}\left(I-PK\right)^{k-1-i}\Delta\bm{N}_{i},\quad\forall k\in\mathbb{Z}_{+}
\end{equation*}

\noindent with which we can obtain a boundedness result as
\begin{equation}\label{eq07}
\left\|\bm{E}_{k}\right\|
\leq\left\|I-PK\right\|^{k}\left\|\bm{E}_{0}\right\|+\frac{\Ds\beta_{\Delta\bm{N}}}{\Ds1-\left\|I-PK\right\|},\quad\forall k\in\mathbb{Z}_{+}.
\end{equation}

\noindent In addition, we resort to (\ref{eq06}) and can arrive at an attractiveness result as
\begin{equation}\label{eq08}
\limsup_{k\to\infty}\left\|\bm{E}_{k}\right\|
\leq\frac{\Ds\beta_{\Delta\bm{N}}^{ess}}{\Ds1-\left\|I-PK\right\|}.
\end{equation}

\noindent With (\ref{eq07}) and (\ref{eq08}), we can conclude from Definition \ref{def01} that (\ref{eq06}) is a robustly $k$-stable system.

On the contrary, if the robust $k$-stability of (\ref{eq06}) is given, then it can be validated from the attractiveness of Definition \ref{def01} that (\ref{eq04}) is necessarily required.
\end{proof}

As an equivalent form of (\ref{eq05}), an updating law can be derived for the plant (\ref{eq01}) as
\begin{equation}\label{eq09}
\bm{U}_{k+1}=\bm{U}_{k}+K\bm{E}_{k},\quad\forall k\in\mathbb{Z}_{+}.
\end{equation}

\noindent Then it follows from Theorem \ref{thm1} that the output of the plant (\ref{eq01}) under the updating law (\ref{eq09}) can accomplish the robust tracking of the desired output target. Furthermore, the tracking error can be decreased to a small bound that depends on the variation of the iteration-varying uncertainty, rather than on the uncertainty itself. This robust tracking result is particularly effective in the presence of large iteration-varying uncertainties with relatively small variations.

\subsection{Extended $k$-State Space}

To proceed further with the basic design result for the $k$-state feedback in Theorem \ref{thm1}, we consider involving the information of the iteration-varying uncertainty in the design of the $k$-input $\overline{\bm{U}}_{k}$. We thus define an extended $k$-state as
\[
\overline{\bm{X}}_{k}=\begin{bmatrix}\bm{E}_{k}\\\bm{D}_{k}\end{bmatrix}\in\mathbb{R}^{2p},\quad\forall k\in\mathbb{Z}_{+}.
\]

\noindent Correspondingly, we resort to (\ref{eq02}) and can develop an extended Kalman state-space description as
\begin{equation}\label{eq010}
\left\{\aligned
\overline{\bm{X}}_{k+1}
&=\overline{A}\,\overline{\bm{X}}_{k}+\overline{B}\,\overline{\bm{U}}_{k}+\overline{\bm{D}}_{k}\\
\overline{\bm{Y}}_{k}
&=\overline{C}\,\overline{\bm{X}}_{k}
\endaligned,\quad\forall k\in\mathbb{Z}_{+}\right.
\end{equation}

\noindent of which the output $\overline{\bm{Y}}_{k}\in\mathbb{R}^{p}$, the disturbance $\overline{\bm{D}}_{k}\in\mathbb{R}^{2p}$, and the three system matrices $\overline{A}\in\mathbb{R}^{2p\times2p}$, $\overline{B}\in\mathbb{R}^{2p\times m}$, and $\overline{C}\in\mathbb{R}^{p\times2p}$ are given by
\begin{equation}\label{eq011}
\overline{\bm{Y}}_{k}
=\bm{E}_{k},\quad
\overline{\bm{D}}_{k}
=\begin{bmatrix}0\\\Delta\bm{D}_{k}\end{bmatrix},\quad
\overline{A}
=\begin{bmatrix}
I&I\\
0&I\\
\end{bmatrix},\quad
\overline{B}
=\begin{bmatrix}
P\\
0\\
\end{bmatrix},\quad
\overline{C}
=\begin{bmatrix}
I&0\\
\end{bmatrix}.
\end{equation}

\noindent Unlike the $k$-state, the extended $k$-state of the system (\ref{eq010}) is no longer available due to the presence of the unknown iteration-varying uncertainty. Let $F\in\mathbb{R}^{p\times2p}$ be given by
\[
F=\begin{bmatrix}
0&I\\
\end{bmatrix}
\]

\noindent and then with (\ref{eq03}), $\overline{\bm{D}}_{k}$ can be written as
\begin{equation}\label{eq012}
\overline{\bm{D}}_{k}
=F^{\tp}\Delta\bm{D}_{k}
=-F^{\tp}\Delta^{2}\bm{N}_{k},\quad\forall k\in\mathbb{Z}_{+}.
\end{equation}

For the system (\ref{eq010}), we can get the following controllability and observability properties.

\begin{lem}\label{lem2}
For the system (\ref{eq010}), the following hold:
\begin{enumerate}
\item
$\left(\overline{A},\overline{B},\overline{C}\right)$ is observable but not controllable;

\item
$\left(\overline{A},\overline{B},\overline{C}\right)$ is a standard controllable decomposition if and only if $P$ is a full-row rank matrix.
\end{enumerate}
\end{lem}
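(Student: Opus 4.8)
The plan is to verify both claims about the triple $\left(\overline{A},\overline{B},\overline{C}\right)$ directly from the block structure of the matrices in (\ref{eq011}), using the standard rank criteria for observability and controllability of linear discrete systems.

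First I would establish observability in claim 1). The observability matrix is built from $\overline{C}$, $\overline{C}\,\overline{A}$, $\overline{C}\,\overline{A}^{2}$, etc. Since $\overline{C}=\begin{bmatrix}I&0\end{bmatrix}$ and $\overline{A}=\begin{bmatrix}I&I\\0&I\end{bmatrix}$, a direct computation gives $\overline{C}\,\overline{A}=\begin{bmatrix}I&I\end{bmatrix}$, so the first two block rows already read $\begin{bmatrix}I&0\\I&I\end{bmatrix}$, which has full rank $2p$. Hence the observability matrix has rank $2p$ and the system is observable irrespective of $P$. For the non-controllability half of claim 1), I would form the controllability matrix $\begin{bmatrix}\overline{B}&\overline{A}\,\overline{B}&\cdots\end{bmatrix}$. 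Because $\overline{B}=\begin{bmatrix}P\\0\end{bmatrix}$ and $\overline{A}\,\overline{B}=\begin{bmatrix}P\\0\end{bmatrix}$ as well (the zero lower block is annihilated by the structure of $\overline{A}$), every block $\overline{A}^{j}\overline{B}$ equals $\begin{bmatrix}P\\0\end{bmatrix}$. Thus the entire reachable subspace lies in the range of $\begin{bmatrix}P\\0\end{bmatrix}\subseteq\mathbb{R}^{p}\times\{0\}$, whose dimension is at most $p<2p$; the lower $k$-disturbance coordinate can never be reached, so $\left(\overline{A},\overline{B}\right)$ is not controllable.

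For claim 2), I would invoke the standard controllable (Kalman) decomposition: the controllable subspace is exactly the range of the controllability matrix, which by the computation above is $\mathrm{range}\begin{bmatrix}P\\0\end{bmatrix}$. The decomposition is a genuine \emph{standard controllable decomposition} precisely when the controllable part is as large as it structurally can be, i.e.\ when the reachable subspace is all of the $\bm{E}$-block $\mathbb{R}^{p}\times\{0\}$; this happens if and only if $\mathrm{range}(P)=\mathbb{R}^{p}$, that is, $P$ has full row rank. I would connect this to Lemma \ref{lem1}, where full row rank of $P$ is already shown equivalent to controllability (and stabilizability) of $(I,P)$, so the controllable subsystem recovered here is exactly the $k$-state system (\ref{eq02}) governing $\bm{E}_{k}$, with the uncontrollable part being the $\bm{D}_{k}$ dynamics driven by the uncertainty.

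The main obstacle is not any hard computation—the block structure makes every rank count transparent—but rather pinning down precisely what ``standard controllable decomposition'' means in the sense of \cite[Chapter 3]{am:06} and showing the stated equivalence rigorously in both directions. The forward direction ($P$ full row rank $\Rightarrow$ standard decomposition) requires checking that the transformed system separates cleanly into a controllable block of dimension $p$ and an uncontrollable block of dimension $p$ with the canonical upper-triangular coupling; the converse requires arguing that if $P$ is rank-deficient then the controllable subspace has dimension strictly less than $p$, so the decomposition fails to place the full $\bm{E}$-block in the controllable part. I would handle this by exhibiting an explicit similarity transformation adapted to $\mathrm{range}(P)$ and verifying the resulting block form, which is where the care lies.
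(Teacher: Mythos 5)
Your proposal is correct and follows essentially the same route as the paper: the paper's (very terse) proof likewise reads off observability and non-controllability of $\left(\overline{A},\overline{B},\overline{C}\right)$ directly from the block structure in (\ref{eq011}), and reduces claim 2) to controllability of the pair $(I,P)$, which Lemma \ref{lem1} equates to $P$ having full row rank. Your explicit computation of the observability matrix and of the reachable subspace $\mathrm{range}\begin{bmatrix}P\\0\end{bmatrix}$ simply fills in the details the paper leaves to the reader.
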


\begin{proof}
1): From (\ref{eq011}), we can verify that $\left(\overline{A},\overline{C}\right)$ is observable, but $\left(\overline{A},\overline{B}\right)$ is not controllable.

2): $\left(\overline{A},\overline{B},\overline{C}\right)$ is a standard controllable decomposition if and only if $\left(I,P\right)$ is controllable. Then by leveraging Lemma \ref{lem1}, we can develop the second statement in this lemma.
\end{proof}

Even though the extended $k$-state $\overline{\bm{X}}_{k}$ is no longer available, Lemma \ref{lem2} provides a basic guarantee that $\overline{\bm{X}}_{k}$ can be estimated through an observer-based design. We thus denote an observer $k$-state of $\overline{\bm{X}}_{k}$ as
\[
\widehat{\overline{\bm{X}}}_{k}=\begin{bmatrix}\widehat{\bm{E}}_{k}\\\widehat{\bm{D}}_{k}\end{bmatrix}\in\mathbb{R}^{2p},\quad\forall k\in\mathbb{Z}_{+}
\]

\noindent where $\widehat{\bm{E}}_{k}$ and $\widehat{\bm{D}}_{k}$ in fact denote the estimations of $\bm{E}_{k}$ and $\bm{D}_{k}$, respectively. Then for the system (\ref{eq010}), we can design an ESO in the form of
\begin{equation}\label{eq013}
\left\{\aligned
\widehat{\overline{\bm{X}}}_{k+1}
&=\overline{A}\,\widehat{\overline{\bm{X}}}_{k}
+\overline{B}\,\overline{\bm{U}}_{k}+\overline{L}\left(\overline{\bm{Y}}_{k}-\widehat{\overline{\bm{Y}}}_{k}\right)\\
\widehat{\overline{\bm{Y}}}_{k}
&=\overline{C}\,\widehat{\overline{\bm{X}}}_{k}
\endaligned,\quad\forall k\in\mathbb{Z}_{+}\right.
\end{equation}

\noindent in which $\widehat{\overline{\bm{Y}}}_{k}\in\mathbb{R}^{p}$ is the output estimation of $\overline{\bm{Y}}_{k}$, and $\overline{L}\in\mathbb{R}^{2p\times p}$ is the observer gain matrix given by
\begin{equation*}
\overline{L}=
\begin{bmatrix}
L_{1}\\
L_{2}\\
\end{bmatrix}~\hbox{with}~L_{i}\in\mathbb{R}^{p\times p},\quad i=1,2.
\end{equation*}

\noindent Let the observation error of the extended $k$-state be defined by $\widetilde{\overline{\bm{X}}}_{k}=\overline{\bm{X}}_{k}-\widehat{\overline{\bm{X}}}_{k}$, or more precisely,
\[\widetilde{\overline{\bm{X}}}_{k}
=\begin{bmatrix}\widetilde{\bm{E}}_{k}\\\widetilde{\bm{D}}_{k}\end{bmatrix}
=\begin{bmatrix}\bm{E}_{k}-\widehat{\bm{E}}_{k}\\\bm{D}_{k}-\widehat{\bm{D}}_{k}\end{bmatrix},\quad\forall k\in\mathbb{Z}_{+}.
\]

For the ESO (\ref{eq013}) as well as the system (\ref{eq010}), we can develop the following properties of them, especially through leveraging the observability result of Lemma \ref{lem2}.

\begin{lem}\label{lem3}
In a compact Kalman state-space description, the ESO (\ref{eq013}) can be written as
\begin{equation}\label{eq014}
\aligned
\widehat{\overline{\bm{X}}}_{k+1}
&=\left(\overline{A}-\overline{L}\,\overline{C}\right)\widehat{\overline{\bm{X}}}_{k}
+\overline{B}\,\overline{\bm{U}}_{k}+\overline{L}\,\overline{\bm{Y}}_{k}\\
&=\left(\overline{A}-\overline{L}\,\overline{C}\right)\widehat{\overline{\bm{X}}}_{k}
+\overline{B}\,\overline{\bm{U}}_{k}+\overline{L}\bm{E}_{k},\quad\forall k\in\mathbb{Z}_{+}
\endaligned
\end{equation}

\noindent and the extended $k$-state observation error can be described by
\begin{equation}\label{eq015}
\widetilde{\overline{\bm{X}}}_{k+1}
=\left(\overline{A}-\overline{L}\,\overline{C}\right)\widetilde{\overline{\bm{X}}}_{k}
+\overline{\bm{D}}_{k},\quad\forall k\in\mathbb{Z}_{+}.
\end{equation}

\noindent Further, there exists some ESO (\ref{eq014}) such that the error system (\ref{eq015}) has the following properties:
\begin{enumerate}
\item
$\widetilde{\overline{\bm{X}}}_{k}$ is bounded, i.e., for some class $\mathcal{K}_{\infty}$ function $\chi_{1}$, some class $\mathcal{KL}$ function $\zeta$, and some finite bound $\beta_{0}\geq0$,
\begin{equation}\label{eq092}
\left\|\widetilde{\overline{\bm{X}}}_{k}\right\|
\leq\chi_{1}\left(\beta_{\Delta^{2}\bm{N}}\right)+\zeta\left(\beta_{0},k\right),\quad\forall k\in\mathbb{Z}_{+}
\end{equation}

\item
$\widetilde{\overline{\bm{X}}}_{k}$ has a superattractiveness property for some class $\mathcal{K}_{\infty}$ function $\chi_{2}$ as
\begin{equation}\label{eq016}
\limsup_{k\to\infty}\left\|\widetilde{\overline{\bm{X}}}_{k}\right\|
\leq\chi_{2}\left(\beta_{\Delta^{2}\bm{N}}^{ess}\right)
\end{equation}
\end{enumerate}

\noindent if and only if the spectral radius of $\overline{A}-\overline{L}\,\overline{C}$ satisfies
\begin{equation}\label{eq017}
\rho\left(\overline{A}-\overline{L}\,\overline{C}\right)<1.
\end{equation}

\noindent In addition, there always exists some gain matrix $\overline{L}$ to achieve the spectral radius condition (\ref{eq017}).
\end{lem}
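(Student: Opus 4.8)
The plan is to handle the three assertions in turn: the two compact reformulations (\ref{eq014}) and (\ref{eq015}), the equivalence between the boundedness/superattractiveness properties and the spectral radius condition (\ref{eq017}), and finally the existence of an admissible gain $\overline{L}$.

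First I would establish (\ref{eq014}) by substituting the output estimate $\widehat{\overline{\bm{Y}}}_{k}=\overline{C}\,\widehat{\overline{\bm{X}}}_{k}$ into the state equation of (\ref{eq013}) and grouping the $\widehat{\overline{\bm{X}}}_{k}$ terms, then replacing $\overline{\bm{Y}}_{k}$ by $\bm{E}_{k}$ via (\ref{eq011}). For the error recursion (\ref{eq015}), I would subtract (\ref{eq014}) from the state equation of (\ref{eq010}); using $\overline{\bm{Y}}_{k}=\overline{C}\,\overline{\bm{X}}_{k}$ so that $\overline{L}\bm{E}_{k}=\overline{L}\,\overline{C}\,\overline{\bm{X}}_{k}$, the control and output-injection terms cancel and the operator $\overline{A}-\overline{L}\,\overline{C}$ factors out of $\overline{\bm{X}}_{k}-\widehat{\overline{\bm{X}}}_{k}$, leaving the forcing term $\overline{\bm{D}}_{k}$. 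Both steps are pure algebra and should be one or two lines each.

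The core of the lemma is the equivalence. Writing $M=\overline{A}-\overline{L}\,\overline{C}$, the error system (\ref{eq015}) is a linear $k$-recursion driven by $\overline{\bm{D}}_{k}=-F^{\tp}\Delta^{2}\bm{N}_{k}$, whose explicit solution is $\widetilde{\overline{\bm{X}}}_{k}=M^{k}\widetilde{\overline{\bm{X}}}_{0}+\sum_{i=0}^{k-1}M^{k-1-i}\overline{\bm{D}}_{i}$. For sufficiency I would mirror the argument of Theorem \ref{thm1}: assuming (\ref{eq017}), invoke \cite[Lemma 5.6.10]{hj:85} to pick an induced norm with $\left\|M\right\|<1$, and use $\left\|\overline{\bm{D}}_{i}\right\|=\left\|\Delta^{2}\bm{N}_{i}\right\|\leq\beta_{\Delta^{2}\bm{N}}$ (since $\left\|F^{\tp}\right\|=1$) to bound the convolution by $\beta_{\Delta^{2}\bm{N}}/(1-\left\|M\right\|)$, which yields (\ref{eq092}) with $\zeta(\beta_{0},k)=\left\|M\right\|^{k}\beta_{0}$ and $\beta_{0}=\left\|\widetilde{\overline{\bm{X}}}_{0}\right\|$. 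For (\ref{eq016}) I would split the sum at an index $N$ beyond which $\left\|\Delta^{2}\bm{N}_{i}\right\|\leq\beta_{\Delta^{2}\bm{N}}^{ess}+\varepsilon$: the head is bounded by a constant times $\left\|M\right\|^{k-N}$ and decays to zero, while the tail is bounded by $(\beta_{\Delta^{2}\bm{N}}^{ess}+\varepsilon)/(1-\left\|M\right\|)$, so letting $k\to\infty$ and then $\varepsilon\to0$ gives $\chi_{2}(s)=s/(1-\left\|M\right\|)$.

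For necessity I would specialize to an iteration-invariant uncertainty $\bm{N}_{k}\equiv\bm{N}$, so that $\Delta^{2}\bm{N}_{k}=0$ and hence $\overline{\bm{D}}_{k}=0$, reducing (\ref{eq015}) to the autonomous recursion $\widetilde{\overline{\bm{X}}}_{k}=M^{k}\widetilde{\overline{\bm{X}}}_{0}$ with $\beta_{\Delta^{2}\bm{N}}^{ess}=0$; superattractiveness then forces $M^{k}\widetilde{\overline{\bm{X}}}_{0}\to0$, and since $\widetilde{\overline{\bm{X}}}_{0}=\overline{\bm{X}}_{0}-\widehat{\overline{\bm{X}}}_{0}$ is arbitrary this is exactly $\rho(M)<1$. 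Finally, existence of an admissible $\overline{L}$ follows from the observability of $\left(\overline{A},\overline{C}\right)$ established in Lemma \ref{lem2}: standard observer pole assignment lets me place the eigenvalues of $\overline{A}-\overline{L}\,\overline{C}$ anywhere in the open unit disk, so (\ref{eq017}) is always attainable. I expect the superattractiveness estimate---ensuring the steady bound depends only on $\beta_{\Delta^{2}\bm{N}}^{ess}$ rather than the full supremum---to be the most delicate point, though it becomes routine once the convolution sum is split.
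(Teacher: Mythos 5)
Your proposal is correct and follows essentially the same route as the paper's proof: algebraic verification of (\ref{eq014})--(\ref{eq015}), the explicit solution of the error recursion combined with \cite[Lemma 5.6.10]{hj:85} to obtain a contractive induced norm bounding the convolution sum, and the observability of $\left(\overline{A},\overline{C}\right)$ from Lemma \ref{lem2} to guarantee the existence of $\overline{L}$. The only difference is that you spell out two steps the paper leaves implicit---the tail-splitting argument behind the $\limsup$ bound (\ref{eq016}) and the necessity direction via the choice $\bm{N}_{k}\equiv\bm{N}$ (so $\overline{\bm{D}}_{k}=0$) with arbitrary $\widetilde{\overline{\bm{X}}}_{0}$---both of which are sound.
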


\begin{proof}
We can easily validate (\ref{eq014}) and (\ref{eq015}) with (\ref{eq010}) and (\ref{eq013}). In Lemma \ref{lem2}, the observability of the matrix pair $\left(\overline{A},\overline{C}\right)$ is validated, which guarantees the existence of some gain matrix $\overline{L}$ to fulfill (\ref{eq017}). This, together with \cite[Lemma 5.6.10]{hj:85}, leads to $\left\|\overline{A}-\overline{L}\,\overline{C}\right\|<1$ in the sense of some induced matrix norm. In the same way as used in the proof of Theorem \ref{thm1}, we combine (\ref{eq012}) and (\ref{eq015}) to deduce
\[
\widetilde{\overline{\bm{X}}}_{k}
=\left(\overline{A}-\overline{L}\,\overline{C}\right)^{k}\widetilde{\overline{\bm{X}}}_{0}
-\sum_{i=0}^{k-1}\left(\overline{A}-\overline{L}\,\overline{C}\right)^{k-1-i}F^{\tp}\Delta^{2}\bm{N}_{i},\quad\forall k\in\mathbb{Z}_{+}
\]

\noindent which leads to
\[
\left\|\widetilde{\overline{\bm{X}}}_{k}\right\|
\leq\left\|\overline{A}-\overline{L}\,\overline{C}\right\|^{k}\left\|\widetilde{\overline{\bm{X}}}_{0}\right\|
+\frac{\Ds\beta_{\Delta^{2}\bm{N}}}{\Ds1-\left\|\overline{A}-\overline{L}\,\overline{C}\right\|},\quad\forall k\in\mathbb{Z}_{+}
\]

\noindent and
\[\limsup_{k\to\infty}\left\|\widetilde{\overline{\bm{X}}}_{k}\right\|
\leq\frac{\Ds\beta_{\Delta^{2}\bm{N}}^{ess}}{\Ds1-\left\|\overline{A}-\overline{L}\,\overline{C}\right\|}.
\]

\noindent With these two results, this lemma can be obtained.
\end{proof}

From Lemma \ref{lem3}, it follows that the observation error $\widetilde{\overline{\bm{X}}}_{k}$ can always be bounded with a superattractiveness property related to the variation rate $\Delta^{2}\bm{N}_{k}$ of the iteration-varying uncertainty $\bm{N}_{k}$. Particularly, (\ref{eq092}) guarantees $\left\|\widetilde{\overline{\bm{X}}}_{k}\right\|\leq\beta_{\widetilde{\overline{\bm{X}}}}$, $\forall k\in\mathbb{Z}_{+}$ for some unified bound $\beta_{\widetilde{\overline{\bm{X}}}}\geq0$, while (\ref{eq016}) represents a zero-convergent observation error, namely, $\lim_{k\to\infty}\widetilde{\overline{\bm{X}}}_{k}=0$ when the variation of the iteration-varying uncertainty quasi-disappears. This reveals that we may reconstruct the iteration-varying uncertainty based on the observer $k$-state of our designed ESO (\ref{eq014}).

With Lemmas \ref{lem2} and \ref{lem3}, we proceed to design an ESO-based feedback controller. Let a gain matrix be denoted as
\[
\overline{K}\triangleq\begin{bmatrix}K&H\end{bmatrix}\in\mathbb{R}^{m\times2p}
\]

\noindent for some $K\in\mathbb{R}^{m\times p}$ (the same as (\ref{eq05})) and $H\in\mathbb{R}^{m\times p}$, and then the following ESO-based feedback result can be developed for the system (\ref{eq02}).

\begin{lem}\label{lem4}
For the system (\ref{eq02}) with the ESO (\ref{eq014}), if an ESO-based feedback controller is applied as
\begin{equation}\label{eq018}
\overline{\bm{U}}_{k}
=-\overline{K}\,\widehat{\overline{\bm{X}}}_{k}
=-K\widehat{\bm{E}}_{k}-H\widehat{\bm{D}}_{k},\quad\forall k\in\mathbb{Z}_{+}
\end{equation}

\noindent then the (eigenvalue) separation principle is satisfied such that the feedback controller (\ref{eq018}) and the ESO (\ref{eq014}) can be designed separately. Particularly, the feedback gain matrix $K$ in (\ref{eq018}) can be synthesized in the same way with that used in (\ref{eq05}).
\end{lem}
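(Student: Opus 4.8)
The plan is to establish the separation principle in its eigenvalue form: to exhibit coordinates in which the closed-loop dynamics of the extended $k$-state together with the observation error become block upper-triangular, so that the closed-loop spectrum splits into the spectrum of $\overline{A}-\overline{B}\,\overline{K}$ and that of $\overline{A}-\overline{L}\,\overline{C}$. Once this triangular structure is in place, the characteristic polynomial of the combined system factors as the product of the two diagonal blocks, which is precisely the statement that $\overline{K}$ and $\overline{L}$ (hence the feedback controller and the ESO) may be chosen independently.

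First I would carry out the single algebraic reduction that drives everything. Since $\widehat{\overline{\bm{X}}}_{k}=\overline{\bm{X}}_{k}-\widetilde{\overline{\bm{X}}}_{k}$, the control law (\ref{eq018}) reads $\overline{\bm{U}}_{k}=-\overline{K}\,\overline{\bm{X}}_{k}+\overline{K}\,\widetilde{\overline{\bm{X}}}_{k}$. Substituting this into the extended plant (\ref{eq010}) gives $\overline{\bm{X}}_{k+1}=(\overline{A}-\overline{B}\,\overline{K})\overline{\bm{X}}_{k}+\overline{B}\,\overline{K}\,\widetilde{\overline{\bm{X}}}_{k}+\overline{\bm{D}}_{k}$, while the observation error is already autonomous in the form (\ref{eq015}). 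Stacking $\overline{\bm{X}}_{k}$ and $\widetilde{\overline{\bm{X}}}_{k}$ then yields a combined system with state matrix $\begin{bmatrix}\overline{A}-\overline{B}\,\overline{K}&\overline{B}\,\overline{K}\\0&\overline{A}-\overline{L}\,\overline{C}\end{bmatrix}$. The zero $(2,1)$ block is the whole point: the observation error feeds the state but not conversely, so the block-triangular structure holds and the spectrum is the union of the two diagonal blocks, giving the separation principle.

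It then remains to identify what $K$ actually controls. I would compute $\overline{A}-\overline{B}\,\overline{K}$ directly from (\ref{eq011}) and find it equals $\begin{bmatrix}I-PK&I-PH\\0&I\end{bmatrix}$, which is itself block upper-triangular. Hence its characteristic polynomial factors as $\det(\lambda I-(I-PK))\cdot(\lambda-1)^{p}$, so the only movable eigenvalues are those of $I-PK$ --- exactly the matrix whose spectral radius is constrained by (\ref{eq04}) in Theorem \ref{thm1}. This shows $K$ enters in precisely the same way as in the basic feedback (\ref{eq05}) and can be synthesized identically, whereas $H$, appearing only in the $(1,2)$ block, leaves the spectrum untouched and remains free to shape the forced response to $\bm{D}_{k}$.

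The step I expect to require the most careful wording is the persistent eigenvalue $1$ coming from the $(\lambda-1)^{p}$ factor: $\overline{A}-\overline{B}\,\overline{K}$ is never Schur, so the separation principle here cannot be read as closed-loop stability of the full extended $k$-state by feedback alone. I would reconcile this with Lemma \ref{lem2}, where the fixed eigenvalue at $1$ is exactly the uncontrollable $\bm{D}$-mode of the standard controllable decomposition, a mode the ESO is built to observe rather than to control. The genuine attenuation of the tracking error is then inherited from $\rho(I-PK)<1$ together with the bounded and superattractive observation error of Lemma \ref{lem3}, which acts only through the cross term $\overline{B}\,\overline{K}\,\widetilde{\overline{\bm{X}}}_{k}$; making this division of roles explicit is where the care lies.
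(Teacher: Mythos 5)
Your proof is correct, but it takes a genuinely different route from the paper's. The paper stacks the tracking error with the observer state, $\left[\bm{E}_{k}^{\tp}~\widehat{\overline{\bm{X}}}_{k}^{\tp}\right]^{\tp}$, writes the closed loop (\ref{eq019}), and applies the similarity transformation $\begin{bmatrix}I&0\\-\overline{C}^{\tp}&I\end{bmatrix}$ together with the identities $\overline{C}^{\tp}P=\overline{B}$, $\overline{A}\,\overline{C}^{\tp}=\overline{C}^{\tp}$, $\overline{C}\,\overline{C}^{\tp}=I$, $\overline{K}\,\overline{C}^{\tp}=K$ to reach a block upper-triangular matrix whose diagonal blocks are exactly $I-PK$ and $\overline{A}-\overline{L}\,\overline{C}$ --- i.e., precisely the matrices appearing in conditions (\ref{eq04}) and (\ref{eq017}), with no extraneous modes. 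You instead use the textbook Luenberger coordinates $\left[\overline{\bm{X}}_{k}^{\tp}~\widetilde{\overline{\bm{X}}}_{k}^{\tp}\right]^{\tp}$, obtaining diagonal blocks $\overline{A}-\overline{B}\,\overline{K}$ and $\overline{A}-\overline{L}\,\overline{C}$, and then factorize $\overline{A}-\overline{B}\,\overline{K}=\begin{bmatrix}I-PK&I-PH\\0&I\end{bmatrix}$ to isolate the movable spectrum $I-PK$. The price of your coordinates is the $p$-fold eigenvalue at $1$: it arises because your combined state includes the exogenous disturbance component $\bm{D}_{k}$ of $\overline{\bm{X}}_{k}$, which is uncontrollable by Lemma \ref{lem2}; you correctly recognize this, tie it to the standard controllable decomposition, and note that the separation principle here cannot be read as Schur stability of the full extended state --- a caveat the paper's coordinates simply never generate, since they carry only the implementable quantities $\bm{E}_{k}$ and $\widehat{\overline{\bm{X}}}_{k}$. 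What your route buys is conceptual transparency (classical observer-based separation) and an explicit link to the uncontrollability statement of Lemma \ref{lem2}; what the paper's route buys is that its triangular form feeds directly into the stability analysis of Theorem \ref{thm2} (compare (\ref{eq020}) with (\ref{eq023})) without any need to argue away marginal modes. Your identification of the role of $H$ (spectrum-invariant, shaping only the forced response) and of the cross term $\overline{B}\,\overline{K}\,\widetilde{\overline{\bm{X}}}_{k}$ is consistent with the paper's equation (\ref{eq022}).
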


\begin{proof}
From (\ref{eq02}), (\ref{eq014}), and (\ref{eq018}), we can achieve a closed-loop state-space description as
\begin{equation}\label{eq019}
\begin{bmatrix}
\bm{E}_{k+1}\\
\widehat{\overline{\bm{X}}}_{k+1}
\end{bmatrix}
=\begin{bmatrix}
I&-P\overline{K}\\
\overline{L}&\overline{A}-\overline{L}\,\overline{C}-\overline{B}\,\overline{K}
\end{bmatrix}
\begin{bmatrix}
\bm{E}_{k}\\
\widehat{\overline{\bm{X}}}_{k}
\end{bmatrix}
+\begin{bmatrix}I\\0\end{bmatrix}\bm{D}_{k},\quad\forall k\in\mathbb{Z}_{+}.
\end{equation}

\noindent By resorting to a nonsingular linear transformation matrix as
\[
\begin{bmatrix}
I&0\\
-\overline{C}^{\tp}&I
\end{bmatrix}
\]

\noindent and integrating $\overline{C}^{\tp}P=\overline{B}$, $\overline{A}\,\overline{C}^{\tp}=\overline{C}^{\tp}$, $\overline{C}\,\overline{C}^{\tp}=I$, and $\overline{K}\,\overline{C}^{\tp}=K$, we can deduce for the system (\ref{eq019}) that
\begin{equation}\label{eq020}
\aligned
\begin{bmatrix}
I&0\\
-\overline{C}^{\tp}&I
\end{bmatrix}
&\begin{bmatrix}
I&-P\overline{K}\\
\overline{L}&\overline{A}-\overline{L}\,\overline{C}-\overline{B}\,\overline{K}
\end{bmatrix}
\begin{bmatrix}
I&0\\
-\overline{C}^{\tp}&I
\end{bmatrix}^{-1}
=\begin{bmatrix}
I-PK
&-P\overline{K}\\
0
&\overline{A}-\overline{L}\,\overline{C}
\end{bmatrix}.
\endaligned
\end{equation}

\noindent With the upper block-triangular form of (\ref{eq020}), we can establish the separation principle result of this lemma.
\end{proof}

Based on Lemmas \ref{lem2}, \ref{lem3} and \ref{lem4}, we now propose the following theorem to establish an ESO-based design result.

\begin{thm}\label{thm2}
Consider the system (\ref{eq02}) with the ESO (\ref{eq014}), and let the $k$-input be applied in the ESO-based feedback form of (\ref{eq018}). Then the closed-loop system given by (\ref{eq02}), (\ref{eq014}), and (\ref{eq018}) is robustly $k$-stable if and only if the spectral radius conditions (\ref{eq04}) and (\ref{eq017}) both hold. Further, when selecting the gain matrix $H$ as
\begin{equation}\label{eq021}
H=P^{\tp}\left(PP^{\tp}\right)^{-1}
\end{equation}

\noindent the robust $k$-superstability can be achieved if and only if both (\ref{eq04}) and (\ref{eq017}) are satisfied.
\end{thm}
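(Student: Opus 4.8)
The plan is to reduce the closed-loop behavior to a cascade that is driven by the observation error, whose quantitative properties are already supplied by Lemma \ref{lem3}. First I would substitute the ESO-based controller (\ref{eq018}) into (\ref{eq02}). Writing $\widehat{\overline{\bm{X}}}_{k}=\overline{\bm{X}}_{k}-\widetilde{\overline{\bm{X}}}_{k}$ and using $\overline{K}\,\overline{\bm{X}}_{k}=K\bm{E}_{k}+H\bm{D}_{k}$, one obtains the explicit tracking-error recursion
\begin{equation*}
\bm{E}_{k+1}=\left(I-PK\right)\bm{E}_{k}+P\overline{K}\,\widetilde{\overline{\bm{X}}}_{k}+\left(I-PH\right)\bm{D}_{k},\quad\forall k\in\mathbb{Z}_{+}.
\end{equation*}
This exposes the cascade structure: the observation error $\widetilde{\overline{\bm{X}}}_{k}$ evolves autonomously through (\ref{eq015}) and forces $\bm{E}_{k}$, while the residual input $\left(I-PH\right)\bm{D}_{k}=-\left(I-PH\right)\Delta\bm{N}_{k}$ injects the first difference of the uncertainty directly into the error dynamics.

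For the sufficiency of the first claim I would assume both (\ref{eq04}) and (\ref{eq017}). Condition (\ref{eq04}) yields $\left\|I-PK\right\|<1$ in some induced norm via \cite[Lemma 5.6.10]{hj:85}, and Lemma \ref{lem3} guarantees, under (\ref{eq017}), that $\widetilde{\overline{\bm{X}}}_{k}$ obeys the bound (\ref{eq092}) and the superattractiveness (\ref{eq016}), both expressed through $\beta_{\Delta^{2}\bm{N}}$ and $\beta_{\Delta^{2}\bm{N}}^{ess}$. Solving the recursion and applying the geometric-series estimate exactly as in the proof of Theorem \ref{thm1} then bounds $\left\|\bm{E}_{k}\right\|$ by a transient class-$\mathcal{KL}$ term plus contributions from $\left\|P\overline{K}\right\|\,\|\widetilde{\overline{\bm{X}}}_{i}\|$ and from $\left\|I-PH\right\|\,\beta_{\Delta\bm{N}}$. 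Since $\beta_{\Delta^{2}\bm{N}}\leq2\beta_{\Delta\bm{N}}$ and $\beta_{\Delta^{2}\bm{N}}^{ess}\leq2\beta_{\Delta\bm{N}}^{ess}$, each term can be absorbed into class-$\mathcal{K}_{\infty}$ functions of $\beta_{\Delta\bm{N}}$ and $\beta_{\Delta\bm{N}}^{ess}$, giving precisely the boundedness and attractiveness demanded by Definition \ref{def01}.

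The crux of the second claim is the specific gain (\ref{eq021}). Because $P$ must have full row rank for (\ref{eq04}) to be solvable (Lemma \ref{lem1}), the matrix $PP^{\tp}$ is invertible and $H=P^{\tp}\left(PP^{\tp}\right)^{-1}$ gives $PH=I$; hence the term $\left(I-PH\right)\bm{D}_{k}$ vanishes identically. The recursion then collapses to $\bm{E}_{k+1}=\left(I-PK\right)\bm{E}_{k}+P\overline{K}\,\widetilde{\overline{\bm{X}}}_{k}$, whose only forcing is the observation error. Repeating the geometric-series estimate now yields bounds on $\left\|\bm{E}_{k}\right\|$ and on $\limsup_{k\to\infty}\left\|\bm{E}_{k}\right\|$ governed solely by $\beta_{\Delta^{2}\bm{N}}$ and $\beta_{\Delta^{2}\bm{N}}^{ess}$, which is exactly the boundedness and superattractiveness of Definition \ref{def02}. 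This elimination of the first-difference forcing is the single decisive step separating $k$-stability from $k$-superstability.

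For necessity in both claims I would invoke the separation principle of Lemma \ref{lem4}: the block-triangular form (\ref{eq020}) shows that the closed-loop spectrum is the union of the spectra of $I-PK$ and $\overline{A}-\overline{L}\,\overline{C}$. Taking $\bm{N}_{k}$ iteration-invariant makes all differences vanish, so (super)attractiveness forces $\lim_{k\to\infty}\bm{E}_{k}=0$ for every admissible initialization; initializing the observer so that $\widetilde{\overline{\bm{X}}}_{0}=0$ then isolates $\bm{E}_{k+1}=\left(I-PK\right)\bm{E}_{k}$ and forces (\ref{eq04}). The delicate part, and what I expect to be the main obstacle, is the necessity of (\ref{eq017}): if $\rho\left(\overline{A}-\overline{L}\,\overline{C}\right)\geq1$, I would excite the unstable observer mode (by a nonzero $\widetilde{\overline{\bm{X}}}_{0}$ under zero disturbance, or by a bounded $\Delta^{2}\bm{N}_{k}$ resonating with that mode) and argue that the coupling $P\overline{K}$ transmits the non-decaying mode into $\bm{E}_{k}$, contradicting boundedness or attractiveness. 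Confirming that this coupling cannot annihilate the offending mode — equivalently, that the unstable observer dynamics are not hidden from the tracking error — is where the argument requires the most care.
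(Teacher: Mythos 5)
Your sufficiency argument, and your necessity argument for (\ref{eq04}), are correct and essentially the paper's own proof: the paper assembles the joint cascade (\ref{eq023}), observes that (\ref{eq04}) and (\ref{eq017}) make its block upper-triangular system matrix Schur, picks an induced norm satisfying (\ref{eq025}), and runs the same geometric-series estimates; with the choice (\ref{eq021}) the forcing $(I-PH)\Delta\bm{N}_{k}$ vanishes and (\ref{eq023}) collapses to (\ref{eq026}), which is exactly your ``decisive step'' upgrading $k$-stability to $k$-superstability. Whether one solves the two recursions sequentially (your cascade, via Lemma \ref{lem3}) or jointly (the paper's single norm bound) is immaterial, modulo the routine fact that the convolution of two geometric decays is again a class $\mathcal{KL}$ term.

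The genuine gap is precisely the one you flagged, and it cannot be repaired in the form you propose: the coupling $P\overline{K}$ \emph{can} annihilate a non-decaying observer mode. Take $p=m=1$, $P=1$, $K=1$ (so $I-PK=0$ and (\ref{eq04}) holds), $H=-1$, and $\overline{L}=\begin{bmatrix}0\\-1\end{bmatrix}$. Then
\[
\overline{A}-\overline{L}\,\overline{C}=\begin{bmatrix}1&1\\1&1\end{bmatrix},\qquad
\rho\left(\overline{A}-\overline{L}\,\overline{C}\right)=2,
\]
so (\ref{eq017}) fails, yet $P\overline{K}=\begin{bmatrix}1&-1\end{bmatrix}$ satisfies $P\overline{K}\left(\overline{A}-\overline{L}\,\overline{C}\right)=0$. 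Hence, from (\ref{eq023}), $\bm{E}_{k+1}=P\overline{K}\,\widetilde{\overline{\bm{X}}}_{k}-2\Delta\bm{N}_{k}=\Delta^{2}\bm{N}_{k-1}-2\Delta\bm{N}_{k}$ for all $k\geq1$, so that $\left\|\bm{E}_{k}\right\|\leq4\beta_{\Delta\bm{N}}$ for $k\geq2$ and $\limsup_{k\to\infty}\left\|\bm{E}_{k}\right\|\leq4\beta_{\Delta\bm{N}}^{ess}$, for every bounded initialization and uncertainty. The tracking error therefore satisfies both properties of Definition \ref{def01} even though the observer state diverges: robust $k$-stability, read literally as a property of $\bm{E}_{k}$ alone, does \emph{not} imply (\ref{eq017}). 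What saves the paper is that its necessity ``proof'' is a one-line assertion that stability of the joint system (\ref{eq023})/(\ref{eq026}) requires (\ref{eq024}); in other words, it implicitly interprets robust $k$-(super)stability as internal stability of the full closed-loop state $\left(\bm{E}_{k},\widetilde{\overline{\bm{X}}}_{k}\right)$, under which reading necessity is immediate from block-triangularity. To close your proof you must either adopt that internal-stability reading of Definitions \ref{def01} and \ref{def02}, or add a hypothesis excluding the cancellation above (e.g., that every eigenvector of $\overline{A}-\overline{L}\,\overline{C}$ associated with an eigenvalue of modulus at least one is observable through $P\overline{K}$); as stated, the step you identified as ``requiring the most care'' is not merely delicate but false.
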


\begin{proof}
We resort to (\ref{eq03}) and (\ref{eq019}) and can deduce
\begin{equation}\label{eq022}
\aligned
\bm{E}_{k+1}
&=\bm{E}_{k}
-P\overline{K}\,\widehat{\overline{\bm{X}}}_{k}
+\bm{D}_{k}\\
&=\left(I-PK\right)\bm{E}_{k}
+P\overline{K}\,\widetilde{\overline{\bm{X}}}_{k}
-\left(I-PH\right)\Delta\bm{N}_{k},\quad\forall k\in\mathbb{Z}_{+}.
\endaligned
\end{equation}

\noindent By combining (\ref{eq012}) and (\ref{eq015}) with (\ref{eq022}), we can derive
\begin{equation}\label{eq023}
\aligned
\begin{bmatrix}
\bm{E}_{k+1}\\
\widetilde{\overline{\bm{X}}}_{k+1}
\end{bmatrix}
&=\begin{bmatrix}I-PK&P\overline{K}\\
0&\overline{A}-\overline{L}\,\overline{C}\end{bmatrix}
\begin{bmatrix}
\bm{E}_{k}\\
\widetilde{\overline{\bm{X}}}_{k}
\end{bmatrix}
-\begin{bmatrix}I-PH&0\\0&F^{\tp}\end{bmatrix}
\begin{bmatrix}\Delta\bm{N}_{k}\\\Delta^{2}\bm{N}_{k}\end{bmatrix},\quad\forall k\in\mathbb{Z}_{+}.
\endaligned
\end{equation}

\noindent If (\ref{eq04}) and (\ref{eq017}) hold, then for (\ref{eq023}), we have
\begin{equation}\label{eq024}
\rho\left(\begin{bmatrix}I-PK&P\overline{K}\\
0&\overline{A}-\overline{L}\,\overline{C}\end{bmatrix}\right)<1
\end{equation}

\noindent which, together with \cite[Lemma 5.6.10]{hj:85}, implies the existence of some induced matrix norm such that
\begin{equation}\label{eq025}
\left\|\begin{bmatrix}I-PK&P\overline{K}\\
0&\overline{A}-\overline{L}\,\overline{C}\end{bmatrix}\right\|<1.
\end{equation}

\noindent Hence, we consider the system (\ref{eq023}) under the condition (\ref{eq025}), and can employ $\bm{E}_{k}=\left[I~0\right]\left[\bm{E}^{\tp}_{k}~\widetilde{\overline{\bm{X}}}^{\tp}_{k}\right]^{\tp}$ to obtain
\begin{equation*}\label{}
\aligned
\left\|\bm{E}_{k}\right\|
&\leq\left\|\begin{bmatrix}I-PK&P\overline{K}\\0&\overline{A}-\overline{L}\,\overline{C}\end{bmatrix}\right\|^{k}
\left\|\begin{bmatrix}\bm{E}_{0}\\\widetilde{\overline{\bm{X}}}_{0}\end{bmatrix}\right\|
+\frac{\Ds3\beta_{\Delta\bm{N}}\left\|\begin{bmatrix}I-PH&0\\0&F^{\tp}\end{bmatrix}\right\|}
{\Ds1-\left\|\begin{bmatrix}I-PK&P\overline{K}\\0&\overline{A}-\overline{L}\,\overline{C}\end{bmatrix}\right\|}
\endaligned,\quad\forall k\in\mathbb{Z}_{+}
\end{equation*}

\noindent and
\begin{equation*}\label{}
\limsup_{k\to\infty}\left\|\bm{E}_{k}\right\|
\leq\frac{\Ds3\beta_{\Delta\bm{N}}^{ess}\left\|\begin{bmatrix}I-PH&0\\0&F^{\tp}\end{bmatrix}\right\|}
{\Ds1-\left\|\begin{bmatrix}I-PK&P\overline{K}\\0&\overline{A}-\overline{L}\,\overline{C}\end{bmatrix}\right\|}.
\end{equation*}

\noindent Then it follows immediately from Definition \ref{def01} that the robust $k$-stability is realized.

To proceed, if we further consider (\ref{eq021}) that can be ensured under the condition (\ref{eq04}), then (\ref{eq023}) collapses into
\begin{equation}\label{eq026}
\begin{bmatrix}
\bm{E}_{k+1}\\
\widetilde{\overline{\bm{X}}}_{k+1}
\end{bmatrix}
=\begin{bmatrix}I-PK&P\overline{K}\\
0&\overline{A}-\overline{L}\,\overline{C}\end{bmatrix}
\begin{bmatrix}
\bm{E}_{k}\\
\widetilde{\overline{\bm{X}}}_{k}
\end{bmatrix}
-\begin{bmatrix}0\\F^{\tp}\end{bmatrix}
\Delta^{2}\bm{N}_{k},\quad\forall k\in\mathbb{Z}_{+}.
\end{equation}

\noindent For the system (\ref{eq026}) under the condition (\ref{eq025}), we can validate
\begin{equation*}\label{}
\aligned
\left\|\bm{E}_{k}\right\|
&\leq\left\|\begin{bmatrix}I-PK&P\overline{K}\\0&\overline{A}-\overline{L}\,\overline{C}\end{bmatrix}\right\|^{k}
\left\|\begin{bmatrix}\bm{E}_{0}\\\widetilde{\overline{\bm{X}}}_{0}\end{bmatrix}\right\|
+\frac{\Ds\beta_{\Delta^{2}\bm{N}}}{\Ds1-\left\|\begin{bmatrix}I-PK&P\overline{K}\\0&\overline{A}-\overline{L}\,\overline{C}\end{bmatrix}\right\|}
\endaligned,\quad\forall k\in\mathbb{Z}_{+}
\end{equation*}

\noindent and
\begin{equation*}\label{}
\limsup_{k\to\infty}\left\|\bm{E}_{k}\right\|
\leq\frac{\Ds\beta_{\Delta^{2}\bm{N}}^{ess}}{\Ds1-\left\|\begin{bmatrix}I-PK&P\overline{K}\\0&\overline{A}-\overline{L}\,\overline{C}\end{bmatrix}\right\|}.
\end{equation*}

\noindent Consequently, the robust $k$-superstability is achieved according to Definition \ref{def02}.

For the necessity, we can validate from (\ref{eq023}) and (\ref{eq026}) that the robust $k$-stability and $k$-superstability both need the satisfaction of (\ref{eq024}). It is obvious that (\ref{eq024}) holds if and only if (\ref{eq04}) and (\ref{eq017}) both hold.
\end{proof}

\begin{rem}\label{rem03}
In contrast to Theorem \ref{thm1}, Theorem \ref{thm2} reveals that the ESO-based feedback controller can maintain the properties of the basic $k$-state feedback controller in achieving the robust $k$-stability, regardless of iteration-varying uncertainties. It also provides a feasible way to improve the robust $k$-stability results by leveraging the ESO-based design and analysis to realize the robust $k$-superstability. In addition, it is worth emphasizing that two stability conditions (\ref{eq04}) and (\ref{eq017}) of Theorem \ref{thm2} are separate from each other and associated with the gain matrices $K$ and $\overline{L}$, respectively. Though the other gain matrix $H$ is independent of these stability conditions, the selection of $H$ plays an important role in influencing the stability performances.
\end{rem}

From the ESO-based feedback controller (\ref{eq018}), an equivalent updating law can be induced as
\begin{equation}\label{eq027}
\bm{U}_{k+1}=\bm{U}_{k}+K\widehat{\bm{E}}_{k}+H\widehat{\bm{D}}_{k},\quad\forall k\in\mathbb{Z}_{+}.
\end{equation}

\noindent It can be seen from Theorem \ref{thm2} that for the plant (\ref{eq01}) under the ESO-based updating law (\ref{eq027}), not only can the robust tracking objective be realized in spite of iteration-varying uncertainties, but also the tracking error can be ensured to decrease to a small bound. In contrast to the use of (\ref{eq09}), the ESO-based information helps (\ref{eq027}) to overcome the drawback caused by the continuous dependence of the tracking error bound on the variation bound of the iteration-varying uncertainty. It is worth noticing that the perfect tracking of the desired output target can be realized by the output of the plant (\ref{eq01}) under the ESO-based updating law (\ref{eq027}) although the variation of the iteration-varying uncertainty is ensured to not disappear, but only quasi-disappear.

\subsection{Improvement of ESO-Based Design}

The basic design result of Theorem \ref{thm1} takes full advantage of the available $k$-state information that, however, is not leveraged by the ESO-based design result of Theorem \ref{thm2}. Of specific use in the feedback design is only the ESO information in Theorem \ref{thm2}. Next, we benefit from that the information of $\bm{E}_{k}$ is available and try to propose an alternative design method for ESO-based data-driven learning. Towards this end, we show the following lemma to provide a fundamental guarantee with the separation principle between the designs of ESO and feedback controller of the system (\ref{eq02}).

\begin{lem}\label{lem5}
For the system (\ref{eq02}), let the ESO (\ref{eq014}) and an ESO-based feedback controller be applied as
\begin{equation}\label{eq028}
\overline{\bm{U}}_{k}
=-K\bm{E}_{k}-H\widehat{\bm{D}}_{k},\quad\forall k\in\mathbb{Z}_{+}.
\end{equation}

\noindent Then the same result of the (eigenvalue) separation principle as Lemma \ref{lem4} also works between the feedback controller (\ref{eq028}) and the ESO (\ref{eq014}), and further the syntheses of three gain matrices $K$, $H$, and $\overline{L}$ are separate from each other.
\end{lem}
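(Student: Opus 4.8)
The plan is to rewrite the closed-loop system in the coordinates $\left(\bm{E}_{k},\widetilde{\overline{\bm{X}}}_{k}\right)$ rather than $\left(\bm{E}_{k},\widehat{\overline{\bm{X}}}_{k}\right)$, and to exploit the fact that the observation-error dynamics (\ref{eq015}) established in Lemma \ref{lem3} are completely independent of the applied $k$-input. Since $\widehat{\bm{D}}_{k}=\bm{D}_{k}-\widetilde{\bm{D}}_{k}$ and $\widetilde{\bm{D}}_{k}=F\widetilde{\overline{\bm{X}}}_{k}$, the feedback controller (\ref{eq028}) can be written as $\overline{\bm{U}}_{k}=-K\bm{E}_{k}-H\bm{D}_{k}+HF\widetilde{\overline{\bm{X}}}_{k}$. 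First I would substitute this into (\ref{eq02}) together with $\bm{D}_{k}=-\Delta\bm{N}_{k}$ to obtain the $\bm{E}_{k}$-recursion, and then couple it with (\ref{eq015}) and (\ref{eq012}).

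Assembling these recursions, I expect the closed-loop description to take the form
\begin{equation*}
\begin{bmatrix}\bm{E}_{k+1}\\\widetilde{\overline{\bm{X}}}_{k+1}\end{bmatrix}
=\begin{bmatrix}I-PK&PHF\\0&\overline{A}-\overline{L}\,\overline{C}\end{bmatrix}
\begin{bmatrix}\bm{E}_{k}\\\widetilde{\overline{\bm{X}}}_{k}\end{bmatrix}
-\begin{bmatrix}I-PH&0\\0&F^{\tp}\end{bmatrix}
\begin{bmatrix}\Delta\bm{N}_{k}\\\Delta^{2}\bm{N}_{k}\end{bmatrix},\quad\forall k\in\mathbb{Z}_{+}.
\end{equation*}
The decisive point is that the $(2,1)$ block vanishes automatically: because the true $\bm{E}_{k}$ (not its estimate) is fed back, the term $\overline{B}\,\overline{\bm{U}}_{k}$ cancels exactly in the observation-error equation, so no auxiliary coordinate change of the type used in Lemma \ref{lem4} is needed here and the block-triangular structure is immediate. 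Its system matrix is block upper-triangular with the same diagonal blocks $I-PK$ and $\overline{A}-\overline{L}\,\overline{C}$ as appear in the proof of Lemma \ref{lem4}, so its spectrum is the union of the two diagonal spectra and $\rho$ equals $\max\left\{\rho\left(I-PK\right),\rho\left(\overline{A}-\overline{L}\,\overline{C}\right)\right\}$. This establishes the same eigenvalue separation principle, whereby $K$ is synthesized through $\rho\left(I-PK\right)$ exactly as in (\ref{eq05}) and $\overline{L}$ through $\rho\left(\overline{A}-\overline{L}\,\overline{C}\right)$ as in Lemma \ref{lem3}.

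To obtain the additional claim that all three gain matrices $K$, $H$, and $\overline{L}$ are separate, I would observe that $H$ enters the system matrix only through the off-diagonal block $PHF$ and is therefore absent from both diagonal blocks. Consequently $H$ influences neither $\rho\left(I-PK\right)$ nor $\rho\left(\overline{A}-\overline{L}\,\overline{C}\right)$, so it can be chosen freely and independently of the two stability conditions (\ref{eq04}) and (\ref{eq017}), and hence independently of $K$ and $\overline{L}$.

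The main obstacle I anticipate is conceptual rather than computational: one must verify that the observation-error recursion remains genuinely controller-free under the new feedback (\ref{eq028}), i.e., that replacing $\widehat{\bm{E}}_{k}$ by the true $\bm{E}_{k}$ does not reintroduce any dependence of $\widetilde{\overline{\bm{X}}}_{k+1}$ on $\bm{E}_{k}$ beyond what (\ref{eq015}) already prescribes, and then to argue carefully that the mere off-diagonal appearance of $H$ is enough to upgrade the two-way separation of Lemma \ref{lem4} into a genuine three-way separation of $K$, $H$, and $\overline{L}$.
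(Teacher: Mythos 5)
Your proof is correct, but it takes a genuinely different route from the paper's. The paper forms the closed loop in the observer coordinates $\left(\bm{E}_{k},\widehat{\overline{\bm{X}}}_{k}\right)$, namely (\ref{eq029}), and then performs the explicit similarity transformation with $\begin{bmatrix}I&0\\-\overline{C}^{\tp}&I\end{bmatrix}$, using the identities $\overline{C}^{\tp}P=\overline{B}$, $\overline{A}\,\overline{C}^{\tp}=\overline{C}^{\tp}$, $\overline{C}\,\overline{C}^{\tp}=I$, and $F\overline{C}^{\tp}=0$, to obtain the upper block-triangular matrix in (\ref{eq030}); this is a pure state similarity, so the spectral (separation) claim about (\ref{eq029}) is immediate. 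You instead assemble the loop in the observation-error coordinates $\left(\bm{E}_{k},\widetilde{\overline{\bm{X}}}_{k}\right)$, leaning on the input-independence of (\ref{eq015}); the system you write down is exactly (\ref{eq033}), which the paper itself derives only later, inside the proof of Theorem \ref{thm3}. Your route is legitimate and arguably more economical, since it reuses Lemma \ref{lem3} and avoids the matrix computation, but two points deserve care. First, because $\widetilde{\bm{D}}_{k}=\bm{D}_{k}-\widehat{\bm{D}}_{k}$ contains the exogenous disturbance, your change of variables is affine with a disturbance-dependent offset; to conclude that the actual closed-loop matrix in (\ref{eq029}) has the spectrum you read off, one must note that the linear part of this map, $\begin{bmatrix}I&0\\\overline{C}^{\tp}&-I\end{bmatrix}$, is constant and nonsingular, so the homogeneous parts are similar --- the one step you leave implicit, and precisely what the paper's transformation makes explicit. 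Second, your ``decisive point'' is misattributed: the vanishing $(2,1)$ block is a property of the ESO alone, since $\overline{B}\,\overline{\bm{U}}_{k}$ cancels in (\ref{eq015}) for \emph{any} input, and the same triangular structure in error coordinates also holds for the Lemma \ref{lem4} controller (\ref{eq018}) --- see (\ref{eq023}); what feeding back the true $\bm{E}_{k}$ buys is not triangularity but the removal of the $PK\widetilde{\bm{E}}_{k}$ coupling (compare (\ref{eq035}) with (\ref{eq036})). Neither point invalidates your argument: the diagonal blocks are $I-PK$ and $\overline{A}-\overline{L}\,\overline{C}$, the gain $H$ appears only off the diagonal, and the three-way separation of $K$, $H$, and $\overline{L}$ follows as you state.
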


\begin{proof}
Due to $\widehat{\bm{D}}_{k}=F\widehat{\overline{\bm{X}}}_{k}$ and by integrating (\ref{eq02}), (\ref{eq014}), and (\ref{eq028}), we can obtain a closed-loop state-space description as
\begin{equation}\label{eq029}
\begin{bmatrix}
\bm{E}_{k+1}\\
\widehat{\overline{\bm{X}}}_{k+1}
\end{bmatrix}
=\begin{bmatrix}
I-PK&-PHF\\
\overline{L}-\overline{B}K&\overline{A}-\overline{L}\,\overline{C}-\overline{B}HF
\end{bmatrix}
\begin{bmatrix}
\bm{E}_{k}\\
\widehat{\overline{\bm{X}}}_{k}
\end{bmatrix}
+\begin{bmatrix}I\\0\end{bmatrix}\bm{D}_{k},\quad\forall k\in\mathbb{Z}_{+}.
\end{equation}

\noindent With $\overline{C}^{\tp}P=\overline{B}$, $\overline{A}\,\overline{C}^{\tp}=\overline{C}^{\tp}$, $\overline{C}\,\overline{C}^{\tp}=I$, and $F\overline{C}^{\tp}=0$, we perform the same nonsingular linear transformation as implemented in (\ref{eq020}) for the system (\ref{eq029}), 
%
and then we can validate
\begin{equation}\label{eq030}
\aligned
\begin{bmatrix}
I&0\\
-\overline{C}^{\tp}&I
\end{bmatrix}
&\begin{bmatrix}
I-PK&-PHF\\
\overline{L}-\overline{B}K&\overline{A}-\overline{L}\,\overline{C}-\overline{B}HF
\end{bmatrix}
\begin{bmatrix}
I&0\\
-\overline{C}^{\tp}&I
\end{bmatrix}^{-1}
=\begin{bmatrix}
I-PK
&-PHF\\
0
&\overline{A}-\overline{L}\,\overline{C}
\end{bmatrix}.
\endaligned
\end{equation}

\noindent We can obviously accomplish the separation principle between the syntheses of (\ref{eq014}) and (\ref{eq028}) from the upper block-triangular form of (\ref{eq030}). In particular, (\ref{eq030}) implies that $K$, $H$, and $\overline{L}$ can be synthesized separately from each other.
\end{proof}

Motivated by the separation result of Lemma \ref{lem5}, we propose the following ESO-based design result by applying (\ref{eq028}) to the system (\ref{eq02}).

\begin{thm}\label{thm3}
Consider the system (\ref{eq02}) with the ESO (\ref{eq014}), and let the $k$-input be applied in the ESO-based feedback form of (\ref{eq028}). Then the closed-loop system given by (\ref{eq02}), (\ref{eq014}), and (\ref{eq028}) is robustly $k$-stable if and only if the spectral radius conditions (\ref{eq04}) and (\ref{eq017}) both hold. Moreover, when adopting the selection candidate (\ref{eq021}) of $H$, the robust $k$-superstability can be achieved if and only if both (\ref{eq04}) and (\ref{eq017}) are satisfied. In particular, there exists some class $\mathcal{K}_{\infty}$ function $\chi$ such that
\begin{equation}\label{eq031}
\limsup_{k\to\infty}\left\|\bm{E}_{k}\right\|
\leq\chi\left(\limsup_{k\to\infty}\left\|\widetilde{\bm{D}}_{k}\right\|\right).
\end{equation}
\end{thm}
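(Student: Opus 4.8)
The plan is to mirror the template of Theorem~\ref{thm2}, but now exploiting that the \emph{true} $k$-state $\bm{E}_{k}$ is fed back in (\ref{eq028}), so that the $\bm{E}_{k}$-channel is forced only by the disturbance-estimation error $\widetilde{\bm{D}}_{k}$ rather than by the full $\widetilde{\overline{\bm{X}}}_{k}$. First I would derive the closed-loop dynamics in the coordinates $[\bm{E}_{k}^{\tp}~\widetilde{\overline{\bm{X}}}_{k}^{\tp}]^{\tp}$. Substituting (\ref{eq028}) into (\ref{eq02}) and using $\widehat{\bm{D}}_{k}=\bm{D}_{k}-\widetilde{\bm{D}}_{k}$ together with $\widetilde{\bm{D}}_{k}=F\widetilde{\overline{\bm{X}}}_{k}$ and $\bm{D}_{k}=-\Delta\bm{N}_{k}$ gives
\begin{equation*}
\bm{E}_{k+1}=\left(I-PK\right)\bm{E}_{k}+PHF\widetilde{\overline{\bm{X}}}_{k}-\left(I-PH\right)\Delta\bm{N}_{k},
\end{equation*}
which, combined with the error system (\ref{eq015}) and the identity (\ref{eq012}), yields the augmented description
\begin{equation*}
\begin{bmatrix}\bm{E}_{k+1}\\\widetilde{\overline{\bm{X}}}_{k+1}\end{bmatrix}
=\begin{bmatrix}I-PK&PHF\\0&\overline{A}-\overline{L}\,\overline{C}\end{bmatrix}
\begin{bmatrix}\bm{E}_{k}\\\widetilde{\overline{\bm{X}}}_{k}\end{bmatrix}
-\begin{bmatrix}I-PH&0\\0&F^{\tp}\end{bmatrix}
\begin{bmatrix}\Delta\bm{N}_{k}\\\Delta^{2}\bm{N}_{k}\end{bmatrix}.
\end{equation*}
This is the exact analogue of (\ref{eq023}), with the off-diagonal block $P\overline{K}$ replaced by $PHF$; since it is block upper-triangular, its spectral radius equals $\max\{\rho(I-PK),\rho(\overline{A}-\overline{L}\,\overline{C})\}$, so condition (\ref{eq024}) holds if and only if both (\ref{eq04}) and (\ref{eq017}) hold. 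This is precisely the separation already guaranteed by Lemma~\ref{lem5}.

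For robust $k$-stability and $k$-superstability I would then repeat the norm estimates of Theorem~\ref{thm2} verbatim. Under (\ref{eq04}) and (\ref{eq017}), \cite[Lemma 5.6.10]{hj:85} supplies an induced norm in which (\ref{eq025}) holds; extracting $\bm{E}_{k}=[I~0][\bm{E}_{k}^{\tp}~\widetilde{\overline{\bm{X}}}_{k}^{\tp}]^{\tp}$ from the explicit solution produces the boundedness and attractiveness bounds in terms of $\beta_{\Delta\bm{N}}$ and $\beta_{\Delta\bm{N}}^{ess}$, establishing robust $k$-stability via Definition~\ref{def01}. Adopting the candidate (\ref{eq021}) forces $PH=I$, so $I-PH=0$ annihilates the $\Delta\bm{N}_{k}$ forcing term; the system collapses to the analogue of (\ref{eq026}) driven only by $\Delta^{2}\bm{N}_{k}$, and the same estimates, now in terms of $\beta_{\Delta^{2}\bm{N}}$ and $\beta_{\Delta^{2}\bm{N}}^{ess}$, deliver robust $k$-superstability via Definition~\ref{def02}. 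Necessity in both cases reduces, exactly as in Theorem~\ref{thm2}, to the requirement (\ref{eq024}), equivalent to (\ref{eq04}) and (\ref{eq017}).

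The genuinely new ingredient is the bound (\ref{eq031}), and this is where feeding back the available $\bm{E}_{k}$ pays off. With $PH=I$, the first block-row reduces to the first-order recursion
\begin{equation*}
\bm{E}_{k+1}=\left(I-PK\right)\bm{E}_{k}+\widetilde{\bm{D}}_{k},\quad\forall k\in\mathbb{Z}_{+},
\end{equation*}
so that $\bm{E}_{k}$ is forced \emph{solely} by the disturbance-estimation error $\widetilde{\bm{D}}_{k}$, not by $\Delta^{2}\bm{N}_{k}$. Solving explicitly gives $\bm{E}_{k}=(I-PK)^{k}\bm{E}_{0}+\sum_{i=0}^{k-1}(I-PK)^{k-1-i}\widetilde{\bm{D}}_{i}$, and I would finish by bounding the $\limsup$ of this stable convolution by the $\limsup$ of its input: in the induced norm with $\|I-PK\|<1$, for every $\varepsilon>0$ one eventually has $\|\widetilde{\bm{D}}_{i}\|\leq\limsup_{k\to\infty}\|\widetilde{\bm{D}}_{k}\|+\varepsilon$, the finitely many early terms are killed by the geometric factor $\|I-PK\|^{k-1-i}\to0$, and the remaining tail is summed geometrically. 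Letting $\varepsilon\to0$ yields (\ref{eq031}) with the class $\mathcal{K}_{\infty}$ function $\chi(s)=s/(1-\|I-PK\|)$.

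The main obstacle I anticipate is not the stability dichotomy, which is a direct transcription of Theorem~\ref{thm2} through Lemma~\ref{lem5}, but the $\limsup$-to-$\limsup$ estimate underlying (\ref{eq031}). Its subtlety is that $\widetilde{\bm{D}}_{i}$ need not converge, so one cannot pass to a limit; the argument must proceed with the $\limsup$ directly via the split-sum/geometric-tail device above, and one must verify that the transient contribution of the non-steady early terms genuinely vanishes. The conceptual payoff worth emphasizing is that using the true $\bm{E}_{k}$ (rather than its estimate $\widehat{\bm{E}}_{k}$ as in Theorem~\ref{thm2}) is exactly what severs the tracking error from $\widetilde{\bm{E}}_{k}$ and ties its steady behaviour directly to the ESO disturbance-reconstruction accuracy $\widetilde{\bm{D}}_{k}$.
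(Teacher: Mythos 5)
Your proposal is correct and follows essentially the same route as the paper's proof: the same augmented dynamics in the coordinates $[\bm{E}_{k}^{\tp}~\widetilde{\overline{\bm{X}}}_{k}^{\tp}]^{\tp}$ (the paper's (\ref{eq033}) and, under $PH=I$, (\ref{eq034})), the same reduction to the first-order recursion (\ref{eq035}), and the same induced-norm geometric estimates yielding (\ref{eq031}) with $\chi(s)=s/(1-\|I-PK\|)$. The only difference is that you spell out the split-sum/geometric-tail argument for the $\limsup$-to-$\limsup$ bound, which the paper leaves implicit.
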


\begin{proof}
The necessity of this theorem is the same as that of Theorem \ref{thm2}. Next, we show the proof of the sufficiency results of this theorem. We further explore (\ref{eq029}) to obtain
\begin{equation}\label{eq032}
\aligned
\bm{E}_{k+1}
&=\left(I-PK\right)\bm{E}_{k}
-PHF\widehat{\overline{\bm{X}}}_{k}+\bm{D}_{k}\\
&=\left(I-PK\right)\bm{E}_{k}
+PHF\widetilde{\overline{\bm{X}}}_{k}
+\left(I-PH\right)\bm{D}_{k}\\
&=\left(I-PK\right)\bm{E}_{k}
+PHF\widetilde{\overline{\bm{X}}}_{k}
-\left(I-PH\right)\Delta\bm{N}_{k},\quad\forall k\in\mathbb{Z}_{+}
\endaligned
\end{equation}

\noindent which, together with (\ref{eq012}) and (\ref{eq015}), yields
\begin{equation}\label{eq033}
\aligned
\begin{bmatrix}
\bm{E}_{k+1}\\
\widetilde{\overline{\bm{X}}}_{k+1}
\end{bmatrix}
&=\begin{bmatrix}I-PK&PHF\\
0&\overline{A}-\overline{L}\,\overline{C}\end{bmatrix}
\begin{bmatrix}
\bm{E}_{k}\\
\widetilde{\overline{\bm{X}}}_{k}
\end{bmatrix}
-\begin{bmatrix}I-PH&0\\0&F^{\tp}\end{bmatrix}
\begin{bmatrix}\Delta\bm{N}_{k}\\\Delta^{2}\bm{N}_{k}\end{bmatrix},\quad\forall k\in\mathbb{Z}_{+}.
\endaligned
\end{equation}

\noindent For (\ref{eq033}) under both conditions (\ref{eq04}) and (\ref{eq017}), we can accomplish the robust $k$-stability by following the same steps as the proof of the robust $k$-stability in Theorem \ref{thm2}.

When using the selection of $H$ in (\ref{eq021}) that can be guaranteed by the condition (\ref{eq04}), we have $PH=I$, and thus (\ref{eq033}) becomes
\begin{equation}\label{eq034}
\begin{bmatrix}
\bm{E}_{k+1}\\
\widetilde{\overline{\bm{X}}}_{k+1}
\end{bmatrix}
=\begin{bmatrix}I-PK&F\\
0&\overline{A}-\overline{L}\,\overline{C}\end{bmatrix}
\begin{bmatrix}
\bm{E}_{k}\\
\widetilde{\overline{\bm{X}}}_{k}
\end{bmatrix}
-\begin{bmatrix}0\\F^{\tp}\end{bmatrix}
\Delta^{2}\bm{N}_{k},\quad\forall k\in\mathbb{Z}_{+}.
\end{equation}

\noindent Similarly to the proof of Theorem \ref{thm2}, the robust $k$-superstability can be established for (\ref{eq034}) when both conditions (\ref{eq04}) and (\ref{eq017}) are satisfied. In particular, the use of (\ref{eq021}) to (\ref{eq032}) yields
\begin{equation}\label{eq035}
\bm{E}_{k+1}
=\left(I-PK\right)\bm{E}_{k}
+\widetilde{\bm{D}}_{k},\quad\forall k\in\mathbb{Z}_{+}.
\end{equation}

\noindent Because we can deduce $\left\|I-PK\right\|<1$ for some induced matrix norm under the condition (\ref{eq04}), we exploit (\ref{eq035}) to arrive at
\[
\limsup_{k\to\infty}\left\|\bm{E}_{k}\right\|
\leq\frac{\Ds\limsup_{k\to\infty}\left\|\widetilde{\bm{D}}_{k}\right\|}{\Ds1-\left\|I-PK\right\|}
\]

\noindent from which (\ref{eq031}) is immediate.
\end{proof}

\begin{rem}\label{rem04}
In contrast with (\ref{eq05}), (\ref{eq028}) makes an improvement to the $k$-state feedback by leveraging the ESO-based estimation information of the iteration-varying uncertainty. This provides a possible way to strengthen the robust $k$-stability to the robust $k$-superstability, as disclosed in Theorem \ref{thm3}. Further, it follows clearly from (\ref{eq031}) that the tracking error depends continuously on the observation error regarding the variation of the iteration-varying uncertainty and vanishes especially when this variation quasi-disappears. The comparison of the ESO-based feedbacks (\ref{eq018}) and (\ref{eq028}) implies that (\ref{eq028}) not only can maintain the same stability results as (\ref{eq018}), but leads to an additional attractiveness result (\ref{eq031}). Note that with the use of (\ref{eq018}), we can employ (\ref{eq021}) and (\ref{eq026}) to only describe the tracking error as
\begin{equation}\label{eq036}
\bm{E}_{k+1}
=\left(I-PK\right)\bm{E}_{k}
+\widetilde{\bm{D}}_{k}
+PK\widetilde{\bm{E}}_{k},\quad\forall k\in\mathbb{Z}_{+}.
\end{equation}

\noindent By comparing (\ref{eq036}) with (\ref{eq035}), we can easily see that the use of (\ref{eq028}) rather than (\ref{eq018}) removes the influence of the observation error $\widetilde{\bm{E}}_{k}$ on the tracking error.
\end{rem}
\begin{rem}\label{rem05}
It is worth highlighting that for Theorems \ref{thm2} and \ref{thm3}, the syntheses of three gain matrices $\overline{L}$, $K$, and $H$ are separate from each other. The synthesis of the ESO (\ref{eq014}) resorts only to determining $\overline{L}$, for which a necessary and sufficient guarantee is provided by (\ref{eq017}). This implementation is independent of the plant (\ref{eq01}) based on (\ref{eq011}). By contrast, the synthesis of feedback controllers is tied closely to the plant (\ref{eq01}), of which (\ref{eq04}) provides a necessary and sufficient condition for the selection of $K$, and (\ref{eq021}) is a desired selection candidate of $H$.
\end{rem}

For the robust tracking problem of the plant (\ref{eq01}), an updating law can be equivalently derived from the ESO-based feedback controller (\ref{eq028}) as
\begin{equation}\label{eq037}
\bm{U}_{k+1}=\bm{U}_{k}+K\bm{E}_{k}+H\widehat{\bm{D}}_{k},\quad\forall k\in\mathbb{Z}_{+}.
\end{equation}

\noindent It can be seen that the $k$-state $\bm{E}_{k}$ is leveraged in (\ref{eq037}), instead of employing its estimation $\widehat{\bm{E}}_{k}$ as in (\ref{eq027}). This not only maintains the same robust tracking results that are described by the robust $k$-stability and $k$-superstability, but also further strengthens the tracking performance of the plant (\ref{eq01}) by rendering the tracking error dependent only on the observation error for the variation of the iteration-varying uncertainty, as is reflected by (\ref{eq031}). In addition, (\ref{eq037}) betters the tracking performance of (\ref{eq09}) by taking advantage of the ESO-based estimation information about the variation of the iteration-varying uncertainty.

\section{Robust Design Under Model Uncertainties}\label{sec4}

In this section, we establish the robust design of the Kalman state-space framework for data-driven learning such that it can simultaneously accommodate the unknown model uncertainty of the plant (\ref{eq01}) when addressing the unknown iteration-varying external uncertainty. Since $P_{\delta}\neq0$ emerges, challenging issues naturally arise. In particular, the nominal design results derived in the previous Section \ref{sec3} may no longer work because of their heavy dependence upon the model information of the plant (\ref{eq01}). Take for example the ESO (\ref{eq014}) that uses the exact information of $P$, and so does the ESO-based feedback controller design of (\ref{eq018}) and (\ref{eq028}) especially when adopting the selection candidate of $H$ in (\ref{eq021}). To deal with these issues resulting from the model uncertainty of (\ref{eq01}), we without loss of generality aim at how to design the ESO-based feedback controller (\ref{eq028}) in the presence of iteration-varying uncertainties.

\subsection{Robustness Analysis Against Model Uncertainties}

We first remove the effect of the model uncertainty $P_{\delta}$ on the ESO (\ref{eq014}). From (\ref{eq011}), we correspondingly denote $\overline{B}=\overline{B}_{0}+\overline{B}_{\delta}$, where $\overline{B}_{0}$ and $\overline{B}_{\delta}$ fulfill
\[
\overline{B}_{0}=
\begin{bmatrix}
P_{0}\\
0\\
\end{bmatrix},\quad
\overline{B}_{\delta}=
\begin{bmatrix}
P_{\delta}\\
0\\
\end{bmatrix}.
\]

\noindent Then a nominal ESO is presented with the following Kalman state-space description instead of (\ref{eq014}):
\begin{equation}\label{eq038}
\widehat{\overline{\bm{X}}}_{k+1}
=\left(\overline{A}-\overline{L}\,\overline{C}\right)\widehat{\overline{\bm{X}}}_{k}
+\overline{B}_{0}\overline{\bm{U}}_{k}+\overline{L}\bm{E}_{k},\quad\forall k\in\mathbb{Z}_{+}
\end{equation}

\noindent which is realizable with the use of known matrices or matrices to be determined. Based on the ESO (\ref{eq038}), the application of the ESO-based feedback controller (\ref{eq028}) to the system (\ref{eq02}) produces a closed-loop state-space description as
\begin{equation}\label{eq039}
\aligned
\begin{bmatrix}
\bm{E}_{k+1}\\
\widehat{\overline{\bm{X}}}_{k+1}
\end{bmatrix}
&=\begin{bmatrix}
I-PK&-PHF\\
\overline{L}-\overline{B}_{0}K&\overline{A}-\overline{L}\,\overline{C}-\overline{B}_{0}HF
\end{bmatrix}
\begin{bmatrix}
\bm{E}_{k}\\
\widehat{\overline{\bm{X}}}_{k}
\end{bmatrix}
-\begin{bmatrix}I\\0\end{bmatrix}\Delta\bm{N}_{k},\quad\forall k\in\mathbb{Z}_{+}.
\endaligned
\end{equation}

By inserting $\widehat{\overline{\bm{X}}}_{k}=\left[\widehat{\bm{E}}^{\tp}_{k}~\widehat{\bm{D}}^{\tp}_{k}\right]^{\tp}$, we can arrive at a nonsingular linear transformation of (\ref{eq039}) as follows:
\[
\begin{bmatrix}
\bm{E}_{k}\\
\begin{bmatrix}-\widetilde{\bm{E}}_{k}\\
\widehat{\bm{D}}_{k}\end{bmatrix}
\end{bmatrix}=
\begin{bmatrix}
I&0\\
-\overline{C}^{\tp}&I
\end{bmatrix}\begin{bmatrix}
\bm{E}_{k}\\
\widehat{\overline{\bm{X}}}_{k}
\end{bmatrix},\quad\forall k\in\mathbb{Z}_{+}
\]

\noindent with which we can equivalently transform (\ref{eq039}) into
\begin{equation}\label{eq040}
\aligned
\begin{bmatrix}
\bm{E}_{k+1}\\
\begin{bmatrix}-\widetilde{\bm{E}}_{k+1}\\
\widehat{\bm{D}}_{k+1}\end{bmatrix}
\end{bmatrix}
&=\begin{bmatrix}
I-PK&-PHF\\
\overline{C}^{\tp}P_{\delta}K&\overline{A}-\overline{L}\,\overline{C}+\overline{C}^{\tp}P_{\delta}HF
\end{bmatrix}
\begin{bmatrix}
\bm{E}_{k}\\
\begin{bmatrix}-\widetilde{\bm{E}}_{k}\\
\widehat{\bm{D}}_{k}\end{bmatrix}
\end{bmatrix}
-\begin{bmatrix}I\\-\overline{C}^{\tp}\end{bmatrix}\Delta\bm{N}_{k},\quad\forall k\in\mathbb{Z}_{+}.
\endaligned
\end{equation}

\noindent From (\ref{eq040}), it is obvious that the separation principle no longer works in the presence of model uncertainties, and that the same separation result with (\ref{eq030}) holds for any ESO-based feedback controller (\ref{eq028}) if and only if $P_{\delta}=0$ (i.e., there exist no model uncertainties). Despite this issue, we can present a robust ESO-based design in the following theorem.

\begin{thm}\label{thm4}
Consider the system (\ref{eq02}) with the ESO (\ref{eq038}), and let the $k$-input be applied in the ESO-based feedback form of (\ref{eq028}). Then the closed-loop system given by (\ref{eq02}), (\ref{eq028}), and (\ref{eq038}) is robustly $k$-stable if and only if
\begin{equation}\label{eq041}
\rho\left(\begin{bmatrix}
I-PK&-PHF\\
\overline{C}^{\tp}P_{\delta}K&\overline{A}-\overline{L}\,\overline{C}+\overline{C}^{\tp}P_{\delta}HF
\end{bmatrix}\right)<1.
\end{equation}

\noindent If there exist some known matrices $\Phi_{1}\in\mathbb{R}^{p\times q}$ and $\Phi_{2}\in\mathbb{R}^{r\times m}$ and unknown matrix $\Sigma\in\mathbb{R}^{q\times r}$ such that the model uncertainty $P_{\delta}$ takes a structured form of
\begin{equation}\label{eq042}
P_{\delta}=\Phi_{1}\Sigma\Phi_{2}\quad\hbox{with}\quad\Sigma^{\tp}\Sigma\leq I
\end{equation}

\noindent then (\ref{eq041}) holds, provided there exist a positive-definite matrix $Q>0$ in the form of
\begin{equation}\label{eq043}
Q=\begin{bmatrix}
Q_{11}&(\star)\\
Q_{21}&Q_{22}
\end{bmatrix}
~\hbox{with}~
\left\{\aligned
Q_{11}&\in\mathbb{R}^{p\times p}\\
Q_{21}&\in\mathbb{R}^{2p\times p}\\
Q_{22}&\in\mathbb{R}^{2p\times2p}
\endaligned\right.
\end{equation}

\noindent and a positive scalar $\tau>0$ such that
\begin{equation}\label{eq044}
\aligned
&\left[\begin{array}{cc}
-Q_{11}&(\star)\\
-Q_{21}&-Q_{22}\\
Q_{11}(I-P_{0}K)&Q_{21}^{\tp}(\overline{A}-\overline{L}\,\overline{C})-Q_{11}P_{0}HF\\
Q_{21}(I-P_{0}K)&Q_{22}(\overline{A}-\overline{L}\,\overline{C})-Q_{21}P_{0}HF\\
\tau\Phi_{2}K&\tau\Phi_{2}HF\\
0&0
\end{array}\right.\\
&\left.\begin{array}{cccc}
(\star)&(\star)&(\star)&(\star)\\
(\star)&(\star)&(\star)&(\star)\\
-Q_{11}&(\star)&(\star)&(\star)\\
-Q_{21}&-Q_{22}&(\star)&(\star)\\
0&0&-\tau I&(\star)\\
\Phi_{1}^{\tp}\left(\overline{C}Q_{21}-Q_{11}\right)&\Phi_{1}^{\tp}\left(\overline{C}Q_{22}-Q_{21}^{\tp}\right)&0&-\tau I
\end{array}\right]<0.
\endaligned
\end{equation}
\end{thm}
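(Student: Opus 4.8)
The plan is to handle the two assertions in turn: the equivalence with the spectral radius condition (\ref{eq041}) by the linear-recursion template of Theorems \ref{thm1}--\ref{thm3}, and the LMI sufficiency by a Schur-complement reduction followed by a Petersen-type bounding of the structured uncertainty. For the equivalence, I would observe that the transformed closed loop (\ref{eq040}) is a linear iteration
\[
\bm\xi_{k+1}=M\bm\xi_{k}-\begin{bmatrix}I\\-\overline{C}^{\tp}\end{bmatrix}\Delta\bm{N}_{k},\qquad
\bm\xi_{k}=\begin{bmatrix}\bm{E}_{k}\\-\widetilde{\bm{E}}_{k}\\\widehat{\bm{D}}_{k}\end{bmatrix},
\]
whose coefficient matrix $M$ is precisely the matrix in (\ref{eq041}) and whose forcing term is bounded by $\beta_{\Delta\bm{N}}$. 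If $\rho(M)<1$, then \cite[Lemma 5.6.10]{hj:85} furnishes an induced norm with $\|M\|<1$; solving the recursion and projecting onto the $\bm{E}_{k}$-block reproduces verbatim the boundedness and attractiveness estimates of Theorem \ref{thm1}, now with $\beta_{\Delta\bm{N}}$ and $\beta_{\Delta\bm{N}}^{ess}$ scaled by a fixed multiple of $(1-\|M\|)^{-1}$, so Definition \ref{def01} yields robust $k$-stability. Conversely, robust $k$-stability excludes any growing homogeneous mode and hence forces $\rho(M)<1$, exactly as in the necessity arguments of Theorems \ref{thm2} and \ref{thm3}.

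For the LMI sufficiency I would first insert $P=P_{0}+P_{\delta}$ together with the factorization (\ref{eq042}) into (\ref{eq041}) and write $M=M_{0}+\mathcal{H}\Sigma\mathcal{E}$ with
\[
M_{0}=\begin{bmatrix}I-P_{0}K&-P_{0}HF\\0&\overline{A}-\overline{L}\,\overline{C}\end{bmatrix},\quad
\mathcal{H}=\begin{bmatrix}-\Phi_{1}\\\overline{C}^{\tp}\Phi_{1}\end{bmatrix},\quad
\mathcal{E}=\begin{bmatrix}\Phi_{2}K&\Phi_{2}HF\end{bmatrix}.
\]
The target is the discrete Lyapunov inequality $M^{\tp}QM-Q<0$ for the prescribed $Q>0$, which implies $\rho(M)<1$ by the standard eigenvector argument and therefore (\ref{eq041}); it must be secured robustly for every admissible $\Sigma$ with $\Sigma^{\tp}\Sigma\le I$. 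The crux is to read (\ref{eq044}) as a Schur-complement expansion: partitioning $Q$ as in (\ref{eq043}), the first--second and the third--fourth block rows/columns assemble $\left[\begin{smallmatrix}-Q&M_{0}^{\tp}Q\\QM_{0}&-Q\end{smallmatrix}\right]$ (one verifies $QM_{0}$ against the $(3,1),(3,2),(4,1),(4,2)$ entries and $(Q\mathcal{H})^{\tp}$ against the sixth block row), while the fifth and sixth block rows/columns have $-\tau I$ on the diagonal and carry the couplings $\tau\mathcal{E}$ and $(Q\mathcal{H})^{\tp}$ into the first and second block groups, respectively. Taking Schur complements against these two $-\tau I$ blocks adds $\tau\mathcal{E}^{\tp}\mathcal{E}$ to the leading block and $\tau^{-1}Q\mathcal{H}\mathcal{H}^{\tp}Q$ to the trailing block, so (\ref{eq044})$<0$ is equivalent to
\[
\begin{bmatrix}-Q+\tau\mathcal{E}^{\tp}\mathcal{E}&M_{0}^{\tp}Q\\QM_{0}&-Q+\tau^{-1}Q\mathcal{H}\mathcal{H}^{\tp}Q\end{bmatrix}<0.
\]
Applying the elementary bound $\mathcal{U}\Sigma\mathcal{V}+(\mathcal{U}\Sigma\mathcal{V})^{\tp}\le\tau^{-1}\mathcal{U}\mathcal{U}^{\tp}+\tau\mathcal{V}^{\tp}\mathcal{V}$, valid whenever $\Sigma^{\tp}\Sigma\le I$, with $\mathcal{U}=\big[\begin{smallmatrix}0\\Q\mathcal{H}\end{smallmatrix}\big]$ and $\mathcal{V}=\big[\begin{smallmatrix}\mathcal{E}&0\end{smallmatrix}\big]$ shows that this display dominates $\left[\begin{smallmatrix}-Q&M^{\tp}Q\\QM&-Q\end{smallmatrix}\right]$ for every such $\Sigma$; a final Schur complement in the $-Q$ block returns $M^{\tp}QM-Q<0$, completing the chain to $\rho(M)<1$.

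The principal obstacle I anticipate is bookkeeping rather than conceptual. Getting $M=M_{0}+\mathcal{H}\Sigma\mathcal{E}$ so that the uncertainty enters through a single norm-bounded $\Sigma$ hinges on correctly carrying the linear transformation that produced (\ref{eq040}) and on recognizing that both the $-P_{\delta}K$ term inside $I-PK$ and the injected $\overline{C}^{\tp}P_{\delta}$ blocks share the common left factor $\mathcal{H}$. The second delicate point is the exact block-accounting that matches (\ref{eq044}) term by term with the Petersen-bounded Lyapunov inequality, including keeping the multipliers $\tau$ and $\tau^{-1}$ on the correct factors; any sign or transpose slip there breaks the equivalence. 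The necessity in the first assertion is the only place needing a little care beyond the earlier proofs, since it requires that an unstable homogeneous mode actually shows up in the $\bm{E}_{k}$-block, which I would justify exactly as in the corresponding steps of Theorems \ref{thm2} and \ref{thm3}.
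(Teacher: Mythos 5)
Your proposal is correct and follows essentially the same route as the paper's proof: for the equivalence with (\ref{eq041}) it analyzes the linear recursion (\ref{eq040}) exactly as in the Theorem~\ref{thm2} template, and for the LMI it collapses (\ref{eq044}) by Schur complements against the two $-\tau I$ blocks, absorbs the structured uncertainty, and closes with a final Schur complement to reach $M^{\tp}QM-Q<0$ and hence $\rho(M)<1$. The only cosmetic difference is that where the paper cites \cite[Lemma 2]{mjdy:11} (Petersen's lemma) to pass between the $\tau$-weighted inequality and the robust inequality over all admissible $\Sigma$, you inline the elementary completion-of-squares bound, which indeed suffices because only that direction is needed for sufficiency.
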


\begin{proof}
By resorting to the Kalman state-space description (\ref{eq040}) with the spectral radius condition (\ref{eq041}), we can accomplish the robust $k$-stability result of this theorem in the way as that of Theorem \ref{thm2}. In addition, let us denote
\[\aligned
M
&=\begin{bmatrix}
I-PK&-PHF\\
\overline{C}^{\tp}P_{\delta}K&\overline{A}-\overline{L}\,\overline{C}+\overline{C}^{\tp}P_{\delta}HF
\end{bmatrix}\\
M_{0}
&=\begin{bmatrix}
I-P_{0}K&-P_{0}HF\\
0&\overline{A}-\overline{L}\,\overline{C}
\end{bmatrix}\\
\alpha
&=\begin{bmatrix}
\Phi_{2}K&\Phi_{2}HF&0&0
\end{bmatrix}\\
\beta
&=\begin{bmatrix}
0&0&\Phi_{1}^{\tp}\left(\overline{C}Q_{21}-Q_{11}\right)&\Phi_{1}^{\tp}\left(\overline{C}Q_{22}-Q_{21}^{\tp}\right)
\end{bmatrix}.
\endaligned\]

\noindent The use of the Schur's complement formula leads to that (\ref{eq044}) holds if and only if
\begin{equation}\label{eq045}
\begin{bmatrix}
-Q&(\star)\\
QM_{0}&-Q
\end{bmatrix}
<0
\end{equation}

\noindent and
\begin{equation}\label{eq046}
\begin{bmatrix}
-Q&(\star)\\
QM_{0}&-Q
\end{bmatrix}
+\tau\alpha^{\tp}\alpha
+\tau^{-1}\beta^{\tp}\beta
<0.
\end{equation}

\noindent Due to (\ref{eq042}) and based on \cite[Lemma 2]{mjdy:11}, there exists a positive scalar $\tau>0$ satisfying (\ref{eq046}) if and only if
\begin{equation*}\label{}
\aligned
\begin{bmatrix}
-Q&(\star)\\
QM_{0}&-Q
\end{bmatrix}
+\beta^{\tp}\Sigma\alpha
+\alpha^{\tp}\Sigma^{\tp}\beta
<0
\endaligned
\end{equation*}

\noindent which, by algebraic manipulations, can be rewritten as
\begin{equation}\label{eq047}
\begin{bmatrix}
-Q&(\star)\\
QM&-Q
\end{bmatrix}
<0.
\end{equation}

\noindent Again using the Schur's complement formula, we can leverage (\ref{eq047}) to obtain
\[
M^{\tp}QM-Q<0
\]

\noindent which, together with the Lyapunov stability criteria, guarantees $\rho(M)<1$ (namely, (\ref{eq041}) holds).
\end{proof}

\begin{rem}\label{rem06}
From the comparison with Theorem \ref{thm3}, Theorem \ref{thm4} clearly reveals that the robust $k$-stability result for ESO-based design can be generalized to overcome the effects arising from model uncertainties. This generalization applies to coping with the robust tracking problem of the plant (\ref{eq01}) under the updating law (\ref{eq037}), regardless of (\ref{eq01}) subjected to both model and external uncertainties. For Theorem \ref{thm4}, three gain matrices $\overline{L}$, $K$, and $H$ should be selected to fulfill (\ref{eq041}) despite any model uncertainty $P_{\delta}$, of which a feasible verification condition is given in (\ref{eq044}). A necessary condition of (\ref{eq044}) is shown by (\ref{eq045}) that is equivalent to $\rho\left(M_{0}\right)<1$. Since $M_{0}$ takes the upper block-triangular form, $\rho\left(M_{0}\right)<1$ holds if and only if there hold both (\ref{eq017}) and
\begin{equation}\label{eq048}
\rho\left(I-P_{0}K\right)<1.
\end{equation}

\noindent Clearly, (\ref{eq017}) always holds for some $\overline{L}$ due to the observability of $\left(\overline{A},\overline{C}\right)$, and a necessary and sufficient condition for selecting $K$ to satisfy (\ref{eq048}) is that $P_{0}$ has the full-row rank.
\end{rem}
\begin{rem}\label{rem07}
In Theorem \ref{thm4}, the robust $k$-superstability can not be developed for the closed-loop system given by (\ref{eq02}), (\ref{eq028}), and (\ref{eq038}) any longer. Since the application of the nominal ESO (\ref{eq038}) actually leads to
\begin{equation}\label{eq049}
\widetilde{\overline{\bm{X}}}_{k+1}
=\left(\overline{A}-\overline{L}\,\overline{C}\right)\widetilde{\overline{\bm{X}}}_{k}
+\overline{B}_{\delta}\overline{\bm{U}}_{k}
+\overline{\bm{D}}_{k},\quad\forall k\in\mathbb{Z}_{+}
\end{equation}

\noindent we can not obtain the properties of the observation error $\widetilde{\overline{\bm{X}}}_{k}$ in Lemma \ref{lem3} because of the presence of the model uncertainty $P_{\delta}$. To overcome this drawback, we can explore (\ref{eq049}) by employing ESO-based feedback controllers, and particularly can leverage (\ref{eq028}) to deduce
\[
\widetilde{\overline{\bm{X}}}_{k+1}
=\left(\overline{A}-\overline{L}\,\overline{C}+\overline{B}_{\delta}HF\right)\widetilde{\overline{\bm{X}}}_{k}
-\overline{B}_{\delta}K\bm{E}_{k}
-\overline{B}_{\delta}H\bm{D}_{k}
+\overline{\bm{D}}_{k}
\]

\noindent which can be integrated to obtain
\begin{equation}\label{eq050}
\aligned
\begin{bmatrix}
\bm{E}_{k+1}\\
\widetilde{\overline{\bm{X}}}_{k+1}
\end{bmatrix}
&=\begin{bmatrix}I-PK&PHF\\
-\overline{C}^{\tp}P_{\delta}K&\overline{A}-\overline{L}\,\overline{C}+\overline{C}^{\tp}P_{\delta}HF\end{bmatrix}
\begin{bmatrix}
\bm{E}_{k}\\
\widetilde{\overline{\bm{X}}}_{k}
\end{bmatrix}
-\begin{bmatrix}I-PH&0\\-\overline{C}^{\tp}P_{\delta}H&F^{\tp}\end{bmatrix}
\begin{bmatrix}\Delta\bm{N}_{k}\\\Delta^{2}\bm{N}_{k}\end{bmatrix},\quad\forall k\in\mathbb{Z}_{+}.
\endaligned
\end{equation}

\noindent Since $P_{\delta}\neq0$ and the selection candidate (\ref{eq021}) making $I-PH=0$ is not applicable any longer, (\ref{eq050}) discloses that $\Delta\bm{N}_{k}$ plays a dominant role in the robust convergence performances of both $\bm{E}_{k}$ and $\widetilde{\overline{\bm{X}}}_{k}$, rather than $\Delta^{2}\bm{N}_{k}$. Hence, the robust $k$-stability can be developed in Theorem \ref{thm4}, whereas the robust $k$-superstability can not.
\end{rem}

From the abovementioned discussions and analyses, we can see that the robust ESO-based design result of Theorem \ref{thm4} does not fully accommodate the model uncertainties. A main reason is that the nominal ESO (\ref{eq038}) focuses only on dealing with the iteration-varying external uncertainties, and thus is not capable of addressing the effect resulting from the model uncertainties. Obviously, although the nominal ESO (\ref{eq038}) is available, its $k$-state may provide helpful estimation information for only the external uncertainties, which however does not consider, and hence can not estimate, the model uncertainties.

\subsection{ESO-Based Redesign Results Against Model Uncertainties}

Next, we first redesign an ESO by taking into account model uncertainties, based on which we then establish the ESO-based design results for ILC, regardless of the simultaneous existence of iteration-varying external uncertainties. By separating $P_{\delta}\neq0$, we can rewrite (\ref{eq02}) in the form of
\begin{equation}\label{eq051}
\aligned
\bm{E}_{k+1}
&=\bm{E}_{k}+P_{0}\overline{\bm{U}}_{k}+\left(\bm{D}_{k}+P_{\delta}\overline{\bm{U}}_{k}\right)\\
&=\bm{E}_{k}+P_{0}\overline{\bm{U}}_{k}+\mathcal{D}_{k},\quad\forall k\in\mathbb{Z}_{+}
\endaligned
\end{equation}

\noindent where
\begin{equation}\label{eq052}
\mathcal{D}_{k}
=\bm{D}_{k}+P_{\delta}\overline{\bm{U}}_{k},\quad\forall k\in\mathbb{Z}_{+}.
\end{equation}

\noindent It is worth highlighting that in (\ref{eq052}), $\mathcal{D}_{k}$ collects the information for both external uncertainties and model uncertainties, unlike $\bm{D}_{k}$ in (\ref{eq03}) concerned with only the external uncertainties.

To proceed with further discussions on (\ref{eq051}), let us construct an extended Kalman state-space description as
\begin{equation}\label{eq053}
\left\{\aligned
\overline{\mathcal{X}}_{k+1}
&=\overline{A}\,\overline{\mathcal{X}}_{k}+\overline{B}_{0}\overline{\bm{U}}_{k}+\overline{\mathcal{D}}_{k}\\
\overline{\mathcal{Y}}_{k}
&=\overline{C}\,\overline{\mathcal{X}}_{k}
\endaligned,\quad\forall k\in\mathbb{Z}_{+}\right.
\end{equation}

\noindent where 
\[
\overline{\mathcal{X}}_{k}=\begin{bmatrix}\bm{E}_{k}\\\mathcal{D}_{k}\end{bmatrix}\in\mathbb{R}^{2p},\quad
\overline{\mathcal{D}}_{k}=\begin{bmatrix}0\\\Delta\mathcal{D}_{k}\end{bmatrix}\in\mathbb{R}^{2p},\quad
\overline{\mathcal{Y}}_{k}=\bm{E}_{k}.
\]

\noindent Since $\overline{A}$, $\overline{B}_{0}$, and $\overline{C}$ are available, and $\left(\overline{A},\overline{C}\right)$ is observable (see also Lemma \ref{lem2}), we consider an observer $k$-state of $\overline{\mathcal{X}}_{k}$ as
\[
\widehat{\overline{\mathcal{X}}}_{k}=\begin{bmatrix}\widehat{\bm{E}}_{k}\\\widehat{\mathcal{D}}_{k}\end{bmatrix}
\]

\noindent and based on (\ref{eq053}), we develop an ESO in a Kalman state-space form of
\begin{equation}\label{eq054}
\widehat{\overline{\mathcal{X}}}_{k+1}
=\left(\overline{A}-\overline{L}\,\overline{C}\right)\widehat{\overline{\mathcal{X}}}_{k}
+\overline{B}_{0}\overline{\bm{U}}_{k}+\overline{L}\bm{E}_{k},\quad\forall k\in\mathbb{Z}_{+}.
\end{equation}

\noindent Denote the observation error between the extended $k$-state $\overline{\mathcal{X}}_{k}$ and its observer $k$-state  $\widehat{\overline{\mathcal{X}}}_{k}$ as $\widetilde{\overline{\mathcal{X}}}_{k}=\overline{\mathcal{X}}_{k}-\widehat{\overline{\mathcal{X}}}_{k}$, or by entries,
\[\widetilde{\overline{\mathcal{X}}}_{k}
=\begin{bmatrix}\widetilde{\bm{E}}_{k}\\\widetilde{\mathcal{D}}_{k}\end{bmatrix}
=\begin{bmatrix}\bm{E}_{k}-\widehat{\bm{E}}_{k}\\\mathcal{D}_{k}-\widehat{\mathcal{D}}_{k}\end{bmatrix},\quad\forall k\in\mathbb{Z}_{+}
\]

\noindent and we can combine (\ref{eq053}) and (\ref{eq054}) to obtain
\begin{equation}\label{eq055}
\widetilde{\overline{\mathcal{X}}}_{k+1}
=\left(\overline{A}-\overline{L}\,\overline{C}\right)\widetilde{\overline{\mathcal{X}}}_{k}
+\overline{\mathcal{D}}_{k},\quad\forall k\in\mathbb{Z}_{+}
\end{equation}

\noindent where the use of (\ref{eq03}) and (\ref{eq052}) yields
\begin{equation}\label{eq056}
\aligned
\overline{\mathcal{D}}_{k}
&=F^{\tp}\left(\Delta\bm{D}_{k}+P_{\delta}\Delta\overline{\bm{U}}_{k}\right)\\
&=-F^{\tp}\left(\Delta^{2}\bm{N}_{k}+P_{\delta}\Delta^{2}\bm{U}_{k}\right),\quad\forall k\in\mathbb{Z}_{+}.
\endaligned
\end{equation}

\noindent Though (\ref{eq017}) always holds under some gain matrix $\overline{L}$, the same boundedness and superattractiveness properties of $\widetilde{\overline{\bm{X}}}_{k}$ revealed by Lemma \ref{lem3} may no longer directly apply to $\widetilde{\overline{\mathcal{X}}}_{k}$. It is because different from $\overline{\bm{D}}_{k}$ in (\ref{eq015}), $\overline{\mathcal{D}}_{k}$ in (\ref{eq055}) is tied to not only $\bm{N}_{k}$ but also $\bm{U}_{k}$ on account of the model uncertainty $P_{\delta}$, whereas there exist no prior knowledge of $\bm{U}_{k}$ that needs to be determined.

Now with the ESO (\ref{eq054}), we propose an ESO-based feedback controller instead of (\ref{eq028}) as
\begin{equation}\label{eq057}
\overline{\bm{U}}_{k}
=-K\bm{E}_{k}-H\widehat{\mathcal{D}}_{k}\\
=-K\left(\bm{E}_{k}+\overline{H}\widehat{\mathcal{D}}_{k}\right),\quad\forall k\in\mathbb{Z}_{+}
\end{equation}

\noindent where we use a specific selection $H=K\overline{H}$ for some $\overline{H}\in\mathbb{R}^{p\times p}$. Different from (\ref{eq028}), the ESO-based feedback controller (\ref{eq057}) is capable of leveraging the uniform gain matrix $K$ to transfer the feedback information brought from $\bm{E}_{k}$ and $\widehat{\mathcal{D}}_{k}$ simultaneously.

Similarly to Theorem \ref{thm4}, the following theorem develops the same robust $k$-stability result for the system (\ref{eq02}) executed under the ESO (\ref{eq054}) and the ESO-based feedback controller (\ref{eq057}).

\begin{thm}\label{thm5}
Consider the system (\ref{eq02}) with the ESO (\ref{eq054}), and let the $k$-input be applied in the ESO-based feedback form of (\ref{eq057}). Then the closed-loop system given by (\ref{eq02}), (\ref{eq054}), and (\ref{eq057}) is robustly $k$-stable if and only if the spectral radius condition (\ref{eq041}) holds, for which a sufficient condition is presented in (\ref{eq044}) when $P_{\delta}$ is structured in the form of (\ref{eq042}).
\end{thm}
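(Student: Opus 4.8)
The plan is to mirror the structure of the proof of Theorem \ref{thm4} as closely as possible, since Theorem \ref{thm5} claims an identical robust $k$-stability characterization via the same spectral radius condition (\ref{eq041}) and the same sufficient LMI condition (\ref{eq044}). First I would derive the closed-loop state-space description analogous to (\ref{eq039}). Substituting the ESO-based feedback controller (\ref{eq057}) into the system (\ref{eq02}) and the redesigned ESO (\ref{eq054}), I expect to obtain a coupled description for $\left[\bm{E}_{k}^{\tp}~\widehat{\overline{\mathcal{X}}}_{k}^{\tp}\right]^{\tp}$. The key observation is that even though $\widehat{\mathcal{D}}_{k}$ now estimates $\mathcal{D}_{k}$ (which absorbs the model uncertainty) rather than the purely external $\bm{D}_{k}$, the feedback $\overline{\bm{U}}_{k}=-K\bm{E}_{k}-H\widehat{\mathcal{D}}_{k}$ with $H=K\overline{H}$ is formally of the same type as (\ref{eq028}), so the system matrices entering the closed loop are structurally the same $(\overline{A},\overline{B}_{0},\overline{C})$ objects. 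Hence I expect the same nonsingular transformation by $\left[\begin{smallmatrix}I&0\\-\overline{C}^{\tp}&I\end{smallmatrix}\right]$ used in (\ref{eq040}) to produce a transition matrix whose spectral radius is exactly the matrix in (\ref{eq041}).

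Next I would carry out the transformation explicitly, exploiting the identities $\overline{C}^{\tp}P=\overline{B}$, $\overline{A}\,\overline{C}^{\tp}=\overline{C}^{\tp}$, $\overline{C}\,\overline{C}^{\tp}=I$, together with the splitting $\overline{B}=\overline{B}_{0}+\overline{B}_{\delta}$ and $\overline{C}^{\tp}P_{\delta}=\overline{B}_{\delta}$. The role of the model uncertainty is to generate precisely the off-diagonal coupling terms $\overline{C}^{\tp}P_{\delta}K$ and $\overline{C}^{\tp}P_{\delta}HF$ that appear in (\ref{eq041}), so after the transformation the governing transition matrix coincides with $M$ from the proof of Theorem \ref{thm4}. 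Once that identification is made, the robust $k$-stability argument is verbatim: under (\ref{eq041}), invoke \cite[Lemma 5.6.10]{hj:85} to get an induced norm with $\left\|M\right\|<1$, solve the error recursion, and bound $\left\|\bm{E}_{k}\right\|$ and $\limsup_{k\to\infty}\left\|\bm{E}_{k}\right\|$ by quantities proportional to $\beta_{\Delta\bm{N}}$ and $\beta_{\Delta\bm{N}}^{ess}$ respectively, yielding robust $k$-stability through Definition \ref{def01}. The necessity direction follows because the homogeneous part of (\ref{eq040}) is governed by the same $M$, so (\ref{eq041}) is forced by the attractiveness requirement. The sufficiency of (\ref{eq044}) under the structured uncertainty (\ref{eq042}) is then inherited directly from Theorem \ref{thm4}, since the Schur-complement and \cite[Lemma 2]{mjdy:11} manipulations depend only on $M$, $M_{0}$, $\alpha$, $\beta$, not on which disturbance drives the recursion.

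The main obstacle I anticipate is verifying that the disturbance channel in the redesigned setting does not spoil the equivalence. Because $\mathcal{D}_{k}=\bm{D}_{k}+P_{\delta}\overline{\bm{U}}_{k}$ depends on the input, the forcing term $\overline{\mathcal{D}}_{k}=-F^{\tp}(\Delta^{2}\bm{N}_{k}+P_{\delta}\Delta^{2}\bm{U}_{k})$ in (\ref{eq056}) involves $\Delta^{2}\bm{U}_{k}$, which is not a priori bounded. The delicate point is to confirm that the closed-loop transition matrix $M$ already encodes this $\bm{U}_{k}$-dependence, so that after transformation the only exogenous inputs remaining are $\Delta\bm{N}_{k}$ (and $\Delta^{2}\bm{N}_{k}$), exactly as in (\ref{eq040}), with no uncontrolled $\Delta^{2}\bm{U}_{k}$ term surviving. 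I would therefore take care to write the full $2p+p$-dimensional loop in the $\widehat{\overline{\mathcal{X}}}_{k}$ coordinates (not the $\widetilde{\overline{\mathcal{X}}}_{k}$ coordinates of (\ref{eq055})), where $\overline{\mathcal{D}}_{k}$ does not appear and the input dependence is folded into the deterministic feedback structure. Apart from this bookkeeping, no new analytical idea beyond Theorem \ref{thm4} should be required, and the remark already flagged in Remark \ref{rem07} explains why only robust $k$-stability, and not $k$-superstability, is attainable here.
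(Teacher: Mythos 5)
Your proposal is correct and follows essentially the same route as the paper: the paper likewise forms the closed loop in the observer-state coordinates $\left[\bm{E}_{k}^{\tp}~\widehat{\overline{\mathcal{X}}}_{k}^{\tp}\right]^{\tp}$ (its (\ref{eq058})), applies the transformation $\left[\begin{smallmatrix}I&0\\-\overline{C}^{\tp}&I\end{smallmatrix}\right]$ to obtain (\ref{eq059}) whose transition matrix is exactly the matrix $M$ of (\ref{eq041}) with only $\Delta\bm{N}_{k}$ as the forcing term, and then invokes the argument of Theorem \ref{thm4} verbatim. Your flagged "delicate point" — staying in $\widehat{\overline{\mathcal{X}}}_{k}$ rather than $\widetilde{\overline{\mathcal{X}}}_{k}$ coordinates so that no unbounded $\Delta^{2}\bm{U}_{k}$ term survives — is precisely the device the paper uses.
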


\begin{proof}
For the system (\ref{eq02}) under (\ref{eq054}) and (\ref{eq057}), we develop a closed-loop state-space description as
\begin{equation}\label{eq058}
\aligned
\begin{bmatrix}
\bm{E}_{k+1}\\
\widehat{\overline{\mathcal{X}}}_{k+1}
\end{bmatrix}
&=\begin{bmatrix}
I-PK&-PHF\\
\overline{L}-\overline{B}_{0}K&\overline{A}-\overline{L}\,\overline{C}-\overline{B}_{0}HF
\end{bmatrix}
\begin{bmatrix}
\bm{E}_{k}\\
\widehat{\overline{\mathcal{X}}}_{k}
\end{bmatrix}
+\begin{bmatrix}I\\0\end{bmatrix}\bm{D}_{k},\quad\forall k\in\mathbb{Z}_{+}
\endaligned
\end{equation}

\noindent which, by the following nonsingular linear transformation:
\[
\begin{bmatrix}
\bm{E}_{k}\\
\begin{bmatrix}-\widetilde{\bm{E}}_{k}\\
\widehat{\mathcal{D}}_{k}\end{bmatrix}
\end{bmatrix}=
\begin{bmatrix}
I&0\\
-\overline{C}^{\tp}&I
\end{bmatrix}\begin{bmatrix}
\bm{E}_{k}\\
\widehat{\overline{\mathcal{X}}}_{k}
\end{bmatrix},\quad\forall k\in\mathbb{Z}_{+}
\]

\noindent can be equivalently transformed into
\begin{equation}\label{eq059}
\aligned
\begin{bmatrix}
\bm{E}_{k+1}\\
\begin{bmatrix}-\widetilde{\bm{E}}_{k+1}\\
\widehat{\mathcal{D}}_{k+1}\end{bmatrix}
\end{bmatrix}
&=\begin{bmatrix}
I-PK&-PHF\\
\overline{C}^{\tp}P_{\delta}K&\overline{A}-\overline{L}\,\overline{C}+\overline{C}^{\tp}P_{\delta}HF
\end{bmatrix}
\begin{bmatrix}
\bm{E}_{k}\\
\begin{bmatrix}-\widetilde{\bm{E}}_{k}\\
\widehat{\mathcal{D}}_{k}\end{bmatrix}
\end{bmatrix}
-\begin{bmatrix}I\\-\overline{C}^{\tp}\end{bmatrix}\Delta\bm{N}_{k},\quad\forall k\in\mathbb{Z}_{+}.
\endaligned
\end{equation}

\noindent In the same way as the proof of Theorem \ref{thm4}, we can prove this theorem by exploring (\ref{eq059}), of which the proof is omitted.
\end{proof}

With Theorem \ref{thm5}, we reveal that the redesign of the ESO (\ref{eq054}) and the ESO-based feedback controller (\ref{eq057}) also accomplishes the robust $k$-stability for the system (\ref{eq02}) as the ESO (\ref{eq038}) and the ESO-based feedback controller (\ref{eq028}) under the same condition. Furthermore, the redesign of (\ref{eq054}) and (\ref{eq057}) has advantages over the design of (\ref{eq028}) and (\ref{eq038}), especially in helping to realize the robust $k$-superstability of (\ref{eq02}). For example, as a benefit of (\ref{eq054}) and (\ref{eq057}), the (eigenvalue) separation principle can be achieved between the designs of ESO and feedback controller although the system (\ref{eq02}) is subjected to model uncertainties.

\begin{lem}\label{lem6}
For the system (\ref{eq02}), if the ESO (\ref{eq054}) and the ESO-based feedback controller (\ref{eq057}) are applied, then the syntheses of them can be guaranteed to fulfill the (eigenvalue) separation principle, where $K$, $\overline{H}$, and $\overline{L}$ can be designed separately.
\end{lem}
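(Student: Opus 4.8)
The plan is to establish the separation principle in the observation-error coordinates $\left(\bm{E}_{k},\widetilde{\overline{\mathcal{X}}}_{k}\right)$, rather than in the observer-state coordinates $\left(\bm{E}_{k},\widehat{\overline{\mathcal{X}}}_{k}\right)$ that led to (\ref{eq059}) in the proof of Theorem \ref{thm5}. The decisive structural fact is the redesigned error recursion (\ref{eq055}), i.e. $\widetilde{\overline{\mathcal{X}}}_{k+1}=\left(\overline{A}-\overline{L}\,\overline{C}\right)\widetilde{\overline{\mathcal{X}}}_{k}+\overline{\mathcal{D}}_{k}$, whose homogeneous matrix $\overline{A}-\overline{L}\,\overline{C}$ depends on $\overline{L}$ alone and carries no coupling to $\bm{E}_{k}$, $K$, or $\overline{H}$. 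This is precisely because the nominal extended state-space (\ref{eq053}) and the ESO (\ref{eq054}) share the known input matrix $\overline{B}_{0}$, so that the term $\overline{B}_{0}\overline{\bm{U}}_{k}$ cancels identically on forming the error; by contrast, the nominal ESO (\ref{eq038}) leaves the residual $\overline{B}_{\delta}\overline{\bm{U}}_{k}$ in (\ref{eq049}), which is exactly what prevented separation earlier.

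First I would substitute the controller (\ref{eq057}) into the regrouped recursion (\ref{eq051}). Using $\widehat{\mathcal{D}}_{k}=\mathcal{D}_{k}-\widetilde{\mathcal{D}}_{k}$ together with $\widetilde{\mathcal{D}}_{k}=F\widetilde{\overline{\mathcal{X}}}_{k}$, a direct computation gives $\bm{E}_{k+1}=\left(I-P_{0}K\right)\bm{E}_{k}+P_{0}HF\widetilde{\overline{\mathcal{X}}}_{k}+\left(I-P_{0}H\right)\mathcal{D}_{k}$, in which the combined uncertainty $\mathcal{D}_{k}$ of (\ref{eq052}) acts as the extended-state disturbance. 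Stacking this with (\ref{eq055}) and regarding $\mathcal{D}_{k}$ and $\overline{\mathcal{D}}_{k}$ as the exogenous inputs yields the closed-loop system matrix
\begin{equation*}
\begin{bmatrix}
I-P_{0}K & P_{0}HF\\
0 & \overline{A}-\overline{L}\,\overline{C}
\end{bmatrix},
\end{equation*}
whose null lower-left block is exactly the separation structure.

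From this block-upper-triangular form I would read off the eigenvalue separation: the closed-loop modes are the union of those of $I-P_{0}K$ and those of $\overline{A}-\overline{L}\,\overline{C}$. Hence $\overline{L}$ is synthesized solely to meet the observer condition (\ref{eq017}), always feasible by the observability of $\left(\overline{A},\overline{C}\right)$ in Lemma \ref{lem2}, whereas $K$ is synthesized solely to place the error modes via $\rho\left(I-P_{0}K\right)<1$, i.e. (\ref{eq048}), achievable exactly when $P_{0}$ has full row rank. Since $H=K\overline{H}$ enters only the coupling block $P_{0}HF$ and the disturbance gain $I-P_{0}H=I-P_{0}K\overline{H}$ but never the spectrum, $\overline{H}$ remains a free parameter to be fixed afterwards for disturbance rejection. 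This exhibits $K$, $\overline{H}$, and $\overline{L}$ as mutually decoupled in synthesis, in the same spirit as Lemma \ref{lem5}.

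The step I expect to be the main obstacle is justifying that the homogeneous feedback matrices are the clean nominal $I-P_{0}K$ and $\overline{A}-\overline{L}\,\overline{C}$ in spite of the model uncertainty. Because $\mathcal{D}_{k}=\bm{D}_{k}+P_{\delta}\overline{\bm{U}}_{k}$ in (\ref{eq052}) is itself control-dependent, the apparent decoupling rests on treating $\mathcal{D}_{k}$ as the total disturbance driving the extended state (\ref{eq053}) — the standard ESO viewpoint — so that the whole effect of $P_{\delta}$ is routed through the disturbance channel and the error recursion (\ref{eq055}) stays autonomous. I would therefore phrase the conclusion carefully as a statement about the synthesis of $K$, $\overline{H}$, $\overline{L}$ on these nominal matrices, leaving the quantitative effect of the $P_{\delta}$-dependent disturbance to the robust-stability analysis, which is exactly what the exact cancellation in (\ref{eq055}), as opposed to the residual term in (\ref{eq049}), makes possible.
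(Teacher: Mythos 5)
Your proposal is correct, and it reaches the same structural conclusion as the paper, but by a different route. The paper's own proof of Lemma \ref{lem6} stays in the observer-state coordinates $\left(\bm{E}_{k},\widehat{\overline{\mathcal{X}}}_{k}\right)$: it uses (\ref{eq052}) to rewrite the closed loop (\ref{eq058}) as (\ref{eq060}), whose system matrix contains only the nominal quantities $P_{0},\overline{B}_{0}$, and then applies the same similarity transformation $\bigl[\begin{smallmatrix}I&0\\-\overline{C}^{\tp}&I\end{smallmatrix}\bigr]$ as in Lemma \ref{lem5}, landing in the mixed coordinates $\left(\bm{E}_{k},-\widetilde{\bm{E}}_{k},\widehat{\mathcal{D}}_{k}\right)$ and the block-triangular matrix (\ref{eq061}). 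You instead pass directly to the pure observation-error coordinates $\left(\bm{E}_{k},\widetilde{\overline{\mathcal{X}}}_{k}\right)$, exploiting the cancellation of $\overline{B}_{0}\overline{\bm{U}}_{k}$ in the error recursion (\ref{eq055}); your computation of the $\bm{E}_{k}$-recursion and the stacked block-triangular system is exactly the paper's (\ref{eq087})--(\ref{eq088}), which the paper defers to the proof of Theorem \ref{thm6}. Both routes yield the same diagonal blocks $I-P_{0}K$ and $\overline{A}-\overline{L}\,\overline{C}$, and hence the same decoupled syntheses of $K$, $\overline{H}$, $\overline{L}$; the paper's version makes the parallel with Lemma \ref{lem5} (and the failure of separation in (\ref{eq040}) when $P_{\delta}\neq0$) immediately visible, while yours buys a head start on the subsequent stability analysis, since (\ref{eq088}) is precisely the form used to prove robust $k$-(super)stability. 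Your closing caveat—that $\mathcal{D}_{k}$ is control-dependent through $P_{\delta}\overline{\bm{U}}_{k}$, so the separation statement concerns the nominal spectrum and the quantitative effect of the disturbance channel must be handled separately—is also faithful to the paper, which makes the same point in the discussion following Lemma \ref{lem6} and resolves it in Lemma \ref{lem7}.
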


\begin{proof}
By employing (\ref{eq052}), we can redescribe (\ref{eq058}) as
\begin{equation}\label{eq060}
\aligned
\begin{bmatrix}
\bm{E}_{k+1}\\
\widehat{\overline{\mathcal{X}}}_{k+1}
\end{bmatrix}
&=\begin{bmatrix}
I-P_{0}K&-P_{0}K\overline{H}F\\
\overline{L}-\overline{B}_{0}K&\overline{A}-\overline{L}\,\overline{C}-\overline{B}_{0}K\overline{H}F
\end{bmatrix}
\begin{bmatrix}
\bm{E}_{k}\\
\widehat{\overline{\mathcal{X}}}_{k}
\end{bmatrix}
+\begin{bmatrix}I\\0\end{bmatrix}\mathcal{D}_{k},\quad\forall k\in\mathbb{Z}_{+}
\endaligned
\end{equation}

\noindent for which we can follow the same lines as used in (\ref{eq030}) to get
\begin{equation}\label{eq061}
\aligned
\begin{bmatrix}
I&0\\
-\overline{C}^{\tp}&I
\end{bmatrix}
&\begin{bmatrix}
I-P_{0}K&-P_{0}K\overline{H}F\\
\overline{L}-\overline{B}_{0}K&\overline{A}-\overline{L}\,\overline{C}-\overline{B}_{0}K\overline{H}F
\end{bmatrix}
\begin{bmatrix}
I&0\\
-\overline{C}^{\tp}&I
\end{bmatrix}^{-1}
=\begin{bmatrix}
I-P_{0}K
&-P_{0}K\overline{H}F\\
0
&\overline{A}-\overline{L}\,\overline{C}
\end{bmatrix}.
\endaligned
\end{equation}

\noindent With (\ref{eq061}), the separation principle holds between the syntheses of (\ref{eq054}) and (\ref{eq057}) for the system (\ref{eq02}).
\end{proof}

In spite of the satisfaction of separation principle in Lemma \ref{lem6}, we can obtain from (\ref{eq060}) that $\mathcal{D}_{k}$ plays the role as its driving input and greatly affects the $k$-stability performance. It is worth emphasizing, however, that $\mathcal{D}_{k}$ is heavily dependent upon the $k$-input $\overline{\bm{U}}_{k}$ based on (\ref{eq052}), whereas there is no prior knowledge of $\overline{\bm{U}}_{k}$, such as boundedness or convergence. To overcome this drawback, we propose a useful lemma to develop boundedness properties of $\bm{U}_{k}$, and consequently of $\overline{\bm{U}}_{k}$.

\begin{lem}\label{lem7}
Consider the system (\ref{eq02}) under the ESO (\ref{eq054}) and the ESO-based feedback controller (\ref{eq057}), and assume that both the spectral radius conditions (\ref{eq017}) and (\ref{eq048}) are fulfilled. Then $\bm{U}_{k}$ is bounded such that $\left\|\bm{U}_{k}\right\|\leq\beta_{\bm{U}}$ is satisfied for some finite bound $\beta_{\bm{U}}\geq0$ if and only if
\begin{equation}\label{eq062}
\rho\left(\begin{bmatrix}
I-PK&\overline{H}F\\
-\overline{L}P_{\delta}K&\overline{A}-\overline{L}\,\overline{C}
\end{bmatrix}\right)<1
\end{equation}

\noindent and, moreover, there can be found some finite constants $\gamma_{1}\geq0$ and $\gamma_{2}\geq0$ and some constant $\lambda\in[0,1)$ such that
\begin{equation}\label{eq063}
\aligned
\left\|\Delta\bm{U}_{k}\right\|
&\leq\gamma_{1}\beta_{\Delta\bm{N}}+\gamma_{2}\lambda^{k}\\
\left\|\Delta^{2}\bm{U}_{k}\right\|
&\leq\gamma_{1}\beta_{\Delta^{2}\bm{N}}+\gamma_{2}\lambda^{k}
\endaligned,\quad\forall k\in\mathbb{Z}_{+}
\end{equation}

\noindent and some finite constant $\gamma_{3}\geq0$ such that
\begin{equation}\label{eq064}
\aligned
\limsup_{k\to\infty}\left\|\Delta\bm{U}_{k}\right\|
&\leq\gamma_{3}\beta_{\Delta\bm{N}}^{ess}\\
\limsup_{k\to\infty}\left\|\Delta^{2}\bm{U}_{k}\right\|
&\leq\gamma_{3}\beta_{\Delta^{2}\bm{N}}^{ess}.
\endaligned
\end{equation}

\noindent If $P_{\delta}$ is structured in the form of (\ref{eq042}), then (\ref{eq017}), (\ref{eq048}), and (\ref{eq062}) can be achieved, provided there exist a positive-definite matrix $Q>0$ of the form (\ref{eq043}) and a positive scalar $\tau>0$ satisfying
\begin{equation}\label{eq065}
\aligned
&\left[\begin{array}{cc}
-Q_{11}&(\star)\\
-Q_{21}&-Q_{22}\\
Q_{11}(I-P_{0}K)&Q_{11}\overline{H}F+Q_{21}^{\tp}(\overline{A}-\overline{L}\,\overline{C})\\
Q_{21}(I-P_{0}K)&Q_{21}\overline{H}F+Q_{22}(\overline{A}-\overline{L}\,\overline{C})\\
\tau\Phi_{2}K&0\\
0&0
\end{array}\right.\\
&\left.\begin{array}{cccc}
(\star)&(\star)&(\star)&(\star)\\
(\star)&(\star)&(\star)&(\star)\\
-Q_{11}&(\star)&(\star)&(\star)\\
-Q_{21}&-Q_{22}&(\star)&(\star)\\
0&0&-\tau I&(\star)\\
\Phi_{1}^{\tp}\left(-Q_{11}-\overline{L}^{\tp}Q_{21}\right)&\Phi_{1}^{\tp}\left(-Q_{21}^{\tp}-\overline{L}^{\tp}Q_{22}\right)&0&-\tau I
\end{array}\right]\\
&~~<0.
\endaligned
\end{equation}
\end{lem}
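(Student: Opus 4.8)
The plan is to recast the boundedness of $\bm{U}_{k}$ as a single spectral-radius condition on a $3p$-dimensional forced iteration whose system matrix is exactly the one in (\ref{eq062}). First I would exploit the structure $H=K\overline{H}$ in the controller (\ref{eq057}): it forces every input increment $\Delta\bm{U}_{k}=-\overline{\bm{U}}_{k}=K\bm{w}_{k}$, with $\bm{w}_{k}=\bm{E}_{k}+\overline{H}\widehat{\mathcal{D}}_{k}$, to lie in the range of $K$, so that $\bm{U}_{k}=\bm{U}_{0}+K\bm{\sigma}_{k}$ for the accumulated variable $\bm{\sigma}_{k}=\sum_{i=0}^{k-1}\bm{w}_{i}\in\mathbb{R}^{p}$. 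Substituting $\bm{E}_{k}=\bm{Y}_{d}-P\bm{U}_{k}-\bm{N}_{k}=-PK\bm{\sigma}_{k}+\bm{r}_{k}$, where $\bm{r}_{k}=\bm{Y}_{d}-P\bm{U}_{0}-\bm{N}_{k}$ is bounded because $\bm{N}_{k}$ is, and appending the ESO (\ref{eq054}), I obtain a linear iteration for $(\bm{\sigma}_{k},\widehat{\overline{\mathcal{X}}}_{k})$ in which the forcing is not circular (it depends on $\bm{U}_{0}$ and $\bm{N}_{k}$ only, never on $\bm{U}_{k}$). The essential gain from summing is that this iteration is driven by the bounded $\bm{r}_{k}$, hence by $\bm{N}_{k}$ itself, rather than by $\Delta\bm{N}_{k}$; this is precisely what is needed to bound the accumulated quantity $\bm{U}_{k}$, whereas a $\Delta\bm{N}_{k}$-type forcing would govern only its increments.

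Next I would bring the system matrix to the form (\ref{eq062}). Using $\overline{B}_{0}=\overline{C}^{\tp}P_{0}$ together with $\overline{A}\,\overline{B}_{0}=\overline{B}_{0}$, $\overline{C}\,\overline{B}_{0}=P_{0}$, and $F\overline{B}_{0}=0$, the similarity transformation $\bigl[\begin{smallmatrix}I&0\\\overline{B}_{0}K&I\end{smallmatrix}\bigr]$ annihilates the feedthrough $\overline{B}_{0}K\overline{H}F$ in the observer block and reduces the lower-left block to the single term $-\overline{L}P_{\delta}K$, so that the transformed matrix, denoted $M$, is exactly that of (\ref{eq062}). Boundedness then reduces to a standard argument: under (\ref{eq062}), \cite[Lemma 5.6.10]{hj:85} supplies an induced norm strictly below one, and summing the geometric transient against the bounded forcing bounds $\bm{\sigma}_{k}$ and hence $\bm{U}_{k}=\bm{U}_{0}+K\bm{\sigma}_{k}$; the converse follows by the usual instability argument, exhibiting a bounded (constant or resonant) $\bm{N}_{k}$ that excites a mode of $M$ of modulus at least one. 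Differencing the same iteration once and twice leaves $M$ unchanged but replaces the forcing by $-\Delta\bm{N}_{k}$ and $-\Delta^{2}\bm{N}_{k}$; since $\Delta\bm{U}_{k}=K\Delta\bm{\sigma}_{k}$ and $\Delta^{2}\bm{U}_{k}=K\Delta^{2}\bm{\sigma}_{k}$, the same sub-unit-norm estimate yields (\ref{eq063}) with $\lambda$ the induced norm of $M$ and $\|K\|$ absorbed into $\gamma_{1},\gamma_{2}$, and passing to $\limsup$ with the essential suprema gives (\ref{eq064}).

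For the sufficiency of the linear matrix inequality (\ref{eq065}) under the structured uncertainty (\ref{eq042}), I would reproduce the mechanism of Theorem \ref{thm4}. Writing $P_{\delta}=\Phi_{1}\Sigma\Phi_{2}$ exhibits $M=M_{0}+\mathcal{B}\Sigma\mathcal{A}$, where $M_{0}$ is block upper-triangular with diagonal blocks $I-P_{0}K$ and $\overline{A}-\overline{L}\,\overline{C}$, while $\mathcal{B}=\bigl[\begin{smallmatrix}-\Phi_{1}\\-\overline{L}\Phi_{1}\end{smallmatrix}\bigr]$ and $\mathcal{A}=\begin{bmatrix}\Phi_{2}K&0\end{bmatrix}$; the block-triangular form makes $\rho(M_{0})<1$ equivalent to (\ref{eq048}) and (\ref{eq017}). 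A Schur-complement expansion identifies (\ref{eq065}) with a Lyapunov inequality for $M_{0}$ augmented by $\tau\alpha^{\tp}\alpha+\tau^{-1}\beta^{\tp}\beta$, where, in the notation of the proof of Theorem \ref{thm4}, $\alpha=\mathcal{A}$ and $\beta=\mathcal{B}^{\tp}Q$; invoking \cite[Lemma 2]{mjdy:11} with $\Sigma^{\tp}\Sigma\le I$ absorbs the uncertainty, and a final Schur step returns $M^{\tp}QM-Q<0$, so that the Lyapunov criterion gives $\rho(M)<1$, namely (\ref{eq062}).

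The main obstacle is the first reduction. The correct $p$-dimensional coordinate is neither $\bm{E}_{k}$ nor the raw observer state but the accumulation $\bm{\sigma}_{k}$, and only summation trades the $\Delta\bm{N}_{k}$ forcing for the bounded $\bm{N}_{k}$ that is compatible with bounding an accumulated input. Moreover the control/observation coupling created by $P_{\delta}$ (through the $P_{\delta}\Delta^{2}\bm{U}_{k}$ term driving the observer error) keeps the matrix from being block-triangular, and it is exactly the tailored similarity transformation above that concentrates this coupling into the single off-diagonal block $-\overline{L}P_{\delta}K$ appearing in (\ref{eq062}); carrying out that cancellation correctly, and hence matching (\ref{eq062}) verbatim, is the delicate computational step on which the whole argument rests.
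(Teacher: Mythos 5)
Your proposal is correct, and its overall architecture coincides with the paper's own three-step proof: reduce the $m$-dimensional input to a $p$-dimensional recursion with system matrix $I-PK$, couple it with the ESO (\ref{eq054}) and apply the similarity transformation $\bigl[\begin{smallmatrix}I&0\\\overline{B}_{0}K&I\end{smallmatrix}\bigr]$ to exhibit exactly the matrix $\mathcal{M}$ of (\ref{eq062}), obtain (\ref{eq063})--(\ref{eq064}) by differencing that forced iteration once and twice, and establish (\ref{eq065}) by Schur complements plus the structured-uncertainty lemma \cite[Lemma 2]{mjdy:11}, with $\rho(\mathcal{M}_{0})<1$ splitting into (\ref{eq017}) and (\ref{eq048}) through block-triangularity. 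The one genuine difference is how the first reduction is carried out. The paper builds an explicit nonsingular transformation $\Omega^{-1}$ from a partition $P_{0}=[P_{01}~P_{02}]$ (with $P_{01}$ assumed nonsingular without loss of generality) and $K=[K_{1}^{\tp}~K_{2}^{\tp}]^{\tp}$, splitting $\bm{U}_{k}$ into a $p$-dimensional component $\bm{U}_{1,k}^{\diamond}$ satisfying $\bm{U}_{1,k+1}^{\diamond}=(I-PK)\bm{U}_{1,k}^{\diamond}+\overline{H}F\widehat{\overline{\mathcal{X}}}_{k}+\bm{\theta}_{k}$ and an $(m-p)$-dimensional component proved iteration-invariant in (\ref{eq069}). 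You instead note that $\Delta\bm{U}_{k}=K\bm{w}_{k}$ forces $\bm{U}_{k}=\bm{U}_{0}+K\bm{\sigma}_{k}$ and derive the recursion for the accumulated variable $\bm{\sigma}_{k}$ directly from the plant equation; your $\bm{\sigma}_{k}$ equals $\bm{U}_{1,k}^{\diamond}$ up to an additive constant, and your forcing $\bm{r}_{k}$ equals the paper's $\bm{\theta}_{k}$ up to a constant, both satisfying $\Delta\bm{r}_{k}=\Delta\bm{\theta}_{k}=-\Delta\bm{N}_{k}$, so everything downstream (including $\Delta\bm{U}_{k}=K\Delta\bm{\sigma}_{k}$, matching the paper's (\ref{eq081})--(\ref{eq082})) is identical. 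Your parametrization buys a cleaner argument, with no partition of $P_{0}$ and no WLOG nonsingularity assumption; what the paper's $\Omega$ buys in exchange is an explicit formula for the invariant component, which it reuses to write the bounds $\beta_{\bm{\theta}}$ and $\beta_{\bm{U}}$ concretely. Finally, for the necessity of (\ref{eq062}) the paper only asserts that boundedness of $\bm{U}_{1,k}^{\diamond}$, $\widehat{\overline{\mathcal{X}}}_{k}$, and $\bm{\theta}_{k}$ in (\ref{eq075}) allows one to conclude $\rho(\mathcal{M})<1$, leaving the excitation argument implicit; your resonance sketch is the standard way to complete exactly that step, so on this point the two proofs are at the same level of detail.
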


\begin{proof}
Three steps are included to prove this lemma.

{\it Step i):} From (\ref{eq01}), (\ref{eq03}), and (\ref{eq057}), we can arrive at
\begin{equation}\label{eq066}
\aligned
\bm{U}_{k+1}
&=\bm{U}_{k}-\overline{\bm{U}}_{k}\\
&=\bm{U}_{k}+K\bm{E}_{k}+K\overline{H}\widehat{\mathcal{D}}_{k}\\
&=\left(I-KP\right)\bm{U}_{k}+K\overline{H}\widehat{\mathcal{D}}_{k}
+K\left(\bm{Y}_{d}-\bm{N}_{k}\right),\quad\forall k\in\mathbb{Z}_{+}.
\endaligned
\end{equation}

\noindent Based on the spectral radius condition (\ref{eq048}), $P_{0}K$ is nonsingular and, thus, $P_{0}$ and $K$ have full-row and full-column rank, respectively. Without loss of generality, we denote $P_{0}=\left[P_{01}~P_{02}\right]$ with $P_{01}\in\mathbb{R}^{p\times p}$ and $P_{02}\in\mathbb{R}^{p\times(m-p)}$, for which $P_{01}$ is assumed to be nonsingular (if not, this can be realized through the elementary transformation). Correspondingly, let us denote $K=\left[K_{1}^{\tp}~K_{2}^{\tp}\right]^{\tp}$ with $K_{1}\in\mathbb{R}^{p\times p}$ and $K_{2}\in\mathbb{R}^{(m-p)\times p}$. As a consequence, we can present a nonsingular transformation of (\ref{eq066}) as
\[
\bm{U}_{k}^{\ast}
=\Omega^{-1}\bm{U}_{k}
~\hbox{with}~
\bm{U}_{k}^{\ast}
=\begin{bmatrix}
\bm{U}_{1,k}^{\ast}\\
\bm{U}_{2,k}^{\ast}
\end{bmatrix}
\]

\noindent where $\bm{U}_{1,k}^{\ast}\in\mathbb{R}^{p}$, $\bm{U}_{2,k}^{\ast}\in\mathbb{R}^{m-p}$, and
\[
\Omega^{-1}
=\begin{bmatrix}
P_{01}&P_{02}\\
-K_{2}\left(P_{0}K\right)^{-1}P_{01}&I-K_{2}\left(P_{0}K\right)^{-1}P_{02}
\end{bmatrix}.
\]

\noindent In fact, we can validate
\[
\Omega
=\begin{bmatrix}
K_{1}\left(P_{0}K\right)^{-1}&-P_{01}^{-1}P_{02}\\
K_{2}\left(P_{0}K\right)^{-1}&I
\end{bmatrix},\quad
\Omega^{-1}K
=\begin{bmatrix}P_{0}K\\0\end{bmatrix},\quad
P_{0}\Omega
=\begin{bmatrix}I&0\end{bmatrix}
\]

\noindent and consequently,
\begin{equation*}\label{}
\aligned
\bm{U}_{k+1}^{\ast}
&=\left(I-\Omega^{-1}KP\Omega\right)\bm{U}_{k}^{\ast}+\Omega^{-1}K\overline{H}\widehat{\mathcal{D}}_{k}
+\Omega^{-1}K\left(\bm{Y}_{d}-\bm{N}_{k}\right)\\
&=\begin{bmatrix}
P_{0}K\left(I-PK\right)\left(P_{0}K\right)^{-1}
&-P_{0}KP_{\delta}\begin{bmatrix}-P_{01}^{-1}P_{02}\\I\end{bmatrix}\\
0&I
\end{bmatrix}\bm{U}_{k}^{\ast}
+\begin{bmatrix}P_{0}K\overline{H}\widehat{\mathcal{D}}_{k}\\0\end{bmatrix}
+\begin{bmatrix}P_{0}K\left(\bm{Y}_{d}-\bm{N}_{k}\right)\\0\end{bmatrix}
\endaligned
\end{equation*}

\noindent with which we equivalently have
\begin{equation}\label{eq067}
\aligned
\bm{U}_{1,k+1}^{\ast}
&=P_{0}K\left(I-PK\right)\left(P_{0}K\right)^{-1}\bm{U}_{1,k}^{\ast}+P_{0}K\overline{H}\widehat{\mathcal{D}}_{k}\\
&~~~+P_{0}K\left(\bm{Y}_{d}-\bm{N}_{k}\right)
-P_{0}KP_{\delta}\begin{bmatrix}-P_{01}^{-1}P_{02}\\I\end{bmatrix}\bm{U}_{2,k}^{\ast},\quad\forall k\in\mathbb{Z}_{+}
\endaligned
\end{equation}

\noindent and
\begin{equation}\label{eq068}
\bm{U}_{2,k+1}^{\ast}
=\bm{U}_{2,k}^{\ast},\quad\forall k\in\mathbb{Z}_{+}.
\end{equation}

\noindent Clearly, (\ref{eq068}) is equivalent to that $\bm{U}_{2,k}^{\ast}$ is iteration-invariant and bounded, namely,
\begin{equation}\label{eq069}
\aligned
\bm{U}_{2,k}^{\ast}
&\equiv\bm{U}_{2,0}^{\ast}\\
&=\begin{bmatrix}
-K_{2}\left(P_{0}K\right)^{-1}P_{01}&I-K_{2}\left(P_{0}K\right)^{-1}P_{02}
\end{bmatrix}\bm{U}_{0},\quad\forall k\in\mathbb{Z}_{+}.
\endaligned
\end{equation}

\noindent For (\ref{eq067}), let us denote
\[
\bm{U}_{1,k}^{\diamond}
=\left(P_{0}K\right)^{-1}\bm{U}_{1,k}^{\ast},\quad\forall k\in\mathbb{Z}_{+}
\]

\noindent and then by combining (\ref{eq069}), we can deduce
\begin{equation}\label{eq070}
\aligned
\bm{U}_{1,k+1}^{\diamond}
&=\left(I-PK\right)\bm{U}_{1,k}^{\diamond}
+\overline{H}\widehat{\mathcal{D}}_{k}
+\bm{Y}_{d}-\bm{N}_{k}
-P_{\delta}\begin{bmatrix}-P_{01}^{-1}P_{02}\\I\end{bmatrix}\bm{U}_{2,0}^{\ast}\\
&=\left(I-PK\right)\bm{U}_{1,k}^{\diamond}
+\overline{H}F\widehat{\overline{\mathcal{X}}}_{k}
+\bm{\theta}_{k},\quad\forall k\in\mathbb{Z}_{+}
\endaligned
\end{equation}

\noindent where
\begin{equation}\label{eq071}
\bm{\theta}_{k}
=\bm{Y}_{d}-\bm{N}_{k}
-P_{\delta}\begin{bmatrix}-P_{01}^{-1}P_{02}\\I\end{bmatrix}\bm{U}_{2,0}^{\ast},~~\forall k\in\mathbb{Z}_{+}.
\end{equation}

\noindent We can verify with (\ref{eq071}) that $\bm{\theta}_{k}$ is bounded, namely, $\left\|\bm{\theta}_{k}\right\|\leq\beta_{\bm{\theta}}$ for some bound $\beta_{\bm{\theta}}\geq0$ given by
\[\aligned
\beta_{\bm{\theta}}
&=\left\|\bm{Y}_{d}\right\|
+\beta_{\bm{N}}
+\beta_{\delta}\left\|\begin{bmatrix}-P_{01}^{-1}P_{02}\\I\end{bmatrix}\right\|
\left\|\begin{bmatrix}
-K_{2}\left(P_{0}K\right)^{-1}P_{01}&I-K_{2}\left(P_{0}K\right)^{-1}P_{02}
\end{bmatrix}\right\|
\left\|\bm{U}_{0}\right\|.
\endaligned\]

{\it Step ii):} Let us combine (\ref{eq054}) and (\ref{eq057}) to deduce
\begin{equation*}\label{}
\aligned
\widehat{\overline{\mathcal{X}}}_{k+1}
&=\left(\overline{A}-\overline{L}\,\overline{C}\right)\widehat{\overline{\mathcal{X}}}_{k}
-\overline{B}_{0}K\left(\bm{E}_{k}+\overline{H}\widehat{\mathcal{D}}_{k}\right)
+\overline{L}\bm{E}_{k}\\
&=\left(\overline{A}-\overline{L}\,\overline{C}-\overline{B}_{0}K\overline{H}F\right)\widehat{\overline{\mathcal{X}}}_{k}
+\left(\overline{L}-\overline{B}_{0}K\right)\bm{E}_{k}\\
&=\left(\overline{A}-\overline{L}\,\overline{C}-\overline{B}_{0}K\overline{H}F\right)\widehat{\overline{\mathcal{X}}}_{k}
+\left(\overline{B}_{0}K-\overline{L}\right)P\bm{U}_{k}
+\left(\overline{L}-\overline{B}_{0}K\right)\left(\bm{Y}_{d}-\bm{N}_{k}\right),\quad\forall k\in\mathbb{Z}_{+}
\endaligned
\end{equation*}

\noindent which, together with $\bm{U}_{k}^{\ast}=\Omega^{-1}\bm{U}_{k}$, leads to
\begin{equation}\label{eq072}
\aligned
\widehat{\overline{\mathcal{X}}}_{k+1}
&=\left(\overline{A}-\overline{L}\,\overline{C}-\overline{B}_{0}K\overline{H}F\right)\widehat{\overline{\mathcal{X}}}_{k}
+\left(\overline{B}_{0}K-\overline{L}\right)P\Omega\bm{U}_{k}^{\ast}
+\left(\overline{L}-\overline{B}_{0}K\right)\left(\bm{Y}_{d}-\bm{N}_{k}\right),\quad\forall k\in\mathbb{Z}_{+}.
\endaligned
\end{equation}

\noindent If we notice the fact of (\ref{eq069}), then we can validate
\begin{equation}\label{eq073}
\aligned
P\Omega\bm{U}_{k}^{\ast}
&=P_{0}\Omega\bm{U}_{k}^{\ast}
+P_{\delta}\begin{bmatrix}
K_{1}\left(P_{0}K\right)^{-1}&-P_{01}^{-1}P_{02}\\
K_{2}\left(P_{0}K\right)^{-1}&I
\end{bmatrix}\bm{U}_{k}^{\ast}\\
&=\left[I+P_{\delta}K\left(P_{0}K\right)^{-1}\right]\bm{U}_{1,k}^{\ast}
+P_{\delta}\begin{bmatrix}
-P_{01}^{-1}P_{02}\\
I
\end{bmatrix}\bm{U}_{2,0}^{\ast}\\
&=PK\bm{U}_{1,k}^{\diamond}
+P_{\delta}\begin{bmatrix}
-P_{01}^{-1}P_{02}\\
I
\end{bmatrix}\bm{U}_{2,0}^{\ast},\quad\forall k\in\mathbb{Z}_{+}.
\endaligned
\end{equation}

\noindent By substituting (\ref{eq073}) into (\ref{eq072}) and adopting (\ref{eq071}), we can obtain
\begin{equation}\label{eq074}
\aligned
\widehat{\overline{\mathcal{X}}}_{k+1}
&=\left(\overline{A}-\overline{L}\,\overline{C}-\overline{B}_{0}K\overline{H}F\right)\widehat{\overline{\mathcal{X}}}_{k}
+\left(\overline{B}_{0}K-\overline{L}\right)PK\bm{U}_{1,k}^{\diamond}
+\left(\overline{L}-\overline{B}_{0}K\right)\bm{\theta}_{k},\quad\forall k\in\mathbb{Z}_{+}.
\endaligned
\end{equation}

\noindent Obviously, we can write (\ref{eq070}) and (\ref{eq074}) in a compact form of
\begin{equation}\label{eq075}
\aligned
\begin{bmatrix}
\bm{U}_{1,k+1}^{\diamond}\\
\widehat{\overline{\mathcal{X}}}_{k+1}
\end{bmatrix}
&=\begin{bmatrix}
I-PK&\overline{H}F\\
\left(\overline{B}_{0}K-\overline{L}\right)PK&\overline{A}-\overline{L}\,\overline{C}-\overline{B}_{0}K\overline{H}F
\end{bmatrix}
\begin{bmatrix}
\bm{U}_{1,k}^{\diamond}\\
\widehat{\overline{\mathcal{X}}}_{k}
\end{bmatrix}
+\begin{bmatrix}I\\\overline{L}-\overline{B}_{0}K\end{bmatrix}\bm{\theta}_{k},\quad\forall k\in\mathbb{Z}_{+}.
\endaligned
\end{equation}

\noindent For (\ref{eq075}), we have
\begin{equation}\label{eq076}
\aligned
\begin{bmatrix}
I&0\\
\overline{B}_{0}K&I
\end{bmatrix}
\begin{bmatrix}
I-PK&\overline{H}F\\
\left(\overline{B}_{0}K-\overline{L}\right)PK&\overline{A}-\overline{L}\,\overline{C}-\overline{B}_{0}K\overline{H}F
\end{bmatrix}
\begin{bmatrix}
I&0\\
\overline{B}_{0}K&I
\end{bmatrix}^{-1}
&=\begin{bmatrix}
I-PK&\overline{H}F\\
-\overline{L}P_{\delta}K&\overline{A}-\overline{L}\,\overline{C}
\end{bmatrix}\\
&\triangleq\mathcal{M}.
\endaligned
\end{equation}

\noindent Based on (\ref{eq076}) and with the boundedness of $\bm{\theta}_{k}$ obtained in the {\it Step i)}, we can derive from (\ref{eq075}) that if $\rho(\mathcal{M})<1$, namely, (\ref{eq062}) holds, then both $\bm{U}_{1,k}^{\diamond}$ and $\widehat{\overline{\mathcal{X}}}_{k}$ are bounded, namely, $\left\|\bm{U}_{1,k}^{\diamond}\right\|\leq\beta_{\bm{U}^{\diamond}}$ and $\left\|\widehat{\overline{\mathcal{X}}}_{k}\right\|\leq\beta_{\widehat{\overline{\mathcal{X}}}}$ for some bounds $\beta_{\bm{U}^{\diamond}}\geq0$ and $\beta_{\widehat{\overline{\mathcal{X}}}}\geq0$. Consequently, we can verify the boundedness of $\bm{U}_{k}$ such that
\[\aligned
\left\|\bm{U}_{k}\right\|
&\leq\left\|\Omega\right\|\left(\left\|\bm{U}_{1,k}^{\ast}\right\|+\left\|\bm{U}_{2,k}^{\ast}\right\|\right)\\
&\leq\left\|\Omega\right\|\left(\left\|P_{0}K\right\|\left\|\bm{U}_{1,k}^{\diamond}\right\|+\left\|\bm{U}_{2,0}^{\ast}\right\|\right)\\
&\leq\left\|\Omega\right\|\big[\left\|P_{0}K\right\|\beta_{\bm{U}^{\diamond}}
+\left\|\begin{bmatrix}-K_{2}\left(P_{0}K\right)^{-1}P_{01}&I-K_{2}\left(P_{0}K\right)^{-1}P_{02}\end{bmatrix}\right\|
\left\|\bm{U}_{0}\right\|\big]\\
&\triangleq\beta_{\bm{U}},\quad\forall k\in\mathbb{Z}_{+}.
\endaligned
\]

\noindent On the contrary, if $\bm{U}_{k}$ is bounded such that $\left\|\bm{U}_{k}\right\|\leq\beta_{\bm{U}}$ holds for some bound $\beta_{\bm{U}}\geq0$, then
\[\aligned
\left\|\bm{U}_{1,k}^{\diamond}\right\|
&\leq\left\|\left(P_{0}K\right)^{-1}\begin{bmatrix}I&0\end{bmatrix}\Omega^{-1}\right\|\left\|\bm{U}_{k}\right\|\\
&\leq\left\|\left(P_{0}K\right)^{-1}\begin{bmatrix}I&0\end{bmatrix}\Omega^{-1}\right\|\beta_{\bm{U}}\\
&\triangleq\beta_{\bm{U}^{\diamond}},\quad\forall k\in\mathbb{Z}_{+}
\endaligned\]

\noindent and
\[\aligned
\left\|\bm{E}_{k}\right\|
&\leq\left\|\bm{Y}_{d}\right\|
+\left(\left\|P_{0}\right\|+\left\|P_{\delta}\right\|\right)\left\|\bm{U}_{k}\right\|
+\left\|\bm{N}_{k}\right\|\\
&\leq\left\|\bm{Y}_{d}\right\|
+\left(\left\|P_{0}\right\|+\beta_{\delta}\right)\beta_{\bm{U}}
+\beta_{\bm{N}}\\
&\triangleq\beta_{\bm{E}},\quad\forall k\in\mathbb{Z}_{+}.
\endaligned
\]

\noindent Based on the boundedness of both $\bm{E}_{k}$ and $\bm{U}_{k}$, we analyze (\ref{eq054}) under the spectral radius condition (\ref{eq017}) and can establish the boundedness of $\widehat{\overline{\mathcal{X}}}_{k}$ such that $\left\|\widehat{\overline{\mathcal{X}}}_{k}\right\|\leq\beta_{\widehat{\overline{\mathcal{X}}}}$ for some bound $\beta_{\widehat{\overline{\mathcal{X}}}}\geq0$. This, together with the boundedness of $\bm{U}_{1,k}^{\diamond}$ and $\bm{\theta}_{k}$, implies that we can conclude $\rho(\mathcal{M})<1$ from (\ref{eq075}) by resorting to the relation (\ref{eq076}).

In addition, the use of (\ref{eq071}) yields $\Delta\bm{\theta}_{k}=-\Delta\bm{N}_{k}$ and $\Delta^{2}\bm{\theta}_{k}=-\Delta^{2}\bm{N}_{k}$, and thus we can employ (\ref{eq075}) to deduce
\begin{equation}\label{eq077}
\aligned
\begin{bmatrix}
\Delta\bm{U}_{1,k+1}^{\diamond}\\
\Delta\widehat{\overline{\mathcal{X}}}_{k+1}
\end{bmatrix}
&=\begin{bmatrix}
I-PK&\overline{H}F\\
\left(\overline{B}_{0}K-\overline{L}\right)PK&\overline{A}-\overline{L}\,\overline{C}-\overline{B}_{0}K\overline{H}F
\end{bmatrix}
\begin{bmatrix}
\Delta\bm{U}_{1,k}^{\diamond}\\
\Delta\widehat{\overline{\mathcal{X}}}_{k}
\end{bmatrix}
-\begin{bmatrix}I\\\overline{L}-\overline{B}_{0}K\end{bmatrix}\Delta\bm{N}_{k},\quad\forall k\in\mathbb{Z}_{+}
\endaligned
\end{equation}

\noindent and
\begin{equation}\label{eq078}
\aligned
\begin{bmatrix}
\Delta^{2}\bm{U}_{1,k+1}^{\diamond}\\
\Delta^{2}\widehat{\overline{\mathcal{X}}}_{k+1}
\end{bmatrix}
&=\begin{bmatrix}
I-PK&\overline{H}F\\
\left(\overline{B}_{0}K-\overline{L}\right)PK&\overline{A}-\overline{L}\,\overline{C}-\overline{B}_{0}K\overline{H}F
\end{bmatrix}
\begin{bmatrix}
\Delta^{2}\bm{U}_{1,k}^{\diamond}\\
\Delta^{2}\widehat{\overline{\mathcal{X}}}_{k}
\end{bmatrix}
-\begin{bmatrix}I\\\overline{L}-\overline{B}_{0}K\end{bmatrix}\Delta^{2}\bm{N}_{k},\quad\forall k\in\mathbb{Z}_{+}.
\endaligned
\end{equation}

\noindent With (\ref{eq062}), we benefit from (\ref{eq076}) to derive that there exists some induced matrix norm to satisfy (see also \cite[Lemma 5.6.10]{hj:85})
\[
\left\|\begin{bmatrix}
I-PK&\overline{H}F\\
\left(\overline{B}_{0}K-\overline{L}\right)PK&\overline{A}-\overline{L}\,\overline{C}-\overline{B}_{0}K\overline{H}F
\end{bmatrix}\right\|\triangleq\lambda<1
\]

\noindent and consequently, we can employ (\ref{eq077}) and (\ref{eq078}) to arrive at
\begin{equation}\label{eq079}
\aligned
\left\|\Delta\bm{U}_{1,k}^{\diamond}\right\|
&\leq\lambda^{k}\left\|\begin{bmatrix}
\Delta\bm{U}_{1,0}^{\diamond}\\
\Delta\widehat{\overline{\mathcal{X}}}_{0}
\end{bmatrix}\right\|
+\frac{\Ds\left\|\begin{bmatrix}I\\\overline{L}-\overline{B}_{0}K\end{bmatrix}\right\|}
{\Ds1-\lambda}\beta_{\Delta\bm{N}}\\
\left\|\Delta^{2}\bm{U}_{1,k}^{\diamond}\right\|
&\leq\lambda^{k}\left\|\begin{bmatrix}
\Delta^{2}\bm{U}_{1,0}^{\diamond}\\
\Delta^{2}\widehat{\overline{\mathcal{X}}}_{0}
\end{bmatrix}\right\|
+\frac{\Ds\left\|\begin{bmatrix}I\\\overline{L}-\overline{B}_{0}K\end{bmatrix}\right\|}
{\Ds1-\lambda}\beta_{\Delta^{2}\bm{N}}
\endaligned,\quad\forall k\in\mathbb{Z}_{+}.
\end{equation}

\noindent We can also leverage (\ref{eq077}) and (\ref{eq078}) to deduce
\begin{equation}\label{eq080}
\aligned
\limsup_{k\to\infty}\left\|\Delta\bm{U}_{1,k}^{\diamond}\right\|
&\leq\frac{\Ds\left\|\begin{bmatrix}I\\\overline{L}-\overline{B}_{0}K\end{bmatrix}\right\|}
{\Ds1-\lambda}\beta_{\Delta\bm{N}}^{ess}\\
\limsup_{k\to\infty}\left\|\Delta^{2}\bm{U}_{1,k}^{\diamond}\right\|
&\leq\frac{\Ds\left\|\begin{bmatrix}I\\\overline{L}-\overline{B}_{0}K\end{bmatrix}\right\|}
{\Ds1-\lambda}\beta_{\Delta^{2}\bm{N}}^{ess}.
\endaligned
\end{equation}

\noindent Due to
\[\aligned
\bm{U}_{k}
&=\Omega\bm{U}_{k}^{\ast}\\
&=\begin{bmatrix}
K_{1}\left(P_{0}K\right)^{-1}\\
K_{2}\left(P_{0}K\right)^{-1}
\end{bmatrix}\bm{U}_{1,k}^{\ast}
+\begin{bmatrix}
-P_{01}^{-1}P_{02}\\
I
\end{bmatrix}\bm{U}_{2,k}^{\ast}\\
&=K\bm{U}_{1,k}^{\diamond}
+\begin{bmatrix}
-P_{01}^{-1}P_{02}\\
I
\end{bmatrix}\bm{U}_{2,0}^{\ast},\quad\forall k\in\mathbb{Z}_{+}
\endaligned\]

\noindent we can deduce
\begin{equation}\label{eq081}
\Delta\bm{U}_{k}
=K\Delta\bm{U}_{1,k}^{\diamond},\quad\forall k\in\mathbb{Z}_{+}
\end{equation}

\noindent and
\begin{equation}\label{eq082}
\Delta^{2}\bm{U}_{k}
=K\Delta^{2}\bm{U}_{1,k}^{\diamond},\quad\forall k\in\mathbb{Z}_{+}.
\end{equation}

\noindent Based on (\ref{eq081}) and (\ref{eq082}), we can obtain (\ref{eq063}) (respectively, (\ref{eq064})) from (\ref{eq079}) (respectively, (\ref{eq080})).

{\it Step iii):} Let us denote
\[\aligned
\mathcal{M}_{0}
&=\begin{bmatrix}
I-P_{0}K&\overline{H}F\\
0&\overline{A}-\overline{L}\,\overline{C}
\end{bmatrix}\\
\tilde{\alpha}
&=\begin{bmatrix}
\Phi_{2}K&0&0&0
\end{bmatrix}\\
\tilde{\beta}&=\begin{bmatrix}
0&0&\Phi_{1}^{\tp}\left(-Q_{11}-\overline{L}^{\tp}Q_{21}\right)&\Phi_{1}^{\tp}\left(-Q_{21}^{\tp}-\overline{L}^{\tp}Q_{22}\right)
\end{bmatrix}.
\endaligned\]

\noindent Then with the use of the Schur's complement lemma, we know that (\ref{eq065}) holds if and only if
\begin{equation}\label{eq083}
\begin{bmatrix}
-Q&(\star)\\
Q\mathcal{M}_{0}&-Q
\end{bmatrix}<0
\end{equation}

\noindent and
\begin{equation}\label{eq084}
\aligned
\begin{bmatrix}
-Q&(\star)\\
Q\mathcal{M}_{0}&-Q
\end{bmatrix}
+\tau\tilde{\alpha}^{\tp}\tilde{\alpha}
+\tau^{-1}\tilde{\beta}^{\tp}\tilde{\beta}
<0.
\endaligned
\end{equation}

\noindent From \cite[Lemma 2]{mjdy:11}, the satisfaction of (\ref{eq084}) for some positive scalar $\tau>0$ leads to
\begin{equation*}\label{}
\aligned
\begin{bmatrix}
-Q&(\star)\\
Q\mathcal{M}_{0}&-Q
\end{bmatrix}
+\tilde{\beta}^{\tp}\Sigma\tilde{\alpha}
+\tilde{\alpha}^{\tp}\Sigma^{\tp}\tilde{\beta}
&=\begin{bmatrix}
-Q&(\star)\\
Q\mathcal{M}&-Q
\end{bmatrix}\\
&<0
\endaligned
\end{equation*}

\noindent which implies $\rho(\mathcal{M})<1$ regardless of any model uncertainties satisfying (\ref{eq042}) according to the Lyapunov stability criteria. For the same reason, we benefit from (\ref{eq083}) to derive that $\rho(\mathcal{M}_{0})<1$ holds. A consequence of the upper block-triangular structure of $\mathcal{M}_{0}$ is that both (\ref{eq017}) and (\ref{eq048}) are satisfied. Thus, we complete the proof of Lemma \ref{lem7}.
\end{proof}

\begin{rem}\label{rem08}
From Lemma \ref{lem7}, we can gain the boundedness of all signals of our interest, such as $\bm{E}_{k}$, $\overline{\bm{U}}_{k}$, $\bm{Y}_{k}$, $\mathcal{D}_{k}$, $\overline{\mathcal{D}}_{k}$, $\overline{\mathcal{X}}_{k}$, and $\widehat{\overline{\mathcal{X}}}_{k}$. This actually benefits from the redesign of the ESO (\ref{eq054}) and the ESO-based feedback controller (\ref{eq057}). In particular, (\ref{eq063}) and (\ref{eq064}) can provide estimations for $\overline{\bm{U}}_{k}$ and $\Delta\overline{\bm{U}}_{k}$, respectively, owing to $\overline{\bm{U}}_{k}=-\Delta\bm{U}_{k}$.
%
%
\end{rem}

With Lemma \ref{lem7}, we are in position to propose the ESO-based controller design of the system (\ref{eq02}) in the following theorem.

\begin{thm}\label{thm6}
Consider the system (\ref{eq02}) with the ESO (\ref{eq054}), and let the $k$-input be designed in the ESO-based feedback form of (\ref{eq057}) under the spectral radius condition (\ref{eq062}). Then the closed-loop system described by (\ref{eq02}), (\ref{eq054}), and (\ref{eq057}) is robustly $k$-stable if and only if both the spectral radius conditions (\ref{eq017}) and (\ref{eq048}) hold. Further, when adopting the selection candidate of $\overline{H}$ as
\begin{equation}\label{eq085}
\overline{H}=\left(P_{0}K\right)^{-1}
\end{equation}

\noindent the robust $k$-superstability can be accomplished for the closed-loop system described by (\ref{eq02}), (\ref{eq054}), and (\ref{eq057}) if and only if both the spectral radius conditions (\ref{eq017}) and (\ref{eq048}) are satisfied, where there can be found some class $\mathcal{K}_{\infty}$ function $\chi$ such that
\begin{equation}\label{eq086}
\limsup_{k\to\infty}\left\|\bm{E}_{k}\right\|
\leq\chi\left(\limsup_{k\to\infty}\left\|\widetilde{\mathcal{D}}_{k}\right\|\right).
\end{equation}

\noindent In particular, the linear matrix inequality condition (\ref{eq065}) can be utilized to verify the satisfaction of (\ref{eq062}) when $P_{\delta}$ is structured in the form of (\ref{eq042}).
\end{thm}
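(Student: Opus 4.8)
The plan is to collapse the whole closed loop into a single error recursion driven by the estimation error, read the stability conditions off a block-triangular matrix, and lean on Lemma~\ref{lem7} to tame the fact that the effective disturbances are not exogenous. First I would insert $\overline{\bm{U}}_{k}=-K\bm{E}_{k}-K\overline{H}\widehat{\mathcal{D}}_{k}$ into the reduced plant (\ref{eq051}), namely $\bm{E}_{k+1}=\bm{E}_{k}+P_{0}\overline{\bm{U}}_{k}+\mathcal{D}_{k}$, and substitute $\widehat{\mathcal{D}}_{k}=\mathcal{D}_{k}-\widetilde{\mathcal{D}}_{k}$ with $\widetilde{\mathcal{D}}_{k}=F\widetilde{\overline{\mathcal{X}}}_{k}$. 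This yields
\[
\bm{E}_{k+1}=(I-P_{0}K)\bm{E}_{k}+(I-P_{0}K\overline{H})\mathcal{D}_{k}+P_{0}K\overline{H}\widetilde{\mathcal{D}}_{k},
\]
which, coupled with the observation-error recursion (\ref{eq055}), forms a system in $(\bm{E}_{k},\widetilde{\overline{\mathcal{X}}}_{k})$ whose state matrix is block-upper-triangular with diagonal blocks $I-P_{0}K$ and $\overline{A}-\overline{L}\,\overline{C}$. Its spectral radius is below one precisely when (\ref{eq048}) and (\ref{eq017}) both hold, which is the structure underpinning both the sufficiency and necessity claims, in the same spirit as Theorems~\ref{thm2} and \ref{thm4}.

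For sufficiency of the robust $k$-stability, I would invoke Lemma~\ref{lem7}: under the standing hypothesis (\ref{eq062}) together with (\ref{eq017}) and (\ref{eq048}), $\bm{U}_{k}$ is bounded and, crucially, $\|\Delta\bm{U}_{k}\|$ and $\|\Delta^{2}\bm{U}_{k}\|$ obey the estimates (\ref{eq063})--(\ref{eq064}). Since $\mathcal{D}_{k}=-\Delta\bm{N}_{k}-P_{\delta}\Delta\bm{U}_{k}$ by (\ref{eq052}) and $\overline{\mathcal{D}}_{k}=-F^{\tp}(\Delta^{2}\bm{N}_{k}+P_{\delta}\Delta^{2}\bm{U}_{k})$ by (\ref{eq056}), these estimates convert the endogenous forcing into bounds proportional to $\beta_{\Delta\bm{N}}$, $\beta_{\Delta\bm{N}}^{ess}$ plus decaying transients. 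Choosing an induced norm in which both diagonal blocks are contractions (via \cite[Lemma 5.6.10]{hj:85}) and solving the two recursions then gives the boundedness $\|\bm{E}_{k}\|\leq\chi_{1}(\beta_{\Delta\bm{N}})+\zeta(\beta_{0},k)$ and attractiveness $\limsup_{k\to\infty}\|\bm{E}_{k}\|\leq\chi_{2}(\beta_{\Delta\bm{N}}^{ess})$ demanded by Definition~\ref{def01}. The necessity follows the pattern of Theorems~\ref{thm2} and \ref{thm4}: robust $k$-stability forces the governing closed-loop matrix to be Schur, and the block-triangular form splits this requirement into (\ref{eq048}) and (\ref{eq017}).

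For the robust $k$-superstability the decisive step is that the choice $\overline{H}=(P_{0}K)^{-1}$ is well defined because (\ref{eq048}) makes $P_{0}K$ nonsingular, and it forces $P_{0}K\overline{H}=I$, so the $\mathcal{D}_{k}$-term cancels and the error recursion collapses to $\bm{E}_{k+1}=(I-P_{0}K)\bm{E}_{k}+\widetilde{\mathcal{D}}_{k}$. Under (\ref{eq048}) an induced norm gives $\|I-P_{0}K\|<1$, whence $\limsup_{k\to\infty}\|\bm{E}_{k}\|\leq\limsup_{k\to\infty}\|\widetilde{\mathcal{D}}_{k}\|/(1-\|I-P_{0}K\|)$, which is exactly (\ref{eq086}) with $\chi(s)=s/(1-\|I-P_{0}K\|)$. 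To turn this into the superattractiveness bound $\chi_{2}(\beta_{\Delta^{2}\bm{N}}^{ess})$ of Definition~\ref{def02}, I would propagate $\widetilde{\mathcal{D}}_{k}=F\widetilde{\overline{\mathcal{X}}}_{k}$ through (\ref{eq055}): under (\ref{eq017}), $\limsup_{k\to\infty}\|\widetilde{\overline{\mathcal{X}}}_{k}\|\leq\limsup_{k\to\infty}\|\overline{\mathcal{D}}_{k}\|/(1-\|\overline{A}-\overline{L}\,\overline{C}\|)$, and (\ref{eq056}) with the second estimate of (\ref{eq064}) bounds $\limsup_{k\to\infty}\|\overline{\mathcal{D}}_{k}\|$ by a multiple of $\beta_{\Delta^{2}\bm{N}}^{ess}$; the boundedness half follows identically from the uniform estimate (\ref{eq063}). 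The final assertion, that the linear matrix inequality (\ref{eq065}) certifies (\ref{eq062}) under the structured uncertainty (\ref{eq042}), is the concluding statement of Lemma~\ref{lem7} and is simply cited.

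The step I expect to be the main obstacle is the endogeneity of the driving signals: both $\mathcal{D}_{k}$ and $\overline{\mathcal{D}}_{k}$ contain the terms $P_{\delta}\Delta\bm{U}_{k}$ and $P_{\delta}\Delta^{2}\bm{U}_{k}$, so they are not genuine exogenous disturbances and a naive disturbance-to-state estimate would be circular. The entire argument therefore hinges on Lemma~\ref{lem7} having already closed this loop by bounding $\Delta\bm{U}_{k}$ and $\Delta^{2}\bm{U}_{k}$ against the true exogenous quantities $\beta_{\Delta\bm{N}}$ and $\beta_{\Delta^{2}\bm{N}}$; once those bounds are available the remaining manipulations are the routine geometric-series estimates already used in Theorems~\ref{thm1}--\ref{thm3}. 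A secondary delicate point is the exact cancellation $P_{0}K\overline{H}=I$, which is what distinguishes $k$-superstability from mere $k$-stability and why the specific selection (\ref{eq085}) of $\overline{H}$ is indispensable.
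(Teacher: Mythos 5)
Your proposal is correct and follows essentially the same route as the paper's proof: substituting (\ref{eq057}) into (\ref{eq051}) to get the error recursion (\ref{eq087}), coupling it with the observation-error dynamics (\ref{eq055}) into the block-upper-triangular system (\ref{eq088}) whose Schur property splits exactly into (\ref{eq017}) and (\ref{eq048}), invoking Lemma~\ref{lem7}'s estimates (\ref{eq063})--(\ref{eq064}) to de-circularize the endogenous terms $P_{\delta}\Delta\bm{U}_{k}$ and $P_{\delta}\Delta^{2}\bm{U}_{k}$, and using the cancellation $P_{0}K\overline{H}=I$ under (\ref{eq085}) to collapse to (\ref{eq090}) and obtain (\ref{eq086}). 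Your sequential (cascade) propagation of limsup bounds through (\ref{eq055}) in the superstability part is only a cosmetic variant of the paper's coupled block-system estimate (\ref{eq093})--(\ref{eq094}), and you correctly identify the endogeneity of $\mathcal{D}_{k}$ and $\overline{\mathcal{D}}_{k}$ as the crux that Lemma~\ref{lem7} resolves.
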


\begin{proof}
By substituting (\ref{eq057}) into (\ref{eq051}), we can obtain
\begin{equation}\label{eq087}
\aligned
\bm{E}_{k+1}
&=\left(I-P_{0}K\right)\bm{E}_{k}
-P_{0}K\overline{H}\widehat{\mathcal{D}}_{k}+\mathcal{D}_{k}\\
&=\left(I-P_{0}K\right)\bm{E}_{k}
+P_{0}K\overline{H}F\widetilde{\overline{\mathcal{X}}}_{k}
+\left(I-P_{0}K\overline{H}\right)\mathcal{D}_{k},~\forall k\in\mathbb{Z}_{+}
\endaligned
\end{equation}

\noindent which, together with (\ref{eq03}), (\ref{eq052}), (\ref{eq055}), and (\ref{eq056}), results in
\begin{equation}\label{eq088}
\aligned
\begin{bmatrix}
\bm{E}_{k+1}\\
\widetilde{\overline{\mathcal{X}}}_{k+1}
\end{bmatrix}
&=\begin{bmatrix}
I-P_{0}K&P_{0}K\overline{H}F\\
0&\overline{A}-\overline{L}\,\overline{C}
\end{bmatrix}
\begin{bmatrix}
\bm{E}_{k}\\
\widetilde{\overline{\mathcal{X}}}_{k}
\end{bmatrix}
+\begin{bmatrix}I-P_{0}K\overline{H}&0\\0&I\end{bmatrix}
\begin{bmatrix}\mathcal{D}_{k}\\\overline{\mathcal{D}}_{k}\end{bmatrix}\\
&=\begin{bmatrix}
I-P_{0}K&P_{0}K\overline{H}F\\
0&\overline{A}-\overline{L}\,\overline{C}
\end{bmatrix}
\begin{bmatrix}
\bm{E}_{k}\\
\widetilde{\overline{\mathcal{X}}}_{k}
\end{bmatrix}
-\begin{bmatrix}I-P_{0}K\overline{H}&0\\0&F^{\tp}\end{bmatrix}
\begin{bmatrix}\Delta\bm{N}_{k}\\\Delta^{2}\bm{N}_{k}\end{bmatrix}\\
&~~~-\begin{bmatrix}\left(I-P_{0}K\overline{H}\right)P_{\delta}&0\\0&F^{\tp}P_{\delta}\end{bmatrix}
\begin{bmatrix}\Delta\bm{U}_{k}\\\Delta^{2}\bm{U}_{k}\end{bmatrix},\quad\forall k\in\mathbb{Z}_{+}.
\endaligned
\end{equation}

\noindent Based on (\ref{eq088}) and for the same reason as the proof of Theorem \ref{thm2}, we can conclude that the robust $k$-stability holds only if
\[
\rho\left(\begin{bmatrix}
I-P_{0}K&P_{0}K\overline{H}F\\
0&\overline{A}-\overline{L}\,\overline{C}
\end{bmatrix}\right)<1
\]

\noindent which is equivalent to (\ref{eq017}) and (\ref{eq048}). Conversely, two spectral radius conditions ensure the existence of some induced matrix norm to satisfy
\[
\left\|\begin{bmatrix}
I-P_{0}K&P_{0}K\overline{H}F\\
0&\overline{A}-\overline{L}\,\overline{C}
\end{bmatrix}\right\|\triangleq\hat{\lambda}<1.
\]

\noindent With (\ref{eq063}) and (\ref{eq064}), we denote $\lambda_{\max}=\max\left\{\lambda,\hat{\lambda}\right\}$ and $\lambda_{\min}=\min\left\{\lambda,\hat{\lambda}\right\}$, and hence the use of (\ref{eq088}) leads to
\begin{equation}\label{eq093}
\aligned
\left\|\bm{E}_{k}\right\|
&\leq\beta_{0}\hat{\lambda}^{k}
+\frac{\Ds3\beta_{1}}{\Ds1-\hat{\lambda}}\beta_{\Delta\bm{N}}
+3\beta_{2}\sum_{i=0}^{k-1}\hat{\lambda}^{k-1-i}\left(\gamma_{1}\beta_{\Delta\bm{N}}+\gamma_{2}\lambda^{i}\right)\\
&\leq\beta_{0}\lambda_{\max}^{k}
+\frac{\Ds3\left(\beta_{1}+\gamma_{1}\beta_{2}\right)}{\Ds1-\hat{\lambda}}\beta_{\Delta\bm{N}}
+3\gamma_{2}\beta_{2}\sum_{i=0}^{k-1}\lambda_{\max}^{k-1-i}\lambda_{\min}^{i}\\
&\leq\left(\beta_{0}+\frac{\Ds3\gamma_{2}\beta_{2}}{\Ds\lambda_{\max}-\lambda_{\min}}\right)\lambda_{\max}^{k}
+\frac{\Ds3\left(\beta_{1}+\gamma_{1}\beta_{2}\right)}{\Ds1-\hat{\lambda}}\beta_{\Delta\bm{N}}
\endaligned
\end{equation}

\noindent and
\begin{equation}\label{eq094}
\aligned
\limsup_{k\to\infty}\left\|\bm{E}_{k}\right\|
&\leq\frac{\Ds3\left(\beta_{1}+\gamma_{3}\beta_{2}\right)}{\Ds1-\hat{\lambda}}\beta_{\Delta\bm{N}}^{ess}
\endaligned
\end{equation}

\noindent where
\[\aligned
\beta_{0}=\left\|\begin{bmatrix}\bm{E}_{0}\\\widetilde{\overline{\mathcal{X}}}_{0}\end{bmatrix}\right\|,\quad
\beta_{1}=\left\|\begin{bmatrix}I-P_{0}K\overline{H}&0\\0&F^{\tp}\end{bmatrix}\right\|,\quad
\beta_{2}=\left\|\begin{bmatrix}\left(I-P_{0}K\overline{H}\right)P_{\delta}&0\\0&F^{\tp}P_{\delta}\end{bmatrix}\right\|.
\endaligned\]

\noindent Clearly, (\ref{eq093}) and (\ref{eq094}) can ensure the robust $k$-stability of (\ref{eq088}) according to Definition \ref{def01}.

When considering the selection of $\overline{H}$ in (\ref{eq085}), (\ref{eq088}) becomes
\begin{equation}\label{eq089}
\aligned
\begin{bmatrix}
\bm{E}_{k+1}\\
\widetilde{\overline{\mathcal{X}}}_{k+1}
\end{bmatrix}
&=\begin{bmatrix}
I-P_{0}K&F\\
0&\overline{A}-\overline{L}\,\overline{C}
\end{bmatrix}
\begin{bmatrix}
\bm{E}_{k}\\
\widetilde{\overline{\mathcal{X}}}_{k}
\end{bmatrix}
-\begin{bmatrix}0\\F^{\tp}\end{bmatrix}\Delta^{2}\bm{N}_{k}
-\begin{bmatrix}0\\F^{\tp}P_{\delta}\end{bmatrix}\Delta^{2}\bm{U}_{k},\quad\forall k\in\mathbb{Z}_{+}.
\endaligned
\end{equation}

\noindent Based on (\ref{eq089}) and with (\ref{eq063}) and (\ref{eq064}), we can follow the same steps used in the above proof of robust $k$-stability to conclude that (\ref{eq017}) and (\ref{eq048}) are necessary and sufficient for the robust $k$-superstability. In particular, it follows from (\ref{eq089}) that due to $\widetilde{\mathcal{D}}_{k}=F\widetilde{\overline{\mathcal{X}}}_{k}$,
\begin{equation}\label{eq090}
\bm{E}_{k+1}
=\left(I-P_{0}K\right)\bm{E}_{k}
+\widetilde{\mathcal{D}}_{k},\quad\forall k\in\mathbb{Z}_{+}.
\end{equation}

\noindent By again resorting to (\ref{eq048}), we can gain (\ref{eq086}) based on exploring (\ref{eq090}) in the same way as the derivation of (\ref{eq031}).

Additionally, the satisfaction of (\ref{eq062}) based on (\ref{eq065}) has been examined in Lemma \ref{lem7}. The proofs of all results in this theorem are thus completed.
\end{proof}

\begin{rem}\label{rem09}
In Theorem \ref{thm6}, the design of the ESO (\ref{eq054}) and the ESO-based feedback controller (\ref{eq057}) is enabled to reach not only the robust $k$-stability but also the robust $k$-superstability of the system (\ref{eq02}), regardless of the simultaneous presence of both model and external uncertainties. It is particularly possible for the ESO-based design to render the tracking error continuously dependent on the observation error with respect to the variation of the iteration-varying uncertainty even in the presence of the model uncertainty. By Theorem \ref{thm6}, the tracking error decreases to vanish, especially when the variation of the iteration-varying uncertainty quasi-disappears. Since $\mathcal{D}_{k}$ involves all uncertainty information, regardless of from the plant model or the external disturbance, (\ref{eq086}) and (\ref{eq090}) disclose that the tracking error only depends on the observation error of all uncertainties. This gives the possibility to achieve high-precision tracking tasks through the proposed ESO-based design method.
\end{rem}
\begin{rem}\label{rem011}
In the same way as the proof of Theorem \ref{thm6}, we can also estimate the observation error $\widetilde{\overline{\mathcal{X}}}_{k}$. More specifically, the substitution of (\ref{eq056}) into (\ref{eq055}) arrives at
\begin{equation*}\label{}
\widetilde{\overline{\mathcal{X}}}_{k+1}
=\left(\overline{A}-\overline{L}\,\overline{C}\right)\widetilde{\overline{\mathcal{X}}}_{k}
-F^{\tp}\Delta^{2}\bm{N}_{k}
-F^{\tp}P_{\delta}\Delta^{2}\bm{U}_{k},\quad\forall k\in\mathbb{Z}_{+}.
\end{equation*}

\noindent Let $\tilde{\lambda}=\left\|\overline{A}-\overline{L}\,\overline{C}\right\|<1$, and then using (\ref{eq063}) in Lemma \ref{lem7} yields
\[
\left\|\widetilde{\overline{\mathcal{X}}}_{k}\right\|
\leq\left(\left\|\widetilde{\overline{\mathcal{X}}}_{0}\right\|
+\frac{\Ds\gamma_{2}\beta_{\delta}}{\Ds\lambda_{\max}-\lambda_{\min}}\right)\lambda_{\max}^{k}
+\frac{\Ds1+\gamma_{1}\beta_{\delta}}{\Ds1-\tilde{\lambda}}\beta_{\Delta^{2}\bm{N}}
\]

\noindent where $\lambda_{\max}=\max\left\{\lambda,\tilde{\lambda}\right\}$ and $\lambda_{\min}=\min\left\{\lambda,\tilde{\lambda}\right\}$. Similarly, the use of (\ref{eq064}) in Lemma \ref{lem7} leads to
\begin{equation*}\label{}
\limsup_{k\to\infty}\left\|\widetilde{\overline{\mathcal{X}}}_{k}\right\|
\leq\frac{\Ds1+\gamma_{3}\beta_{\delta}}{\Ds1-\tilde{\lambda}}\beta_{\Delta^{2}\bm{N}}^{ess}.
\end{equation*}

\noindent Namely, the boundedness and superattractiveness properties of $\widetilde{\overline{\bm{X}}}_{k}$ in Lemma \ref{lem3} can be induced for $\widetilde{\overline{\mathcal{X}}}_{k}$.
\end{rem}

As an application for the robust tracking of the plant (\ref{eq01}), we can equivalently develop an updating law from the ESO-based feedback controller (\ref{eq057}) as
\begin{equation}\label{eq091}
\aligned
\bm{U}_{k+1}
&=\bm{U}_{k}+K\bm{E}_{k}+H\widehat{\mathcal{D}}_{k}\\
&=\bm{U}_{k}+K\left(\bm{E}_{k}+\overline{H}\widehat{\mathcal{D}}_{k}\right),\quad\forall k\in\mathbb{Z}_{+}.
\endaligned
\end{equation}

\noindent From Theorem \ref{thm6}, it follows that the ESO-based design of (\ref{eq054}) and (\ref{eq091}) works robustly for the plant (\ref{eq01}) to realize the tracking of any desired output target, regardless of unknown model and external uncertainties. A perfect tracking objective particularly can be realized only when the variation of the iteration-varying uncertainty is driven to quasi-disappear. These reflect that (\ref{eq091}) not only maintains the same robust tracking performances with (\ref{eq037}) but also greatly improves them to effectively work against the model uncertainties.

\section{Model-Free Design Without Plant Knowledge}\label{sec5}

In this section, we attempt to explore the Kalman state-space design methods of data-driven learning for the model-free case of the plant (\ref{eq01}), i.e., $P_{0}=0$. Since there exists no prior nominal model knowledge of the plant (\ref{eq01}), those previously established results in Sections \ref{sec3} and \ref{sec4} are no longer applicable. In spite of this crucial issue, we incorporate the idea of Theorem \ref{thm6} into the ESO-based controller analysis and design in the model-free case of (\ref{eq01}), and construct some full-row rank matrix $\widetilde{P}_{0}\in\mathbb{R}^{p\times m}$ to overcome the difficulty caused by $P_{0}=0$. As a consequence, we can also determine some gain matrix $K$ such that
\begin{equation}\label{eq095}
\rho\left(I-\widetilde{P}_{0}K\right)<1.
\end{equation}

If we denote $\widetilde{P}_{\delta}=P_{\delta}-\widetilde{P}_{0}$, then the system (\ref{eq02}) can read as
\begin{equation}\label{eq096}
\bm{E}_{k+1}
=\bm{E}_{k}+\widetilde{P}_{0}\overline{\bm{U}}_{k}+\mathbb{D}_{k},\quad\forall k\in\mathbb{Z}_{+}
\end{equation}

\noindent where
\begin{equation}\label{eq097}
\mathbb{D}_{k}
=\bm{D}_{k}+\widetilde{P}_{\delta}\overline{\bm{U}}_{k},\quad\forall k\in\mathbb{Z}_{+}.
\end{equation}

\noindent From (\ref{eq096}), we construct an extended $k$-state and its estimation state, respectively, as
\[
\overline{\mathbb{X}}_{k}=\begin{bmatrix}\bm{E}_{k}\\\mathbb{D}_{k}\end{bmatrix}\in\mathbb{R}^{2p},\quad
\widehat{\overline{\mathbb{X}}}_{k}=\begin{bmatrix}\widehat{\bm{E}}_{k}\\\widehat{\mathbb{D}}_{k}\end{bmatrix}\in\mathbb{R}^{2p},\quad\forall k\in\mathbb{Z}_{+}
\]

\noindent and then propose an ESO in a Kalman state-space form of
\begin{equation}\label{eq098}
\widehat{\overline{\mathbb{X}}}_{k+1}
=\left(\overline{A}-\overline{L}\,\overline{C}\right)\widehat{\overline{\mathbb{X}}}_{k}
+\overline{\widetilde{B}}_{0}\overline{\bm{U}}_{k}+\overline{L}\bm{E}_{k},\quad\forall k\in\mathbb{Z}_{+}
\end{equation}

\noindent where
\[
\overline{\widetilde{B}}_{0}
=\overline{C}^{\tp}\widetilde{P}_{0}
=\begin{bmatrix}
\widetilde{P}_{0}\\
0\\
\end{bmatrix}.
\]

\noindent Note that (\ref{eq098}) is an available ESO, and we can always provide it with the condition (\ref{eq017}). Similarly to (\ref{eq057}), the use of the ESO (\ref{eq098}) leads to an ESO-based feedback controller as
\begin{equation}\label{eq099}
\overline{\bm{U}}_{k}
=-K\left(\bm{E}_{k}+\overline{H}\,\widehat{\mathbb{D}}_{k}\right),\quad\forall k\in\mathbb{Z}_{+}.
\end{equation}

In addition, let $\widetilde{P}_{0}$ be selected such that $\widetilde{P}_{\delta}$ can be structured in a form of
\begin{equation}\label{eq100}
\widetilde{P}_{\delta}=\widetilde{\Phi}_{1}\widetilde{\Sigma}\widetilde{\Phi}_{2}\quad\hbox{with}\quad \widetilde{\Sigma}^{\tp}\widetilde{\Sigma}\leq I
\end{equation}

\noindent for some known matrices $\widetilde{\Phi}_{1}\in\mathbb{R}^{p\times q}$ and $\widetilde{\Phi}_{2}\in\mathbb{R}^{r\times m}$ and some unknown matrix $\widetilde{\Sigma}\in\mathbb{R}^{q\times r}$. Correspondingly, we propose a test condition given by
\begin{equation}\label{eq101}
\aligned
&\left[\begin{array}{cc}
-Q_{11}&(\star)\\
-Q_{21}&-Q_{22}\\
Q_{11}(I-\widetilde{P}_{0}K)&Q_{11}\overline{H}F+Q_{21}^{\tp}(\overline{A}-\overline{L}\,\overline{C})\\
Q_{21}(I-\widetilde{P}_{0}K)&Q_{21}\overline{H}F+Q_{22}(\overline{A}-\overline{L}\,\overline{C})\\
\tau \widetilde{\Phi}_{2}K&0\\
0&0
\end{array}\right.\\
&\left.\begin{array}{cccc}
(\star)&(\star)&(\star)&(\star)\\
(\star)&(\star)&(\star)&(\star)\\
-Q_{11}&(\star)&(\star)&(\star)\\
-Q_{21}&-Q_{22}&(\star)&(\star)\\
0&0&-\tau I&(\star)\\
\widetilde{\Phi}_{1}^{\tp}\left(-Q_{11}-\overline{L}^{\tp}Q_{21}\right)&\widetilde{\Phi}_{1}^{\tp}\left(-Q_{21}^{\tp}-\overline{L}^{\tp}Q_{22}\right)&0&-\tau I
\end{array}\right]\\
&~~<0
\endaligned
\end{equation}

\noindent for some positive-definite matrix $Q>0$ in the form of (\ref{eq043}) and some positive scalar $\tau>0$. Now, we can develop the following ESO-based design result in the model-free case.

\begin{thm}\label{thm7}
Consider the system (\ref{eq02}) with the ESO (\ref{eq098}), and let the $k$-input be designed in the ESO-based feedback form of (\ref{eq099}), and $\widetilde{P}_{0}$, $K$, and $\overline{L}$ be selected to fulfill the spectral radius condition as
\begin{equation}\label{eq102}
\aligned
\rho\left(\begin{bmatrix}
I-P_{\delta}K&\overline{H}F\\
\overline{L}\left(\widetilde{P}_{0}-P_{\delta}\right)K&\overline{A}-\overline{L}\,\overline{C}
\end{bmatrix}\right)<1.
\endaligned
\end{equation}

\noindent Then the closed-loop system given in the form of (\ref{eq02}), (\ref{eq098}), and (\ref{eq099}) is robustly $k$-stable if and only if both the spectral radius conditions (\ref{eq017}) and (\ref{eq095}) hold. Furthermore, when adopting the selection candidate of $\overline{H}$ as
\begin{equation}\label{eq103}
\overline{H}=\left(\widetilde{P}_{0}K\right)^{-1}
\end{equation}

\noindent the robust $k$-superstability can be accomplished for the closed-loop system described by (\ref{eq02}), (\ref{eq098}), and (\ref{eq099}) if and only if both the spectral radius conditions (\ref{eq017}) and (\ref{eq095}) are satisfied, where there can be found some class $\mathcal{K}_{\infty}$ function $\chi$ such that
\begin{equation}\label{eq104}
\limsup_{k\to\infty}\left\|\bm{E}_{k}\right\|
\leq\chi\left(\limsup_{k\to\infty}\left\|\mathbb{D}_{k}-\widehat{\mathbb{D}}_{k}\right\|\right).
\end{equation}

\noindent Particularly, the linear matrix inequality condition (\ref{eq101}) can be applied to validate the satisfaction of (\ref{eq102}) when $\widetilde{P}_{0}$ is selected such that $\widetilde{P}_{\delta}$ is structured in the form of (\ref{eq100}).
\end{thm}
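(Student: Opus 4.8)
The plan is to exploit the structural parallel between the model-free system (\ref{eq096})--(\ref{eq099}) and the uncertain system (\ref{eq051})--(\ref{eq057}) already handled in Theorem \ref{thm6}: the substitutions $P_{0}\to\widetilde{P}_{0}$, $P_{\delta}\to\widetilde{P}_{\delta}$, $\mathcal{D}_{k}\to\mathbb{D}_{k}$, and $\widehat{\overline{\mathcal{X}}}_{k}\to\widehat{\overline{\mathbb{X}}}_{k}$ carry the entire machinery of Section \ref{sec4} into the model-free setting, with (\ref{eq048}) replaced by (\ref{eq095}), (\ref{eq062}) by (\ref{eq102}), and (\ref{eq065}) by (\ref{eq101}). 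First I would substitute (\ref{eq099}) into (\ref{eq096}) and use $\widehat{\mathbb{D}}_{k}=F\widehat{\overline{\mathbb{X}}}_{k}$ together with the observation error $\widetilde{\overline{\mathbb{X}}}_{k}=\overline{\mathbb{X}}_{k}-\widehat{\overline{\mathbb{X}}}_{k}$ and $\widetilde{\mathbb{D}}_{k}=\mathbb{D}_{k}-\widehat{\mathbb{D}}_{k}=F\widetilde{\overline{\mathbb{X}}}_{k}$ to obtain the error recursion $\bm{E}_{k+1}=(I-\widetilde{P}_{0}K)\bm{E}_{k}+\widetilde{P}_{0}K\overline{H}F\widetilde{\overline{\mathbb{X}}}_{k}+(I-\widetilde{P}_{0}K\overline{H})\mathbb{D}_{k}$, the exact analogue of (\ref{eq087}). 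Coupling this with $\widetilde{\overline{\mathbb{X}}}_{k+1}=(\overline{A}-\overline{L}\,\overline{C})\widetilde{\overline{\mathbb{X}}}_{k}+\overline{\mathbb{D}}_{k}$, where $\overline{\mathbb{D}}_{k}=-F^{\tp}\left(\Delta^{2}\bm{N}_{k}+\widetilde{P}_{\delta}\Delta^{2}\bm{U}_{k}\right)$ by the computation behind (\ref{eq056}), produces an upper block-triangular cascade mirroring (\ref{eq088}). Since its diagonal blocks are $I-\widetilde{P}_{0}K$ and $\overline{A}-\overline{L}\,\overline{C}$, its spectral radius falls below $1$ if and only if (\ref{eq017}) and (\ref{eq095}) both hold, which supplies the necessity and the spectral part of the sufficiency for robust $k$-stability.

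The genuine obstacle, exactly as in Section \ref{sec4}, is that $\overline{\mathbb{D}}_{k}$ contains $\widetilde{P}_{\delta}\Delta^{2}\bm{U}_{k}$ while no a priori bound on $\bm{U}_{k}$ is available; here $\widetilde{P}_{\delta}=P_{\delta}-\widetilde{P}_{0}$ need not even be small, since $\widetilde{P}_{0}$ is a matrix we freely construct. I would therefore first establish the model-free counterpart of Lemma \ref{lem7}: under (\ref{eq017}) and (\ref{eq095}), the input $\bm{U}_{k}$ is bounded if and only if (\ref{eq102}) holds, together with variation bounds of the form (\ref{eq063})--(\ref{eq064}) on $\Delta\bm{U}_{k}$ and $\Delta^{2}\bm{U}_{k}$. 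The proof repeats the three-step device: a nonsingular transformation $\bm{U}_{k}^{\ast}=\Omega^{-1}\bm{U}_{k}$ built from a full-row-rank splitting $\widetilde{P}_{0}=\begin{bmatrix}\widetilde{P}_{01}&\widetilde{P}_{02}\end{bmatrix}$ with $\widetilde{P}_{01}$ nonsingular (and $K$ of full column rank, since (\ref{eq095}) forces $\widetilde{P}_{0}K$ nonsingular) that isolates an iteration-invariant part $\bm{U}_{2,k}^{\ast}$ and a controlled part $\bm{U}_{1,k}^{\diamond}$; coupling $\bm{U}_{1,k}^{\diamond}$ with $\widehat{\overline{\mathbb{X}}}_{k}$ into a cascade whose system matrix is similar, via the transformation used in (\ref{eq076}) with $\overline{\widetilde{B}}_{0}$ in place of $\overline{B}_{0}$, to the matrix in (\ref{eq102}); and then reading boundedness and the $\Delta$, $\Delta^{2}$ estimates directly off $\rho(\mathcal{M})<1$. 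The structured-uncertainty reduction (\ref{eq101})$\Rightarrow$(\ref{eq102}) then follows by the Schur complement together with \cite[Lemma 2]{mjdy:11} applied to $\widetilde{P}_{\delta}=\widetilde{\Phi}_{1}\widetilde{\Sigma}\widetilde{\Phi}_{2}$, verbatim as in Step iii) of Lemma \ref{lem7}.

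With these input bounds in hand, robust $k$-stability follows from the cascade by the summation estimate of Theorem \ref{thm6}, i.e.\ the analogues of (\ref{eq093})--(\ref{eq094}), because $\|\Delta\bm{U}_{k}\|$ is controlled by $\beta_{\Delta\bm{N}}$ plus a geometrically decaying term. For superstability I would impose the selection (\ref{eq103}), so that $\widetilde{P}_{0}K\overline{H}=I$ and $I-\widetilde{P}_{0}K\overline{H}=0$; the cascade then collapses to the analogue of (\ref{eq089}), in which the $\Delta\bm{N}_{k}$ and $\Delta\bm{U}_{k}$ channels vanish and only the $\Delta^{2}\bm{N}_{k}$ and $\Delta^{2}\bm{U}_{k}$ channels remain. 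Invoking the $\Delta^{2}$ bound from the Lemma \ref{lem7} counterpart shows the forcing is governed solely by $\beta_{\Delta^{2}\bm{N}}$ and $\beta_{\Delta^{2}\bm{N}}^{ess}$, which yields robust $k$-superstability through Definition \ref{def02}. Finally, the first block row of the collapsed cascade reads $\bm{E}_{k+1}=(I-\widetilde{P}_{0}K)\bm{E}_{k}+\widetilde{\mathbb{D}}_{k}$; since (\ref{eq095}) gives $\|I-\widetilde{P}_{0}K\|<1$ in some induced norm, taking $\limsup$ produces (\ref{eq104}) with $\chi(s)=s/\left(1-\|I-\widetilde{P}_{0}K\|\right)$, completing the argument.
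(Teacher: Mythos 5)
Your proposal is correct and follows exactly the route the paper intends: the paper's own proof of Theorem \ref{thm7} is literally ``derived in the same way as that of Theorem \ref{thm6},'' i.e.\ the substitutions $P_{0}\to\widetilde{P}_{0}$, $P_{\delta}\to\widetilde{P}_{\delta}$, $\mathcal{D}_{k}\to\mathbb{D}_{k}$ carrying (\ref{eq087})--(\ref{eq090}) and the Lemma \ref{lem7} input-boundedness machinery into the model-free setting, with (\ref{eq048}), (\ref{eq062}), (\ref{eq065}) replaced by (\ref{eq095}), (\ref{eq102}), (\ref{eq101}). Your write-up in fact makes explicit the steps the paper leaves implicit (the model-free counterpart of Lemma \ref{lem7} and the collapsed cascade yielding (\ref{eq104})), and all of them check out.
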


\begin{proof}
This proof can be derived in the same way as that of Theorem \ref{thm6}.
\end{proof}

\begin{rem}\label{rem12}
For the extreme case when $P_{\delta}=\widetilde{P}_{0}$, Theorem \ref{thm7} always holds and provides the same result with Theorem \ref{thm3}. For the other extreme case when $P_{\delta}=0$ (namely, $P=0$), the plant (\ref{eq01}) collapses into $\bm{Y}_{k}=\bm{N}_{k}$, $\forall k\in\mathbb{Z}_{+}$ that does not have control efforts. It is hence trivial that the plant (\ref{eq01}) can not achieve any tracking task. This is also consistent with Theorem \ref{thm7}. Actually, the spectral radius condition (\ref{eq102}) is never satisfied, regardless of any selections of $\widetilde{P}_{0}$, $K$, and $\overline{L}$, due to
\begin{equation*}\label{}
\aligned
\begin{bmatrix}
\widetilde{P}_{0}K&0\\
-\overline{C}^{\tp}\widetilde{P}_{0}K&I
\end{bmatrix}
&\begin{bmatrix}
I&\overline{H}F\\
\overline{L}\widetilde{P}_{0}K&\overline{A}-\overline{L}\,\overline{C}
\end{bmatrix}
\begin{bmatrix}
\widetilde{P}_{0}K&0\\
-\overline{C}^{\tp}\widetilde{P}_{0}K&I
\end{bmatrix}^{-1}
=\begin{bmatrix}
I&\widetilde{P}_{0}K\overline{H}F\\
0&\overline{A}-\overline{L}\,\overline{C}-\overline{C}^{\tp}\widetilde{P}_{0}K\overline{H}F
\end{bmatrix}.
\endaligned
\end{equation*}

\end{rem}

Based on the robust stability results of Theorems \ref{thm3}, \ref{thm6}, and \ref{thm7}, we can arrive at a step-by-step strategy to perform the tracking task of the plant (\ref{eq01}) with any desired target even in the absence of nominal model knowledge. In particular, the use of Theorem \ref{thm7} leads to an ESO-based data-driven learning algorithm for the plant (\ref{eq01}) in the model-free case, which is given by (\ref{eq098}) and
\begin{equation}\label{eq105}
\bm{U}_{k+1}
=\bm{U}_{k}+K\left(\bm{E}_{k}+\overline{H}\,\widehat{\mathbb{D}}_{k}\right),\quad\forall k\in\mathbb{Z}_{+}.
\end{equation}

\noindent When using the gain selection (\ref{eq103}) in the updating law (\ref{eq105}), we can decrease the tracking error and ensure it to only depend on the observation error for the unknown information involved in the plant (\ref{eq01}), as shown by (\ref{eq104}). This is thanks to leveraging the idea of ESO for the design of data-driven learning.

\section{Applications and Verifications}\label{sec6}

\subsection{Applications to ILC}

Consider ILC of a system, evolving with respect to both an infinite iteration axis $k \in \mathbb{Z}_+$ and a finite time axis $t \in \mathbb{Z}_{T}\triangleq\{0,1,\cdots,T\}$, as
\begin{equation}\label{eq106}
\left\{\aligned
x_k(t+1)&=Ax_k(t)+Bu_k(t)+w_k(t)\\
y_k(t)&=Cx_k(t)+v_k(t)
\endaligned\right.
\end{equation}

\noindent where
\begin{itemize}
\item
$x_k(t)\in\mathbb{R}^{n_{s}}$--state with the initial value: $x_k(0), \forall k\in \mathbb{Z}_+$;

\item
$u_k(t)\in\mathbb{R}^{n_{i}}$--input with the initial value: $u_0(t), \forall t \in \mathbb{Z}_{T-1}$;

\item
$y_k(t)\in\mathbb{R}^{n_{o}}$--output measured for specified tracking tasks;

\item
$w_k(t)\in\mathbb{R}^{n_{s}}$, $v_k(t)\in\mathbb{R}^{n_{o}}$--state and output disturbances;

\item
$A$, $B$, $C$--system matrices with appropriate dimensions.
\end{itemize}

\noindent Of interest in ILC for the system (\ref{eq106}) is to achieve the output tracking of any desired reference $y_{d}(t)$, namely, $\lim_{k\to\infty}y_{k}(t)=y_{d}(t)$, $\forall t=1,2,\cdots,T$. To this end, let $CB$ be of full-row rank.

By the lifting technique (see, e.g., \cite{bta:06,acm:07}), we define
\[
\aligned
\bm{U}_{k}
&=\left[u_{k}^{\tp}(0),u_{k}^{\tp}(1),\cdots,u_{k}^{\tp}(T-1)\right]\\
\bm{Y}_{k}
&=\left[y_{k}^{\tp}(1),y_{k}^{\tp}(2),\cdots,y_{k}^{\tp}(T)\right]
\endaligned
\]

\noindent and let $\bm{W}_{k}$ and $\bm{E}_{k}$, $\bm{V}_{k}$ and $\bm{Y}_{d}$ be defined in the same ways as $\bm{U}_{k}$ and $\bm{Y}_{k}$, respectively. Then we can reformulate the system (\ref{eq106}) into the framework of (\ref{eq01}) by denoting $\bm{N}_{k}=Q\bm{W}_{k}+\bm{V}_{k}+Sx_{k}(0)$, where
\begin{equation}\label{eq107}
\aligned
P&=\begin{bmatrix}
    CB & 0 & \cdots & 0 \\
    CAB & CB & \ddots & \vdots \\
    \vdots & \vdots & \ddots & 0 \\
    CA^{T-1}B & CA^{T-2}B & \cdots & CB
  \end{bmatrix},\quad
Q=\begin{bmatrix}
    C & 0 & \cdots & 0 \\
    CA & C & \ddots & \vdots \\
    \vdots & \vdots & \ddots & 0 \\
    CA^{T-1} & CA^{T-2} & \cdots & C
  \end{bmatrix},\quad
  S=\begin{bmatrix}
    CA \\
    CA^{2} \\
    \vdots \\
    CA^{T}
  \end{bmatrix}.
\endaligned
\end{equation}

\noindent Thus, the ESO-based design result of Theorem \ref{thm7} is applicable to ILC of the system (\ref{eq106}). If, for $i$, $j=1$, $2$, $\cdots$, $N$, we denote
\[
\aligned
K&=\left[K_{ij}\right]~\hbox{with}~K_{ij}\in\mathbb{R}^{n_{i}\times n_{o}},\quad
\overline{H}=\left[\overline{H}_{ij}\right]~\hbox{with}~\overline{H}_{ij}\in\mathbb{R}^{n_{o}\times n_{o}}\\
\widehat{\mathbb{D}}_{k}&=\left[\widehat{d}_{k}^{\tp}(1),\widehat{d}_{k}^{\tp}(2),\cdots,\widehat{d}_{k}^{\tp}(N)\right]^{\tp}
~\hbox{with}~\widehat{d}_{k}^{\tp}(j)\in\mathbb{R}^{n_{o}}
\endaligned
\]

\noindent then we can use (\ref{eq105}) to obtain an updating law for (\ref{eq106}) as
\begin{equation}\label{eq108}
\aligned
u_{k+1}(t)&=u_{k}(t)+\sum_{j=1}^{N}K_{t+1,j}\Big[e_{k}(j)
+\sum_{j=1}^{N}\overline{H}_{t+1,j}\widehat{d}_{k}(j)\Big],\quad\forall t\in\mathbb{Z}_{N-1},\forall k\in\mathbb{Z}_{+}.
\endaligned
\end{equation}

\noindent Dfferent from traditional ILC only using the tracking error, the updating law (\ref{eq108}) incorporates the ESO-based information.

\subsection{Simulation Verifications}

For illustration, we consider the system (\ref{eq01}) with an unknown plant $P$ resulting from ILC of the system (\ref{eq106}) that is subjected to big uncertainties in the elements of the system matrices $A$, $B$, and $C$. To be specific,  there exists a maximum $30\%$ uncertainty in each element of $A$, $B$, and $C$ vs. the estimations (\cite{mjdy:11,amc:07}):
\[
A_{0}=\begin{bmatrix}0.72&0&0\\
1&-1.04&-0.81\\
0&0.81&0\end{bmatrix},\quad
B_{0}=\begin{bmatrix}1\\0\\0\end{bmatrix},\quad
C_{0}=\begin{bmatrix}1\\-0.98\\-1.09\end{bmatrix}^{\tp}.
\]

\noindent Despite this issue, we focus on realizing the output tracking of the desired target $\bm{Y}_{d}=[y_{d}(1)$, $y_{d}(2),\cdots,y_{d}(T)]^{\tp}$ induced from $y_{d}(t)=\sin\left(8t/T\right)$. For the iteration-varying uncertainty $\bm{N}_{k}$, we consider it of $T$ identical entries given by $\sum_{i=0}^{k}\sin\left(i/200\right)/(i+1)^{0.5}$. Here, we adopt $T=20$ without loss of generality.

To implement our ESO-based data-driven learning algorithm, we construct a full-row rank matrix $\widetilde{P}_{0}\in\mathbb{R}^{20\times 20}$ as
\[
\widetilde{P}_{0}
=\begin{bmatrix}
1& 0 & \cdots & \cdots & \cdots & 0 \\
-0.5& 1 & \ddots& \ddots & \ddots & \vdots \\
-0.25& -0.5 & \ddots & \ddots & \ddots & \vdots \\
0& -0.25 & \ddots & \ddots & \ddots & \vdots \\
\vdots& \ddots & \ddots & \ddots & \ddots & 0 \\
0& \cdots & 0 & -0.25 & -0.5 & 1
\end{bmatrix}
\]

\noindent which has a different matrix structure from $P$ in (\ref{eq107}). For the ESO (\ref{eq098}), we use the zero initial estimation state (i.e., $\widehat{\overline{\mathbb{X}}}_{0}=0$), and select $L_1=0.9I$ and $L_2=0.1I$ to fulfill the condition (\ref{eq017}). Hence, we can apply the ESO-based updating law (\ref{eq105}), where we adopt the zero initial input (i.e., $\bm{U}_{0}=0$) and select the gain matrix as $K=0.5\widetilde{P}_{0}^{-1}$ based on the condition (\ref{eq095}). From (\ref{eq103}), we can further determine $\overline{H}=2I$.

\begin{figure}[!t]
  \centering
  \includegraphics[width=0.6\hsize]{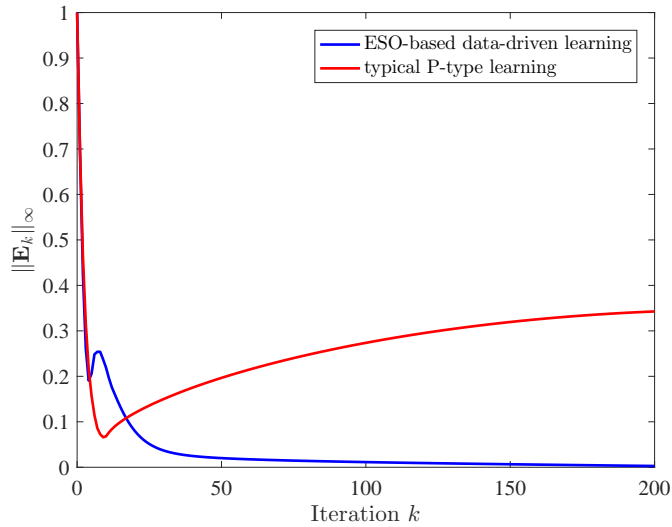}
  \caption{Iteration performances of ESO-based data-driven learning vs. typical P-type learning.}\label{fig1}
\end{figure}

To demonstrate the performances for ESO-based data-driven learning, we plot the iteration process evaluated by $\|\bm{E}_k\|_{\infty}$, and simultaneously depict the processes for both updating laws (\ref{eq09}) and (\ref{eq105}) in Fig. \ref{fig1} for comparison. This figure clearly verifies that the ESO-based data-driven learning law (\ref{eq105}) is effective in overcoming effects of unknown plant models and iteration-varying uncertainties. However, by contrast, the typical P-type updating law (\ref{eq09}) no longer works effectively. Thus, we validate the theoretical results of Theorem \ref{thm7} for data-driven learning of model-free systems in spite of iteration-varying uncertainties.

\section{Conclusions}\label{sec7}

In this paper, a data-driven learning problem with the aim to achieve the output tracking of any desired target for model-free systems has been discussed, to which an ESO-based controller design approach has been incorporated. A three-step problem-solving strategy has been adopted, which is realized by making a problem transformation to treat two robust stability problems in the Kalman state-space framework. It provides a perspective to exploit data-driven learning by benefiting from model-based control methods, regardless of model-free plants. Of particular interest is the establishment of ESO-based feedback controllers to address the effects caused by iteration-varying uncertainties. In addition, our ESO-based data-driven learning approach has been introduced to robust ILC systems in the presence of both model uncertainties and iteration-varying external disturbances and initial shifts. We have also validated the application results to ILC via illustrative simulations.

\section*{Acknowledgement}\label{}

The author would like to thank Dr. Jingyao Zhang, Beihang University, for his useful discussions and helps on implementing the simulations, and Mr. Zirong Guo, Beihang University, for his helpful discussions.

\bibliographystyle{ieeetr}

\begin{thebibliography}{99}

\bibitem{hw:13}
Z. Hou and Z. Wang, ``From model-based control to data-driven control: Survey, classification, and perspective,'' {\it Information Sciences}, vol. 235, pp. 3--35, Jun. 2013.

\bibitem{bta:06}
D. A. Bristow, M. Tharayil, and A. G. Alleyne, ``A survey of iterative learning control: A learning-based method for high-performance tracking control,'' {\it IEEE Control Systems Magazine}, vol. 26, no. 3, pp. 96--114, Jun. 2006.

\bibitem{acm:07}
H.-S. Ahn, Y. Chen, and K. L. Moore, ``Iterative learning control: Brief survey and categorization,'' {\it IEEE Transactions on Systems, Man, and Cybernetics, Part C: Applications and Reviews}, vol. 37, no. 6, pp. 1099--1121, Nov. 2007.

\bibitem{mm:17}
D. Meng and K. L. Moore, ``Robust iterative learning control for nonrepetitive uncertain systems,'' {\it IEEE Transactions on Automatic Control}, vol. 62, no. 2, pp. 907--913, Feb. 2017.

\bibitem{chhj:15}
R. Chi, Z. Hou, B. Huang, and S. Jin, ``A unified data-driven design framework of optimality-based generalized iterative learning control,'' {\it Computers \& Chemical Engineering}, vol. 779, pp. 10--23, Jun. 2015.

\bibitem{clzhh:17}
R. Chi, X. Liu, R. Zhang, Z. Hou, and B. Huang, ``Constrained data-driven optimal iterative learning control,'' {\it Journal of Process Control}, vol. 55, pp. 10--29, Jul. 2017.
%

\bibitem{bhyy:20}
X. Bu, Z. Hou, Q. Yu, and Y. Yang, ``Quantized data driven iterative learning control for a class of nonlinear systems with sensor saturation,'' {\it IEEE Transactions on Systems, Man, and Cybernetics: Systems}, to appear, DOI: 10.1109/TSMC.2018.2866909.

\bibitem{mz:19}
D. Meng and J. Zhang, ``Optimization-based learning control for nonlinear time-varying systems,'' arXiv:1908.02447, Aug. 2019. 

\bibitem{ct:17}
C.-W. Chen and T.-C. Tsao, ``Data-driven progressive and iterative learning control,'' {\it IFAC-PapersOnLine}, vol. 50, no. 1, pp. 4825--4830, Jul. 2017.

\bibitem{rb:18}
U. Rosolia and F. Borrelli, ``Learning model predictive control for iterative tasks. A data-driven control framework,'' {\it  IEEE Transactions on Automatic Control}, vol. 63, no. 7, pp. 1883--1896, Jul. 2018.

\bibitem{hzc:18}
Y. Hui, S. Zhang, and R. Chi, ``A data-driven optimal ILC method incorporated with extended state observer for nonlinear discrete-time repetitive systems,'' in {\it Proceedings of the IEEE 7th Data Driven Control and Learning Systems Conference}, Enshi, China, pp. 77--80, May 25-27, 2018.

\bibitem{hchh:20}
Y. Hui, R. Chi, B. Huang, and Z. Hou, ``Extended state observer-based data-driven iterative learning control for permanent magnet linear motor with initial shifts and disturbances,'' {\it IEEE Transactions on Systems, Man, and Cybernetics: Systems}, to appear, DOI: 10.1109/TSMC.2019.2907379.

\bibitem{s:18}
D. Shen, ``Data-driven learning control for stochastic nonlinear systems: Multiple communication constraints and limited storage,'' {\it IEEE Transactions on Neural Networks and Learning Systems}, vol. 29, no. 6, pp. 2429--2440, Jun. 2018.

\bibitem{jps:11}
P. Janssens, G. Pipeleers, and J. Swevers, ``Model-free iterative learning control for LTI systems and experimental validation on a linear motor test setup,'' in {\it Proceedings of the American Control Conference}, San Francisco, CA, USA, pp. 4287--4292, Jun. 29-Jul.1, 2011

\bibitem{bhc:13}
X. Bu, Z. Hou, and R. Chi, ``Model free adaptive iterative learning control for farm vehicle path tracking,'' in {\it Proceedings of the 3rd IFAC International Conference on Intelligent Control and Automation Science}, Chengdu, China, pp. 153--158, Sept. 2-4, 2013.

\bibitem{rpppd:13}
M.-B. R\u{a}dac, R.-E. Precup, E. M. Petriu, S. Preitl, and C.-A. Drago\c{s}, ``Data-driven reference trajectory tracking algorithm and experimental validation,'' {\it IEEE Transactions on Industrial Informatics}, vol. 9, no. 4, pp. 2327--2336, Nov. 2013.

\bibitem{ro:19}
R. de Rozario and T. Oomen, ``Learning control without prior models: Multi-variable model-free IIC, with application to a wide-format printer,'' {\it IFAC-PapersOnLine}, vol. 52, no. 15, pp. 91--96, Sept. 2019.

\bibitem{mmw:19}
M. Minakais, S. Mishra, and J. T. Wen, ``Database-driven iterative learning for building temperature control,'' {\it IEEE Transactions on Automation Science and Engineering}, vol. 16, no. 4, pp. 1896--1906, Oct. 2019.

\bibitem{clh:20}
R. Chi, Y. Lv, and B. Huang, ``Distributed iterative learning temperature control for multi-zone HVAC system,'' {\it Journal of the Franklin Institute}, vol. 357, no. 2, pp. 810--831, Jan. 2020.

\bibitem{am:06}
P. J. Antsaklis and A. N. Michel, {\it Linear Systems}. Boston: Birkhauser, 2006.

\bibitem{gl:95}
M. Green and D. J. N. Limebeer, {\it Linear Robust Control}. Englewood Cliffs, N.J.: Prentice Hall, 1995.

\bibitem{hm:07}
J. H\"{a}t\"{o}nen and Kevin L. Moore, ``A new Arimoto-type algorithm to estimate states for repetitive processes: Iterative learning observer (ILO),'' in {\it Proceedings of the IEEE 22nd International Symposium on Intelligent Control}, Singapore, pp. 232--236, Oct. 1-3, 2007.

\bibitem{lgn:19}
P. Leissner, S. Gunnarsson, and M. Norrl\"{o}f, ``Some controllability aspects for iterative learning control,'' {\it Asian Journal of Control}, vol. 21, no. 3, pp. 1057--1063, May 2019.

\bibitem{wl:11}
Z. Wang and D. Liu, ``Data-based controllability and observability analysis of linear discrete-time systems,'' {\it IEEE Transactions on Neural Networks}, vol. 22, no. 12, pp. 2388--2392, Dec. 2011.

\bibitem{lyw:14}
D. Liu, P. Yan, and Q. Wei, ``Data-based analysis of discrete-time linear systems in noisy environment: Controllability and observability,'' {\it Information Sciences}, vol. 288, pp. 314--329, Dec. 2014.
%

\bibitem{pt:20}
C. De Persis and P. Tesi, ``Formulas for data-driven control: Stabilization, optimality and robustness,'' {\it IEEE Transactions on Automatic Control}, vol. 65, no. 3, pp. 909--924, Mar. 2020.

\bibitem{wetc:20}
H. J. van Waarde, J. Eising, H. L. Trentelman, and M. K. Camlibel, ``Data informativity: A new perspective on data-driven analysis and control,''
{\it IEEE Transactions on Automatic Control}, to appear, DOI: 10.1109/TAC.2020.2966717.

\bibitem{sly:14}
J. Sun, S. Li, and J. Yang, ``Iterative learning control with extended state observer for iteration-varying disturbance rejection,'' in {\it Proceeding of the 11th World Congress on Intelligent Control and Automation}, Shenyang, China, pp. 1148--1153, Jun. 29-Jul. 4, 2014.

\bibitem{l:15}
X. Li, ``Iterative extended state observer and its application in iterative learning control,'' {\it Control and Decision}, vol. 30, no. 3, pp. 474--478, Mar. 2015.

\bibitem{tyld:19}
Z. Tang, Y. Yu, Z. Li, and Z. Ding, ``Disturbance rejection via iterative learning controlwith a disturbance observer for active magnetic bearing systems,'' {\it Frontiers of Information Technology \& Electronic Engineering}, vol. 20, no. 1, pp. 131--140, Jan. 2019.

\bibitem{hlr:20}
S. Hao, T. Liu, and E. Rogers, ``Extended state observer based indirect-type ILC for single-input single-output batch processes with time- and batch-varying uncertainties,'' {\it Automatica}, vol. 112, Article ID: 108673, Feb. 2020.

\bibitem{hj:85}
R. A. Horn and C. R. Johnson, {\it Matrix Analysis}. Cambridge: Cambridge University Press, 1985.

\bibitem{mjdy:11}
D. Meng, Y. Jia, J. Du, and F. Yu, ``Robust learning controller design for MIMO stochastic discrete-time systems: An $H_{\infty}$-based approach,'' {\it International Journal of Adaptive Control and Signal Processing}, vol. 25, no. 7, pp. 653--670, Jul. 2011.

\bibitem{amc:07}
H.-S. Ahn, K. L. Moore, and Y. Chen, ``Stability analysis of discrete-time iterative learning control systems with interval
uncertainty,'' {\it Automatica}, vol. 43, no. 5, pp. 892--902, May 2007.

\bibitem{x:96}
L. Xie, ``Output feedback $H_{\infty}$ control of systems with parameter uncertainty,'' {\it International Journal of Control}, vol. 63, no. 4, pp. 741--750, Apr. 1996.

\end{thebibliography}

\end{document}